\newlength{\defbaselineskip}
\newcommand{\doublespacing}{\setlength{\baselineskip}{1.1\defbaselineskip}}
\newcommand{\Comment}[1]{}
\newcounter{tmpt}
\newtheorem{theorem}{Theorem}
\newtheorem*{theorem*}{Theorem}
\newtheorem{lemma}[theorem]{Lemma}
\newtheorem{definition}{Definition}
\newtheorem{proposition}[theorem]{Proposition}
\newtheorem{corollary}[theorem]{Corollary}
\newtheorem{fact}[theorem]{Fact}
\newtheorem*{remark*}{Remark}
\newtheorem*{notation*}{Notation}
\newtheorem*{claim*}{Claim}
\newtheorem*{proposition*}{Proposition}
\newtheorem*{corollary*}{Corollary}
\newtheorem*{A-LThm}{Amitsur-Levitzki Theorem}
\newtheorem*{main-open}{Conjecture II}
\newtheorem*{main-lower-bound}{Theorem \ref{thm:main_lower_bound}}
\renewcommand{\dot}[1]{{#1}^\star}
\def\squareforqed{\hbox{\rlap{$\sqcap$}$\sqcup$}}
\def\qed{\ifmmode\squareforqed\else{\unskip\nobreak\hfil
\penalty50\hskip1em\null\nobreak\hfil{\tt QED}
\parfillskip=0pt\finalhyphendemerits=0\endgraf}\fi}
\newcommand\F{\ensuremath{\mathbb F}}
\newcommand\N{\ensuremath{\mathbb N}}
\newcommand\Z{\ensuremath{\mathbb Z}}
\newcommand{\ignore}[1]{}
\newcommand\PP{{\mathbb P}}
\newcommand\PC{\ensuremath{\PP_c}}
\newcommand\PMd{\ensuremath{\PP_{{\rm Mat}_d}(\F)}}
\newcommand\PMtwo{\ensuremath{\PP_{{\rm Mat}_2}(\F)}}
\newcommand{\poly}{\hbox{poly}}
\newcommand{\cd}{\cdot}
\renewcommand{\l}{\ell}
\newcommand{\case}[1]{\ind\textbf{Case #1}:\,}
\newcommand {\ind} {\noindent}
\newcommand {\para}[1] {\paragraph{#1}}
\DeclareMathAlphabet{\mathitbf}{OML}{cmm}{b}{it}
\newcommand{\matd}{{\ensuremath{{\rm Mat}_d(\F)}}}
\newcommand{\mattwo}{{\ensuremath{{\rm Mat}_2(\F)}}}
\newcommand{\matone}{{\ensuremath{{\rm Mat}_1(\F)}}}
\newcommand{\matthree}{{\ensuremath{{\rm Mat}_3(\F)}}}
\newcommand{\freea}{\ensuremath{\F\langle X\rangle}}
\newcommand{\dbyd}{\ensuremath{d\times d}}
\newenvironment{examples}{\QuadSpace\par\noindent{\bf Examples}:}{\HalfSpace}
\newenvironment{comment}{\QuadSpace\par\noindent{\bf Comment}:}{\HalfSpace}
\newenvironment{proof}{\QuadSpace\par\noindent{\bf Proof}:}{\EndProof\HalfSpace}
\newenvironment{proofclaim}{\QuadSpace\par\noindent{\bf Proof of claim}:}
{\vrule width 1ex height 1ex depth 0pt $_{\textrm{\,Claim}}$ \HalfSpace}
\newcommand{\QuadSpace}{\vspace{0.25\baselineskip}}
\newcommand{\HalfSpace}{\vspace{0.5\baselineskip}}
\newcommand{\EndProof}{ \hfill \vrule width 1ex height 1ex depth 0pt }
\definecolor{bluetxt}{rgb}{0,0,.5}
\definecolor{myred}{rgb}{0.6,0.0,0.1}
\definecolor{greentxt}{rgb}{0,.5,0}
\definecolor{redtxt}{rgb}{0.1,0.1,0.65}
\definecolor{purpletxt}{rgb}{0.6,0.1,0.7}
\definecolor{black}{rgb}{.0,.0,.0}
\definecolor{verydarkblue}{rgb}{.0,.0,.4}
\definecolor{darkblue}{rgb}{.0,.0,.4}
\definecolor{lightgray}{rgb}{.7,.7,.7}
\newcommand{\iddotodo}[2][]
{\todo[size=\tiny, noline, caption={#2}, #1, linecolor=green!70!white,         backgroundcolor=blue!10!white,bordercolor=white]
{{#2}}}
\newcommand{\iddocomment}[2][]
{\todo[size=\tiny, caption={#2}, #1, linecolor=green!70!white,         backgroundcolor=blue!10!white,bordercolor=white]
{{#2}}}
\newcommand{\iddofix}[2][]
{\todo[inline, size=\footnotesize, caption={#2}, #1, linecolor=green!70!white,         backgroundcolor=red!30!white,bordercolor=white]
{{#2}}}
\newcommand{\iddofixl}[2][]
{\todo[size=\tiny, caption={#2}, #1, linecolor=green!70!white,         backgroundcolor=red!30!white,bordercolor=white]
{{#2}}}
\newcommand{\set}[1]{\left\{#1\right\}}
\newcommand{\nx}[1]{#1_1,\ldots,#1_{n}}
\renewcommand{\a}{a^{(i)}}
\renewcommand{\t}[1]{\overline{#1}}
\newcommand{\anbra}[1]{\ensuremath{\left[ #1\right]}}
\newcommand{\degr}[2]{\left(#1\right)^{(#2)}}
\newcommand{\s}{\sigma}
\newcommand{\zd}[2]{(#1)_Z^{#2}}
\newcommand{\Number}{(2d+1)\cd\l}
\renewcommand{\S}{\mathcal{S}}
\newcommand{\convert}[1]{\llbracket#1\rrbracket_d}
\newcommand{\ideal}[1]{\ensuremath{\left\langle #1\right\rangle}}
\newcommand\ii{{\cal I}}
\newcommand{\U}{\mathcal U_{ij}}
\newcommand{\V}{\mathcal V_{ij}}
\newcommand{\uzv}{\sum_{i=1}^n\sum_{j}\U z_i\V}
\author{Fu Li\thanks{Institute for Theoretical Computer Science, The Institute for Interdisciplinary Information Sciences (IIIS), Tsinghua University, Beijing.  Supported in part by the National Basic Research Program of China Grant 2011CBA00300, 2011CBA00301, the National Natural Science Foundation of China Grant 61033001, 61361136003 and NSFC grant 61373002.} \and Iddo Tzameret\thanks{Department of Computer Science, Royal Holloway, University of London. Email:  \texttt{iddo.tzameret@gmail.com} ~~Supported in part by the NSFC Grant 61373002.}}
\begin{document}

\title{Generating Matrix Identities and Proof Complexity}
\maketitle
\doublespacing
\thispagestyle{empty}
\begin{abstract}
Motivated by the fundamental lower bounds questions in proof complexity, we initiate the study of matrix identities as hard instances for \emph{strong} proof systems.
A \emph{matrix identity} of \dbyd\ matrices over a field \F, is a non-commutative polynomial \(f(x_1,\ldots,x_n)\) over \F\ such that $f$ vanishes on every \dbyd\ matrix assignment to its variables.

We focus on \textit{arithmetic proofs}, which are proofs of polynomial identities operating with arithmetic circuits and whose axioms are the polynomial-ring axioms (these proofs serve as an algebraic analogue of the Extended Frege propositional proof system; and over $GF(2)$ they   constitute formally a sub-system of Extended Frege \cite{HT12}). We introduce a decreasing in strength hierarchy of proof systems within arithmetic proofs, in which the $d$th level is a sound and complete proof system for proving \dbyd\ matrix identities (over a given field). For each level $d>2$ in the hierarchy, we establish a proof-size lower bound in terms of the number of variables in the matrix identity proved: we show the existence of a family of matrix identities $f_n$ with $n$ variables, such that any proof of $f_n=0$ requires $\Omega(n^{2d})$ number of lines.

The lower bound argument  uses fundamental results  from the theory of algebras with polynomial identities together with a generalization of the arguments in \cite{Hru11}. Specifically, we establish an unconditional lower bound on the minimal number of generators needed to generate a matrix identity, where the generators are substitution instances of elements from any given finite basis of the matrix identities; a result that might be of independent interest.

We then set out to study matrix identities as hard instances for (\textit{full}) arithmetic proofs. We present two conjectures, one about non-commutative arithmetic circuit complexity and the other about proof complexity, under which up to \textit{exponential-size} lower bounds on arithmetic proofs (in terms of the arithmetic circuit size of the identities proved) hold. Finally, we discuss the applicability of our approach to strong \textit{propositional} proof systems such as Extended Frege.

\end{abstract}


\iddotodo{Write about unity in associative algebras.}
\iddocomment{Explain the adjective strong proof system}

\newpage
\pagenumbering{arabic}
\section{Background}
Proving super-polynomial size lower bounds on strong propositional proof systems, like the Extended Frege system, is a major open problem in proof complexity, and in general is among a handful of fundamental hardness questions in computational complexity theory. An Extended Frege proof is simply a textbook logical proof system for establishing Boolean tautologies, in which one starts from basic tautological axioms written as Boolean formulas,  and derives, step by step, new tautological formulas from previous ones by using a finite set of logical sound derivation rules; including  the so-called \textit{extension axiom} enabling  one to denote a possibly big formula by a \textit{single }new variable (where the variable is used neither before in the proof nor in the last line of the proof). It is not hard to show (see \cite{Jer04}) that Extended Frege can equivalently be defined as a logical proof system operating with Boolean \emph{circuits} (and without the extension axiom\footnote{An additional simple technical axiom is needed to formally define this proof system (\cite{Jer04}).}).

Lower bounds on Extended Frege proofs can be viewed as lower bounds on certain nondeterministic algorithms for establishing the unsatisfiability of Boolean formulas (and thus as a progress towards separating \NP\ from \coNP). It is also usually considered (somewhat informally) as related to establishing (explicit) Boolean circuit size lower bounds. In fact,  it has also another highly significant consequence, that places such a lower bound as a  step towards separating \Ptime\ from \NP: showing any super-polynomial lower bound on the size of Extended Frege proofs implies that, at least with respect to ``polynomial-time reasoning" (namely, reasoning in the formal theory of arithmetic denoted $S^1_2$), it is not possible to prove that $\Ptime=\NP$; or in other words, it is consistent with $S^1_2$ that $\Ptime\neq$\NP\ (cf.~\cite{KP89}).

Accordingly, proving Extended Frege lower bounds is considered a very hard problem. In fact, even \emph{conditional} lower bounds on strong proof systems, including  Extended Frege, are not known and are considered very interesting;\footnotemark \ here, we mean  a condition that is  different from $\NP\neq\coNP$ (see \cite{Pud08};  the latter condition immediately  implies that any propositional proof system admits a family of tautologies with no polynomial-size proofs \cite{CR79}). The only size lower bound on Extended Frege proofs that is known to date is linear $\Omega(n)$ (where $n$ is the size of the tautological formula proved; see \cite{Kra95} for a proof).
Establishing \textit{super-linear }size lower bounds on Extended Frege proofs is thus a highly interesting open problem.

That said, although proving Extended Frege lower bounds is a fundamental open problem in complexity, it is quite unclear whether such lower bounds are indeed far from reach or beyond current techniques (in contrast to other fundamental hardness problems in complexity, such as strong explicit Boolean circuit lower bounds, for which formal so-called barriers are known).

\footnotetext{Informally, we call a proof system \emph{strong} if  there are no known (non-trivial) size lower bounds on proofs in the system and further such lower bounds are believed to be outside the realm of current techniques.}

Another feature of proof complexity is that, in contrast to circuit  complexity, even the \textit{existence of non-explicit }hard instances for strong propositional proof systems, including Extended Frege, are unknown. For instance, simple counting arguments cannot establish super-linear size lower bounds on Extended Frege proofs (in contrast to Shannon's counting argument which gives non-explicit lower bounds on circuit size, but does not in itself yield complexity class separations). Thus, the existence of non-explicit hard instances
in proof complexity is sufficient for the purpose of lower
bounding the size of strong proof systems.

Furthermore, for strong proof systems there are almost no hard candidates, namely, tautologies that are believed to require long proofs in these systems (see Bonet, Buss and Pitassi \cite{BBP95}); except, perhaps for random $k$-CNF formulas near the satisfiability threshold. But for the latter instances, even lower bounds on Frege proofs of constant-depth are unknown. It is worth noting also that Razborov \cite{Razb03} and especially Kraj\'{i}\v{c}ek (see e.g., \cite{Kra10-forcing}) had proposed some tautologies as hard candidates for strong proof systems.

Due to the lack of progress on establishing lower bounds on strong propositional proof systems, it is interesting, and potentially helpful,  to turn our eyes to an \textit{algebraic analogue} of strong propositional proof systems, and try first to prove nontrivial size lower bounds in such settings. Quite recently, such algebraic analogues of Extended Frege (and Frege, which is Extended Frege without the extension axiom) were investigated by Hrube\v s and the second author \cite{HT08,HT12}. These proof systems denoted \PC(\F), called simply \emph{arithmetic proofs}, operate with algebraic equations of the form $F=G$, where $F$ and $G$ are algebraic circuits  over a given field \F. An arithmetic  proof  of a polynomial identity is a sequence of identities between algebraic circuits derived by means of simple syntactic manipulation representing the polynomial-ring axioms (e.g., associativity, distributivity, unit element, field identities, etc.; see Definition \ref{def:arithmetic_proofs}). Although arithmetic proof systems are not propositional proof systems, namely they do not prove propositional tautologies, they can be regarded nevertheless as \textit{fragments} of the propositional Extended Frege proof system when the field considered is $GF(2)$. That is, every arithmetic proof over $GF(2)$ of a polynomial identity  (considered as a propositional  tautology) can formally be viewed also as an Extended Frege proof.\footnote{In fact, it is probably true (but was not formally verified) that arithmetic proofs are fragments of propositional proofs also over any other finite field, as well as over the ring of integers (when restricted to up to exponentially big integers). That is, it is probably true that every polynomial identity proved with an arithmetic proof over the given field or ring, can be proved with at most a polynomial increase in
size in Extended Frege when we fix a certain natural translation between polynomial identities over the field or ring and propositional tautologies. The reason for this is that one could plausibly polynomially simulate arithmetic proofs over such fields or rings with  propositional proofs in which numbers are encoded as bit-strings.}

Apart from the hope that arithmetic proofs would shed light on propositional proof systems, the study of arithmetic proofs is motivated by the Polynomial Identity Testing (PIT) problem, namely the problem of deciding if a given algebraic circuit computes the zero polynomial. As a decision problem, polynomial identity testing can be solved by an efficient randomized algorithm \cite{Sch80,Zip79}, but no efficient deterministic algorithm is known. In fact, it is not even known whether there is a polynomial time non-deterministic algorithm or, equivalently, whether  PIT is in \NP. An arithmetic  proof system can thus be interpreted as a specific non-deterministic algorithm for PIT: in order to verify that an arithmetic circuit $C$ computes the zero polynomial, it is sufficient to guess an arithmetic  proof of $C=0$.  Hence, if every true equality has a polynomial-size proof then PIT is in \NP. Conversely, the arithmetic proof system captures the common syntactic procedures used to establish equality between algebraic expressions. Thus, showing the existence of identities that require super-polynomial arithmetic proofs would imply that those syntactic procedures are not enough to solve the PIT problem efficiently.\footnote{It is worth emphasizing again that arithmetic proofs are different than algebraic \textit{propositional} proof systems like the Polynomial Calculus \cite{CEI96} and related systems. The latter prove propositional tautologies (a \textit{\textbf{coNP}} language) while the former prove formal polynomial identities written as equations between algebraic circuits (a \textit{\textbf{coRP}} language).}

The emphasis in \cite{HT08,HT12} was mainly on demonstrating non-trivial  \textit{upper bounds} for arithmetic proofs (as well as lower bounds in very restricted settings). Since arithmetic proofs (at least over $GF(2)$), can also be considered  as propositional proofs,  arithmetic proofs were found very useful in establishing short propositional proofs for the determinant identities and other statements from linear algebra \cite{HT12}. As for \textit{lower bounds }on arithmetic proofs (operating with arithmetic circuits),  the same basic linear size lower bound known for Extended Frege \cite{Kra95} can be shown to hold for $\PC$. But any super-linear size lower bound, explicit or not, on \PC(\F) proof size (for any field \F) is open. In  \cite{HT08} it was argued that proving lower bounds even on very restricted fragments of arithmetic proofs is a highly nontrivial open problem.

The state of affairs we have described up to now shows how little is known about strong propositional (and arithmetic) proof systems, and why it is highly interesting to introduce and develop novel approaches for lower bounding proofs such as arithmetic proofs, even if these approaches yield only conditional and possibly non-explicit lower bounds; and further, to propose new kinds of hard candidates for strong proof systems.   

\section{Overview of our results}\label{sec:ovrv_of_our_results}
In this work we initiate the study of matrix identities as hard instances for strong proof systems in various settings and under different assumptions. 
The term \textit{strong} here stands for proof systems that operate with (Boolean or arithmetic) \textit{circuits}, for which we do not know any non trivial lower bound (see Sec.~\ref{sec:app_arithmetic_circuit} for the definitions of arithmetic circuits and non-commutative arithmetic circuits).

The  ultimate goal of our suggested approach is proving Extended Frege lower bounds; however, in this work we focus for most part on the seemingly (and relatively) easier task of proving arithmetic proofs \PC(\F) lower bounds, namely lower bounds on arithmetic proofs establishing polynomial identities between arithmetic circuits over a field \F.


 We introduce a new decreasing hierarchy of proof systems establishing matrix identities of a given dimension,  within arithmetic proofs (and whose first level coincides with arithmetic proofs). We obtain unconditional (polynomial) lower bounds on proof systems for matrix identities in terms of the number of variables in the identities proved. We then present two natural conjectures from arithmetic circuit complexity and proof complexity, respectively, based on which one can obtain up to exponential-size lower bounds on arithmetic proofs \PC(\F) in terms of the size of the identities proved.  \smallskip

We start by explaining what  matrix identities are, as well as providing some necessary  background from algebra.

\subsection{Matrix identities}\label{sec:ovrv:mat_identities}

For a field $\F$ let $A$ be a non-commutative (associative) \F-algebra; e.g., the algebra \matd\ of $d\times d$ matrices over  $\F$.
(Formally, $A$ is an $\F$-algebra, if $A$ is a vector space over \F\ together with a distributive multiplication operation; where multiplication in $A$ is associative (but it need not be commutative) and there exists a multiplicative unity in $A$.)

We shall always assume, unless explicitly stated otherwise, that the field \F\ has characteristic 0.

A \textit{\textbf{non-commutative polynomial}} over the field \F\ and with the variables $X:=\{x_1,x_2,\ldots\}$ is a formal sum of monomials where the product of variables is non-commuting.  Since most polynomials in this work are non-commutative \textbf{when we talk about \textit{polynomials }we shall mean \textit{non-commutative polynomials, }}unless otherwise stated.
The set of (non-commutative) polynomials with variables $X$ and over the field \F\ is denoted \freea.

We say that $f$ is a \textit{\textbf{matrix identity} of \matd} simply whenever $f$ is a  non-commutative polynomial (with coefficients from \F) that is equal to zero under any assignment of matrices from \matd\ to its variables. In other words, the  polynomial \(f(x_1,\ldots,x_n)\) over \F\ is \textit{an identity of  the algebra $A$} (and specifically, the matrix algebra \matd), if for all $\overline c\in A^n$, $f(\overline c)=0$.

%

\subsection{Stratification}\label{sec:ovrv:stratification}

A matrix identity is a non-commutative polynomial vanishing over all assignments of matrices. If we consider the ``matrix'' algebra of $1 \times 1$ matrices \matone, its set of identities consists of all the non-commutative polynomials that vanish over field elements. Since, by definition, the field is commutative, the identities of \matone\ are all non-commutative polynomials such that when the product is considered as \emph{commutative} we obtain the zero polynomial; in other words, \textit{we can consider the identities of \matone\ as the set of (standard, i.e., commutative) polynomial identities}. Further, in our application we shall write all polynomials as non-commutative arithmetic circuits, and since a non-commutative arithmetic circuit is equivalent to a (commutative) arithmetic circuit (except that product gates have order on their children) \textit{we can consider the set of identities of \matone\ written as non-commutative circuits, as the set of (commutative) polynomial identities written as (commutative) arithmetic circuits.}

Using matrix identities of increasing dimensions $d$ we obtain a stratification of the language of (commutative) polynomial identities. Namely, we obtain the following strictly decreasing (with respect to containment) chain of identities:
\begin{align}\notag
\hbox{(commutative) polynomial identities} & \supsetneq \mattwo\hbox{-identities} \\ \notag
        &\supsetneq \hbox{\matthree-identities} \\ \label{eq:chain}
        &\supsetneq \ldots \\ \notag
        &\supsetneq \matd \\ \notag
        &\supsetneq {\rm Mat}_{d+1}(\F) \supsetneq\ldots\notag
\end{align}

The fact that the identities of  ${\rm Mat}_{d+1}(\F)$ are also identities of \matd\ is easy to show. The fact that the chain above is  \textit{strictly} decreasing  can be proved  either by a elementary methods \cite{Jer14-personal_communication} or as a corollary of \cite{AL50}.

\subsection{Corresponding proof systems and the main lower bound }\label{sec:ovrv:proof_system_for_matrix_identities}

We now introduce a novel  hierarchy of proof systems within arithmetic proofs \PC(\F).
For this we need the concept of a \textit{basis} of a set of identities of a given \F-algebra $A$ (e.g., the matrix algebra \matd) .

\para{Basis.}We say that a set of non-commutative polynomials $\mathcal B$ forms a \emph{\textbf{basis}} for the identities of $A$, in the following sense: for every identity $f$ of $A$ there exist non-commutative polynomials $g_1,...,g_k$, for some $k$, that are \textit{substitution instances} of polynomials from $\mathcal B$, such that  $f$ is in the two-sided ideal $\langle g_1,...,g_k \rangle$ (a \textit{\textbf{substitution instance} }of a polynomial $g(x_1,\ldots, x_n)\in\freea $ is a polynomial $g(h_1,\ldots,h_n)$, for some $h_i\in\freea$, $i\in[n]$).
\medskip

Recall that \textbf{\textit{arithmetic proofs}} \PC(\F) (see Definition \ref{def:arithmetic_proofs}) are proofs that start from basic axioms like associativity, commutativity of addition and product, distributivity of product over addition, unit element axioms, etc., in which  we derive new equations between arithmetic circuits $F=G$ using rules for adding and multiplying two previous identities. Arithmetic proofs are sound and complete proof systems for the set of (commutative) polynomial identities, written as equations between arithmetic circuits.

Notice that if one takes out the Commutativity Axiom $f\cd g = g\cd f$ from arithmetic proofs, we get a proof system for establishing non-commutative polynomial identities written as non-commutative arithmetic circuits (we can assume that product gates appearing in arithmetic proofs have order on their children).

\para{The proof systems \PMd.}
For any field \F\ (of characteristic 0), any $d\ge 1,$ and any basis \(\mathcal B\) of the identities of \matd, we define the following proof system \PMd, which is sound and complete for the identities of \matd\ (written as equations of non-commutative circuits): consider the proof systems  \PC(\F)\ (Definition \ref{def:arithmetic_proofs}) and \textit{replace }the commutativity axiom \(h\cd g=g\cd h\) by  a finite basis \(\mathcal B\) of the identities of  \matd\  (namely, add a new axiom $H=0$ for each polynomial $h$ in the basis, where $H$ is a non-commutative algebraic circuit computing $h$).\footnote{Formally, we should fix a specific finite basis \(\mathcal B\) for the sake of definiteness of \PMd. However, different choices of bases can only increase the number of lines in a \PMd-proof by a constant factor.} Additionally, add the axioms of  distributivity of product over addition from \textit{both} left and right (this is needed because we do not have anymore the commutativity axiom in our system to simulate both distributivity axioms).

Note that $ \PC(\F)$ can be considered as ${\PP_{{\rm Mat}_1}(\F)}$, since the commutator $[g,h]$ is an axiom of  \PC(\F) and the commutator is a basis of the identities of \matone.

\medskip

%
\begin{center}
\scalebox{.35} 
{
\begin{pspicture}(0,-6.62)(19.47625,6.62)
\definecolor{color1043b}{rgb}{0.9176470588235294,0.9176470588235294,0.9176470588235294}
\definecolor{color1045b}{rgb}{0.8431372549019608,0.8431372549019608,0.8431372549019608}
\definecolor{color1047b}{rgb}{0.8,0.8,0.8}
\definecolor{color1049b}{rgb}{0.7568627450980392,0.7568627450980392,0.7568627450980392}
\definecolor{color1066b}{rgb}{0.6666666666666666,0.6666666666666666,0.6666666666666666}
\definecolor{color1066}{rgb}{0.4,0.4,0.4}
\definecolor{color1068b}{rgb}{0.6274509803921569,0.6274509803921569,0.6274509803921569}
\pscircle[linewidth=0.04,dimen=outer,fillstyle=solid,fillcolor=color1043b](12.85625,0.0){6.62}
\pscircle[linewidth=0.04,dimen=outer,fillstyle=solid,fillcolor=color1045b](12.92625,0.01){4.83}
\pscircle[linewidth=0.04,dimen=outer,fillstyle=solid,fillcolor=color1047b](12.86625,-0.01){3.01}
\pscircle[linewidth=0.04,dimen=outer,fillstyle=solid,fillcolor=color1049b](12.93625,-0.04){1.82}
\usefont{T1}{ptm}{m}{n}
\rput(12.912656,5.73){(Commutative) Polynomial Identities}
\usefont{T1}{ptm}{m}{n}
\rput(12.757031,3.63){Mat$_{2}(\mathbb{F})$-identities}
\usefont{T1}{ptm}{m}{n}
\rput(13.001094,5.23){over $\mathbb{F}$}
\usefont{T1}{ptm}{m}{n}
\rput(12.917031,2.07){Mat$_{3}(\mathbb{F})$-identities}
\usefont{T1}{ptm}{m}{n}
\rput(12.897031,0.99){Mat$_{4}(\mathbb{F})$-identities}
\pscircle[linewidth=0.03,linecolor=color1066,dimen=outer,fillstyle=solid,fillcolor=color1066b](12.94625,-0.05){0.81}
\pscircle[linewidth=0.025999999,linecolor=color1066,dimen=outer,fillstyle=solid,fillcolor=color1068b](12.96625,-0.03){0.53}
\psdots[dotsize=0.04](13.01625,0.14)
\psdots[dotsize=0.04](13.03625,-0.06)
\psdots[dotsize=0.04](13.01625,-0.28)
\psline[linewidth=0.036cm,linecolor=color1066,arrowsize=0.05291667cm 2.0,arrowlength=1.4,arrowinset=0.4]{->}(5.71625,5.14)(9.31625,5.16)
\usefont{T1}{ptm}{m}{n}
\rput(2.7123437,5.18){\Large Arithmetic proofs $\PC(\F)$}
\psline[linewidth=0.036cm,linecolor=color1066,arrowsize=0.05291667cm 2.0,arrowlength=1.4,arrowinset=0.4]{->}(5.69625,1.94)(9.23625,1.96)
\usefont{T1}{ptm}{m}{n}
\rput(4.3653126,2.0){\LARGE $\PMtwo$}
\psline[linewidth=0.036cm,linecolor=color1066,arrowsize=0.05291667cm 2.0,arrowlength=1.4,arrowinset=0.4]{->}(5.77625,0.76)(10.63625,0.8)
\usefont{T1}{ptm}{m}{n}
\rput(4.4753127,0.8){\LARGE ${\PP_{{\rm Mat}_3}(\F)}$}
\usefont{T1}{ptm}{m}{n}
\rput(4.4953127,-0.16){\LARGE ${\PP_{{\rm Mat}_4}(\F)}$}
\psline[linewidth=0.036cm,linecolor=color1066,arrowsize=0.05291667cm 2.0,arrowlength=1.4,arrowinset=0.4]{->}(5.75625,-0.26)(11.41625,-0.16)
\end{pspicture}
}


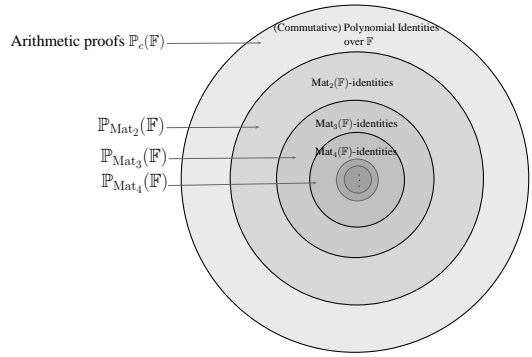
\captionof{figure}{\footnotesize{Illustration of the stratification of the language of polynomial identities and the corresponding proof systems for each language.}}
\end{center}

Our main result is an unconditional lower bound on the size (in fact the number of lines\footnote{A \textit{proof-line} is any equation $F=G$   between arithmetic circuits appearing in the proof.}) of $\PP_{{\rm Mat}_d}(\F) $ proofs, for any $d$, \textit{in terms of the number of variables $n$} in the matrix identity proved:

\begin{theorem}[Main lower bound]
\label{thm:main_lb_on_matrix_proofs}
Let \F\ be any field of characteristic 0.
For any natural number $d>2$ and  every finite basis \(\mathcal B\) of the  identities of \matd, there exists an identity \(f\) over \matd\ of degree $2d+1$ with $n$ variables, such that any \PMd-proof of $f$ requires $\Omega(n^{2d})$ lines.
\end{theorem}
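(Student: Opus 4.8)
The plan is to prove the lower bound on $\PMd$-proofs by lifting a lower bound on the purely \emph{algebraic} problem: how many generators (substitution instances of elements from a fixed finite basis $\mathcal B$) are needed to generate a given matrix identity $f$ inside the two-sided ideal of identities of $\matd$? The overall strategy follows the Hrube\v{s} semantic/degree-counting approach: (i) choose the hard instance $f_n$ carefully, (ii) show any $\PMd$-proof of $f_n=0$ can be converted into an ideal-membership certificate, and (iii) prove an unconditional lower bound on the number of generators in any such certificate, which then bounds the number of proof lines from below.

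First I would fix the hard instance. The natural candidate is the \emph{standard identity} $\mathrm{St}_{2d}(x_1,\dots,x_{2d})=\sum_{\sigma\in S_{2d}}\mathrm{sgn}(\sigma)\,x_{\sigma(1)}\cdots x_{\sigma(2d)}$, or rather a slight variant of degree $2d+1$ obtained by multiplying by an extra variable / using a Capelli- or Regev-type polynomial, so that it has $n$ variables and degree $2d+1$ and is genuinely an identity of $\matd$ (by Amitsur--Levitzki, $\mathrm{St}_{2d}$ vanishes on $\matd$; one takes $n$ copies or a multilinear blow-up to get $n$ variables). The key structural feature to exploit is that this identity is \emph{multilinear} of high degree and sits very low in the chain \er{:chain}: it is an identity of $\matd$ but manifestly not of $\mathrm{Mat}_{d+1}(\F)$, so it cannot be derived ``for free'' and every generator used must carry real degree-$2d$ information.

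Next, the reduction from proofs to generators. Given a $\PMd$-proof of $f_n=0$ with $\ell$ lines, I would argue — as in \cite{Hru11} — that by tracking which lines actually invoke a basis axiom $H=0$ (versus pure ring-axiom manipulations, which only produce things in the ``ring-theoretic'' ideal, i.e.\ $0$ as a commutative-shadow once we pass to the right quotient), the final identity $f_n$ is expressible as a combination $\sum_{i} a_i\, g_i\, b_i$ with $a_i,b_i\in\freea$, where each $g_i$ is a substitution instance of a basis polynomial and the number of distinct $g_i$'s is at most (a constant times) the number of proof lines $\ell$. This is essentially a normal-form / congruence-closure argument on the proof DAG: each application of an axiom $H=0$ contributes one generator, the ring axioms are handled by the completeness of $\PC(\F)$ for genuine polynomial identities, and the two-sided distributivity axioms we added ensure the left/right multipliers compose correctly. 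Then Theorem~\ref{thm:main_lb_on_matrix_proofs} reduces to: \emph{any representation of $f_n$ as an element of $\langle g_1,\dots,g_k\rangle$ with $g_i$ substitution instances of $\mathcal B$ requires $k=\Omega(n^{2d})$.}

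The main obstacle — and the technical heart — is this last algebraic lower bound on the number of generators. Here I would use PI-theory: the $T$-ideal of identities of $\matd$ is governed (in characteristic $0$) by its multilinear components, whose dimensions are the \emph{codimensions} $c_n(\matd)$, known by Regev's theorem to grow like $d^{2n}$ up to polynomial factors, and more relevantly one has good control on the multilinear identity space in each degree via the $S_n$-representation theory (the relevant Young diagrams live in a $d\times d$-ish hook). The idea is that a single substitution instance $g_i=g(h_1,\dots,h_m)$, together with its left/right multipliers, can only span a bounded-dimensional (depending on $d$ and the fixed basis, not on $n$) subspace of the multilinear-degree-$(2d+1)$-in-the-$n$-relevant-variables part of the $T$-ideal, whereas $f_n$ — being built to be ``spread out'' over all $\binom{n}{2d}$-or-so choices of which variables play the $2d$ active roles — lies in a subspace of dimension $\Omega(n^{2d})$ that must be covered; a counting/dimension argument then forces $k=\Omega(n^{2d})$. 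Making ``bounded-dimensional contribution per generator'' precise is the delicate step: one must bound how a substitution instance of a \emph{fixed} polynomial (of fixed degree) can, after multiplying by arbitrary $a_i,b_i$, reach into the high-degree multilinear part — this is where the degree bound $2d+1$ on $f_n$ and the explicit structure of $\mathcal B$ (every basis element of the $\matd$-identities has degree $\geq 2d$ by Amitsur--Levitzki-type minimality) are used to show the multipliers cannot create new multilinear monomials beyond a controlled set. I expect to need a careful choice of $f_n$ (e.g.\ a sum of standard polynomials on disjoint or overlapping variable blocks) precisely so that this covering number is genuinely $\Omega(n^{2d})$ and not smaller.
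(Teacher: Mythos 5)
Your high-level plan matches the paper's: convert a $\PMd$-proof into an ideal-membership certificate with one generator per basis-axiom invocation, then lower-bound the number of generators $Q_{\mathcal B}(f)$. But the heart of your proposal --- the dimension/codimension argument for the generator lower bound --- does not work as stated, and you would not be able to make it work without switching to a genuinely different argument. The problem is that the target $f_n$ is a \emph{single} polynomial, i.e.\ a single vector in the multilinear degree-$(2d+1)$ component. To express one vector in a linear span you never need more generators than the (small) rank of the span of whatever you use; a count of the form ``each substitution instance contributes $O(1)$ dimensions, $f_n$ sits in a space of dimension $\Omega(n^{2d})$, hence $k=\Omega(n^{2d})$'' is simply invalid, because you do not need to cover that whole space --- just hit one point in it. Even the best linear-algebraic refinement (noting that the same $\ell$ substitution instances must generate all $n$ of the $P_i$ in $\dot P=\sum z_i P_i$ simultaneously, giving a rank-$\ell$ factorization of an $n\times\binom{n}{2d}$ matrix) only yields $\ell\geq n$, not $n^{2d}$. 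The actual obstruction to using few generators is \emph{non-linear}: the admissible generators are images of a polynomial map of degree $2d+1$ in the substitution coefficients, and the paper exploits this by bounding the number of $\{0,1\}$-points in the image of a polynomial map (Lemma~5 of \cite{HY11}: $|\mu(\F^n)\cap\{0,1\}^m|\leq(2d)^n$). The lower bound then comes from comparing $(2(2d+1))^{(2d+1)n\ell}$ against the $2^{n\binom{n}{2d}}$ possible s-polynomial coefficient vectors, which is why the result is inherently \emph{non-explicit}. Your plan to use an explicit variant of $S_{2d}$ cannot succeed with these tools; the paper's own tensor-rank section shows an explicit hard instance would give explicit high-rank tensors, which is open.

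You also underdevelop the reduction from an arbitrary finite basis $\mathcal B$ to the single generator $S_{2d}$, and this is where the hypothesis $d>2$ is actually used. It is not enough to observe that $\matd$ has no identities of degree $<2d$. The paper first replaces $\mathcal B$ by a finite basis of \emph{multi-homogeneous commutator polynomials} (via \cite{Dre99}, Proposition 4.3.3, plus Kemer's finiteness), then proves two non-obvious facts: (i) for the degree-$(2d+1)$ hard instance, basis elements of degree $>2d+1$ can contribute nothing to a minimal generating set (Lemma~\ref{lem:2d+2-cant-generate}, using that a multi-homogeneous commutator polynomial vanishes if any variable is set to a constant); and (ii) for $d>2$, every multilinear identity of $\matd$ of degree at most $2d+1$ is a consequence of $S_{2d}$ (Leron's theorem). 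Step (ii) is false for $d=2$ (the identity $[[x_1,x_2][x_3,x_4]+[x_3,x_4][x_1,x_2],x_5]$ is not an $S_4$-consequence), which is exactly why the theorem excludes $d=2$. Your sketch doesn't surface either of these, and without them the passage from $Q_{S_{2d}}$ to $Q_{\mathcal B}$ is a genuine gap.
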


The proof of the main lower bound---which is the main technical contribution of our work---is explained in the following subsection, and is based on a complexity measure defined on matrix identities and their generation in a (two-sided) ideal. The complexity measure is interesting by itself, and can be applied to identities of any algebra with polynomial identities (PI-algebras; see \cite{Row80,Dre99} for the theory of PI-algebras), and not only matrix identities.

\para{Comments.}

(i) When $d=2$, our proof, showing the lower bound for
\textit{every} basis \(\mathcal B\) of the identities of \mattwo, does \textit{not} hold (see Sec.~\ref{sec:conc-for-any-basis-of-matd} for an explanation).

(ii) The hard instance in the main lower bound theorem is \textit{non-explicit}.
Thus, we do not know if there are small non-commutative circuits computing the hard instances. This is the reason the lower bound holds only with respect to the number of variables $n$ in the hard-instances and not with respect to its circuit size---the latter is the more desired result in proof complexity.
Section \ref{sec:intro_hard_identities} sets out an approach to achieve this latter result.

(iii) The proof-systems ${\PP_{{\rm Mat}_d}(\F)}$
are defined using a finite basis of the identities of \matd. A very interesting feature of our lower bound argument is that it is in fact an open problem to find explicit finite
bases for the identities of \matd\ (for $d>2$; see the next sub-Section \ref{sec:ovrv:proof_of_main_lower_bound} on this).

(iv) We do not know if the hierarchy of proof systems \PMd\ for increasing $d$'s is a \emph{strictly} decreasing hierarchy (since we do not know if $\PP_{{\rm Mat}_{d-1}}(\F)$ has any speed-up over \PMd\  for identities of \matd).
\medskip

In the following subsection we give a detailed overview of the lower bound argument.

\subsubsection{Proving the main lower bound: generative complexity lower bounds}\label{sec:ovrv:proof_of_main_lower_bound}
Here we explain in details the complexity measure we define and how we obtain the lower bound on this measure. It is simple to show that our complexity measure is a lower bound on the minimal number of lines in a corresponding \PMd-proof (for the case $d=1$ this was observed in \cite{Hru11}).
\para{The complexity measure.}

Given an \F-algebra $A$ (e.g., \matd) and an identity $f$ of $A$, define
$$Q_{\mathcal B}(f)$$  as the minimal number $k$ such that there exist $g_1,\ldots,g_k\in \freea$ for which  $f\in \langle g_1,...,g_k \rangle$, and every $g_i$ is a substitution instance of some polynomial  from $\mathcal B$. (Note that each substitution instance, even of the same polynomial from $\mathcal B $, adds to $Q_{\mathcal B}(f)$.) We sometimes call $Q_{\mathcal B}(f)$ \textit{\textbf{the generative complexity of}} $f$ (with respect to $\mathcal B$).
\bigskip

\ind \textbf{Example}: Let \F\ be an infinite field and consider the field \F\ itself as an \textit{\F-algebra}, denoted $\mathscr A$.
Then the identities of $\mathscr A$ are all the polynomials from \freea\ that evaluate to $0$ under every  assignment from \F\ to the variables \(X\). Namely, these are the (non-commutative) polynomials that are identically zero polynomials \emph{when considered as commutative polynomials}. For instance, $x_1x_2-x_2 x_1$ is a non-zero polynomial from \freea\ which is an identity over $\mathscr A$.

It is not hard to show that the \textit{basis }of the algebra $\mathscr A$ is the \textit{commutator }$x_1 x_2 -x_2 x_1$, denoted $[x_1,x_2]$. In other words, every identity of $\mathscr A$ is generated (in the two-sided ideal) by substitution instances of the commutator. Considering $Q_{\{[x_1,x_2]\}}$, we can now ask what is $Q_{\{[x_1,x_2]\}}(x_1x_3-x_3x_1+x_2x_3-x_3x_2)$? The answer is $1$ since we need only one substitution instance of the commutator: $(x_1+x_2)x_3-x_3(x_1+x_2)=x_1x_3-x_3x_1+
x_2x_3-x_3x_2$.
\bigskip

 Hrube\v s \cite{Hru11} showed the following lower bound (using a slightly different terminology):

\setcounter{tmpt}{\thetheorem}
\begin{theorem}[Hrube\v s \cite{Hru11}]\label{thm:Hrubes} For any field and every $n$, there exists an identity $f\in\freea$ of $\mathscr A$  with $n$ variables, such that $Q_{\{[x_1,x_2]\}}(f) = \Omega(n^2)$.
\end{theorem}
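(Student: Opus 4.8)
\medskip
\noindent\emph{Proof proposal.} The plan is to produce a homogeneous degree-$3$ identity and to lower-bound its generative complexity by passing to a quotient of the space of such identities on which the contribution of each single commutator generator becomes cheap.

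Fix $n$. Let $W$ be the $n^3$-dimensional space of homogeneous degree-$3$ polynomials in $x_1,\dots,x_n$ inside \freea, let $I\subseteq W$ be the subspace of those vanishing under the commutative collapse (the homogeneous degree-$3$ identities of $\mathscr A$, equivalently the degree-$3$ part of the two-sided ideal generated by all commutators), and let $V\subseteq I$ be the span of all $[\ell,Q]$ with $\ell$ a linear form and $Q$ a quadratic form. Since $[\ell,Q]$ is bilinear in $(\ell,Q)$, we have $V=\mathrm{span}\{x_ix_jx_k-x_jx_kx_i\}$, so $W/V$ merely identifies cyclically rotated monomials; a Burnside count gives $\dim(W/V)=(n^3+2n)/3$, and intersecting with the identity constraints (one linear relation per distinct unordered triple of indices) yields $\dim(I/V)=\binom{n}{3}=\Theta(n^3)$. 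Let $\pi\colon I\to U:=I/V$ be the quotient map.

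Now the grading computation at the heart of the argument. For a substitution instance $g=[p,q]$ of the commutator, the homogeneous parts of $g$ in degrees $0$ and $1$ vanish, $g_2=[p^{(1)},q^{(1)}]$ is the commutator of the \emph{linear} parts (so its coefficient matrix has rank $\le2$), and $g_3=[p^{(1)},q^{(2)}]+[p^{(2)},q^{(1)}]$; crucially $g_3\in V$, since each summand is $\pm$ a commutator of a linear form with a quadratic form. It follows that every homogeneous degree-$3$ element of $\langle g\rangle$ has the form $L\,g_2+g_2\,M+c\,g_3$ for linear forms $L,M$ and a scalar $c$. Hence if $f\in I$ with $Q_{\{[x_1,x_2]\}}(f)=k$, writing the membership as $f=\sum_j u_jg_{c(j)}v_j$, collecting the terms with $c(j)=i$ into $w_i\in\langle g_i\rangle$ and taking degree-$3$ parts gives $f=\sum_{i=1}^k\bigl(L_ig_2^{(i)}+g_2^{(i)}M_i+c_ig_3^{(i)}\bigr)$, so applying $\pi$ annihilates all the $g_3^{(i)}$:
\[
\pi(f)=\sum_{i=1}^{k}\pi\bigl(L_i\,g_2^{(i)}+g_2^{(i)}\,M_i\bigr).
\]
Because $g_2^{(i)}=[a_i,b_i]$ with $a_i,b_i\in\F^n$, each term on the right is a polynomial function of the $4n$ parameters $(a_i,b_i,L_i,M_i)$; thus $\pi(f)$ lies in the image $Z_k$ of a polynomial map $\F^{4nk}\to U$, and in particular $\dim Z_k\le 4nk$.

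Finally, pick $k^\ast:=\lfloor\binom{n}{3}/(8n)\rfloor=\Theta(n^2)$. Over an infinite field $\bigcup_{k\le k^\ast}Z_k$ is a constructible subset of $U$ of dimension at most $4nk^\ast<\binom{n}{3}=\dim U$, hence proper; over a finite field one counts instead, $\bigl|\bigcup_{k\le k^\ast}Z_k\bigr|\le(k^\ast+1)\,|\F|^{4nk^\ast}<|\F|^{\dim U}$ for $n$ large. Either way choose $\bar f\in U$ outside this set and lift it, using surjectivity of $\pi$, to $f\in I$ with $\pi(f)=\bar f$. If $Q_{\{[x_1,x_2]\}}(f)\le k^\ast$ then $\pi(f)\in Z_{Q(f)}\subseteq\bigcup_{k\le k^\ast}Z_k$, a contradiction; so $Q_{\{[x_1,x_2]\}}(f)=\Omega(n^2)$. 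The step I expect to be the real obstacle is spotting the quotient $U=I/V$: one needs to see at once that the only expensive part of a generator — the cubic term $g_3$, which drags in roughly $2n^2$ free parameters through $p^{(2)},q^{(2)}$ — already lies in the span $V$ of linear-by-quadratic commutators and can therefore be discarded, while $V$ is still small enough that $\binom{n}{3}=\Theta(n^3)$ dimensions of identities survive in $U$, against which each generator now costs only $O(n)$ parameters. The leftover work (the combinatorial count $\dim(I/V)=\binom{n}{3}$ and the finite-field version of the genericity step) is routine.
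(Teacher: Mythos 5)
Your proof is correct, and it takes a noticeably different route from the one taken in this paper (which generalizes the argument to $d>2$). The present paper proves this kind of lower bound via the ``s-polynomial'' machinery: one works with $\{0,1\}$-linear combinations of $S_{2d}(x_{j_1},\dots,x_{j_{2d}})$, shows (Lemma~\ref{lem:s-poly-linear-property}) that any such polynomial lying in $\langle S_{2d}(\overline{P_1}),\dots,S_{2d}(\overline{P_\ell})\rangle$ is already a linear combination $\sum c_j S_{2d}(\overline{P_j}^{(1)})$, and then invokes the polynomial-map/Boolean-vector bound of \cite{HY11} to count: a degree-$(2d+1)$ polynomial map from $\F^{(2d+1)n\ell}$ can hit at most $(4d+2)^{(2d+1)n\ell}$ points of $\{0,1\}^{n\binom{n}{2d}}$. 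One first gets a lower bound for a \emph{collection} of $n$ s-polynomials (Lemma~\ref{lem:exist_for_nP}) and then needs a separate combining step (Lemma~\ref{lem:combine_into_one}) to fold them into a single degree-$(2d+1)$ polynomial by adjoining $n$ fresh variables. Your argument replaces both the Boolean-point count and the combining step by a single quotient-space dimension count: you pass to $U=I/V$, observe that the only part of a generator $g=[p,q]$ that carries many parameters, namely $g_3=[p^{(1)},q^{(2)}]+[p^{(2)},q^{(1)}]$, dies in the quotient (this plays the role of Lemma~\ref{lem:s-poly-linear-property}), and then compare $\dim U=\binom n3$ against the $O(nk)$ parameters of the residual image $Z_k$, using Zariski dimension over an infinite field and plain counting over a finite one. (Your computation $\dim(I/V)=\binom n3$ checks out; the ``one linear relation per unordered triple'' phrasing is a little loose, but the arithmetic is right.) What your route buys, for the $d=1$ case, is a cleaner one-shot construction directly in $n$ variables, with no auxiliary $z$-variables and no appeal to the \cite{HY11} lemma; it is also in the spirit of Hrube\v s' original order-$3$ tensor-rank observation, which the paper cites but does not reproduce. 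What the paper's route buys is a uniform template that generalizes to $\matd$ for all $d$: the key fact there is that the expensive parts of a substitution instance of $S_{2d}$ vanish on the degree-$2d$ homogeneous component, which is an exact analogue of your observation that $g_0=g_1=0$ and $g_3\in V$; it is not immediately clear how to set up a quotient like your $V$ for the $S_{2d}$ generators with $d>1$ so that the analogue of $g_{2d+1}$ is killed, whereas the s-polynomial/Boolean-map argument goes through verbatim.
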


It is also not hard to show that $Q_{\{[x_1,x_2]\}}(f) = O(n^2)$ for any identity $f$.

\para{Lower bound on the complexity of generating matrix identities.}

%

An \textit{algebra with polynomial identities, }or in short a \textbf{\textit{PI-algebra}} (PI stands for  Polynomial Identities), is  simply an \F-algebra that has a non-trivial identity, that is, there is a nonzero $f\in\freea$ that is an identity of the algebra.

Let us treat (the \F-algebra) $\F$ as the matrix algebra Mat$_1(\F)$ of $1\times 1$ matrices with entries from \F.
We shall exploit results about the structure of the identities of matrix algebras and the general theory of PI-algebras to completely generalize Hrube\v s \cite{Hru11} lower bound above (excluding the case $d=2$), from a lower bound of \(\Omega(n^2)\) for generating identities of Mat$_1(\F)$ to a lower bound of $\Omega(n^{2d})$ for generating identities of \matd, for any $d>2$ and any field \F \ of characteristic 0:

\begin{main-lower-bound}[Lower bound on generative complexity]
Let \F\ be any field of characteristic 0.
For every natural number $d>2$ and  every finite basis \(\mathcal B\) of the  identities of  \matd, there exists an identity \(f\) over \matd\ of degree $2d+1$ with $n$ variables, such that $Q_{\mathcal B}(f)=\Omega(n^{2d})$.
\end{main-lower-bound}
%

Notice that similar to \cite{Hru11}, the lower bound in this theorem is \emph{non-explicit}.
We  do not know of an upper bound (in terms of $n$) that holds on  $Q_{\mathcal B}(f)$, for every identity  $f$ with \(n\) variables.

The main lower bound (Theorem \ref{thm:main_lb_on_matrix_proofs})  is  a corollary of the following theorem (proved by simple induction on the number of lines in a \PMd-proof):
\begin{theorem*}\label{prop:Pmatd-connection}
For every identity $F=0$, where $F$
is a non-commutative circuit that computes a non-commutative polynomial $f$ which is an identity of \matd, the number of lines of a \PMd-proof of $F=0$ is lower bounded up to a constant factor (depending on the choice of finite basis \(\mathcal B\)) by $Q_{\mathcal B}(f)$.
\end{theorem*}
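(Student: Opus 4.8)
The plan is a straightforward induction on the structure of a given $\PMd$-proof $\pi$ of $F=0$, thought of as a directed acyclic graph: its leaves are the axiom-instances used, each internal node is an equation derived from its (one or two) premises by a derivation rule, and the root is $F=0$. For a line (node) $L$ of $\pi$ of the form $F_L=G_L$ write $f_L:=F_L-G_L\in\freea$ for the non-commutative polynomial it asserts to vanish; by soundness every $f_L$ is an identity of \matd. Let $\mathcal B_0$ be the fixed finite basis with respect to which $\PMd$ is defined. Every leaf $B$ of $\pi$ is an instance of either a polynomial-ring/field axiom scheme, in which case $F_B$ and $G_B$ compute the \emph{same} non-commutative polynomial and $f_B=0$, or a basis axiom $H=0$ with $h\in\mathcal B_0$, in which case $g_B:=f_B$ is a substitution instance of $h$ (the basis axioms must be treated as schemes, since otherwise the system is incomplete for want of a substitution rule). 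Call the latter the \emph{basis-axiom leaves}, and for a line $L$ let $A(L)$ be the set of basis-axiom leaves from which $L$ is derived, directly or indirectly (with $A(L)=\{L\}$ when $L$ itself is such a leaf). The invariant I will maintain, by induction on $L$ along $\pi$, is
\[
f_L\ \in\ \big\langle\, g_B \,:\, B\in A(L)\,\big\rangle ,
\]
the two-sided ideal of $\freea$ generated by those $g_B$'s.

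The base case is the two kinds of leaves: a ring/field-axiom leaf has $f_L=0\in\langle\varnothing\rangle$ and $A(L)=\varnothing$, and a basis-axiom leaf $L$ has $f_L=g_L\in\langle g_L\rangle$ and $A(L)=\{L\}$. For the inductive step I run through the derivation rules. Reflexivity gives $f_L=0$; symmetry replaces $f_L$ by $-f_L$, which stays in the same ideal; and in both cases $A(L)$ equals that of the premise (or is empty). Transitivity ($F=G,\ G=H\vdash F=H$) and the addition rule ($F_1=G_1,\ F_2=G_2\vdash F_1+F_2=G_1+G_2$) both give $f_L=f_{L'}+f_{L''}$ for the premises $L',L''$, and since $A(L)=A(L')\cup A(L'')$ and a two-sided ideal is closed under addition, the invariant is preserved. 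The product rule $F_1=G_1,\ F_2=G_2\vdash F_1F_2=G_1G_2$ is the only case that uses that the ideals are \emph{two-sided}: writing $\phi_1,\gamma_2\in\freea$ for the polynomials computed by the circuits $F_1,G_2$, one has the polynomial identity $f_L=F_1F_2-G_1G_2=\phi_1\cdot f_{L_2}+f_{L_1}\cdot\gamma_2$; since $f_{L_2}$ lies in a two-sided ideal so does $\phi_1\cdot f_{L_2}$, and symmetrically $f_{L_1}\cdot\gamma_2$ lies in the ideal for $A(L_1)$, whence $f_L\in\langle g_B:B\in A(L_1)\rangle+\langle g_B:B\in A(L_2)\rangle=\langle g_B:B\in A(L)\rangle$. (Any further rules of $\PMd$, e.g.\ scalar multiplication by a constant, are special cases of these.)

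Applying the invariant to the root of $\pi$ gives $f\in\langle g_B:B\in A(\mathrm{root})\rangle$ with every $g_B$ a substitution instance of $\mathcal B_0$ and $|A(\mathrm{root})|$ at most the number of basis-axiom leaves, hence at most the number $m$ of lines of $\pi$; thus $Q_{\mathcal B_0}(f)\le m$. It remains to pass to an arbitrary finite basis $\mathcal B$ of the identities of \matd. Each $h\in\mathcal B_0$ is an identity of \matd, hence lies in the two-sided ideal generated by finitely many substitution instances of $\mathcal B$, and composing substitutions the same holds for every substitution instance of $h$; letting $c=c(\mathcal B_0,\mathcal B)$ be the finite maximum over $h\in\mathcal B_0$ of the number of instances needed, each of the $\le m$ generators $g_B$ can be replaced by at most $c$ substitution instances of $\mathcal B$, giving $Q_{\mathcal B}(f)\le c\cdot m$. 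Hence the number of lines of any $\PMd$-proof of $F=0$ is $\ge Q_{\mathcal B}(f)/c=\Omega(Q_{\mathcal B}(f))$.

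The two points that need genuine care are: (i) the observation that every instance of a polynomial-ring or field axiom scheme contributes the zero polynomial, so that these (abundant) axioms cost nothing toward $Q_{\mathcal B}$; and (ii) the bookkeeping in the transitivity, addition and product steps — one must attach the generators $g_B$ canonically to the basis-axiom leaves (they are only ever multiplied by circuits, never themselves altered) so that combining two premises contributes the \emph{union} of their ancestor sets rather than the sum of their sizes. A naive recursion $Q_{\mathcal B}(f_L)\le Q_{\mathcal B}(f_{L'})+Q_{\mathcal B}(f_{L''})$ would only yield a Fibonacci-type, i.e.\ exponential, bound when lines are reused; it is precisely the ancestor-set union, together with two-sidedness of the ideal in the product rule, that makes the linear bound go through. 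Everything else is a routine check that two-sided ideals are closed under the operations appearing in the rules and that substitution instances compose.
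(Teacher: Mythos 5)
Your proof is correct, and it fills in precisely what the paper only asserts without argument: the paper states this theorem in the overview with the parenthetical "(proved by simple induction on the number of lines in a \PMd-proof)" and never gives details. The induction you set up — assigning to each line $L$ of the form $F_L=G_L$ the difference polynomial $f_L$, observing that all ring/field/circuit axioms contribute $f_L=0$ while a basis-axiom instance contributes a substitution instance of a polynomial in $\mathcal B_0$, maintaining the invariant $f_L\in\langle g_B:B\in A(L)\rangle$ via closure of two-sided ideals under $+$ and under left/right multiplication (the product-rule decomposition $F_1F_2-G_1G_2=F_1(F_2-G_2)+(F_1-G_1)G_2$ is the crux), and finally passing from $\mathcal B_0$ to an arbitrary finite basis $\mathcal B$ via the paper's Proposition on basis change — is exactly the intended argument, and your emphasis on tracking the \emph{set} of basis-axiom ancestors rather than recursing additively is the right way to get a bound linear rather than exponential in the proof size. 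The only trivial inaccuracy is calling reflexivity a rule (it is the Identity axiom $F=F$), which does not affect anything.
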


\para{Overview of the proof of Theorem \ref{thm:main_lower_bound}.}

The study of algebras with polynomial identities is a fairly developed subject in algebra (see the monographs by Drensky \cite{Dre99} and Rowen \cite{Row80} on this topic). Within it, perhaps the most well studied topic is about  the identities of matrix algebras. In particular, the well-known theorem of Amitsur and Levitzky from 1950 \cite{AL50} is the following:

\begin{A-LThm}[\cite{AL50}]
Let $\mathcal S_d$ be the permutation group on $d$ elements and let $S_d(x_1,x_2,\ldots, x_{d})$ denote the \textbf{standard identity} of degree $d$ as follows:
$$S_{d}(x_1,x_2,\ldots, x_{d}):=
        \sum_{\sigma\in \S_{d}}sgn(\sigma)\prod_{i=1}^{d}x_{\sigma(i)}.$$
Then, for any  natural number $d$ and any field \F\ (in fact, any commutative ring) the standard identity  $S_{2d}(x_1,x_2,\ldots, x_{2d})$ of degree $2d$ is an identity of  \matd.
\end{A-LThm}

Theorem \ref{thm:main_lower_bound} is proved in several steps, but the main argument can be divided into two main parts, described as follows: \smallskip\

\para{Part 1:} Here we use the Amitsur-Levitzki Theorem: we show that when $\mathcal  E =\{S_{2d}(x_1,\ldots,x_{2d})\}$  there exists an $f\in\freea$ with $2n$ variables and degree $2d+1,$ such that $Q_\mathcal E(f)=\Omega(n^{2d})$.
To this end, we generalize the method in  \cite{Hru11} to ``higher dimensional commutativity axioms": using a counting argument we show the existence of $n$ special polynomials (we call \textit{s-polynomials}; see Definition \ref{def:s-poly}) $P_1,P_2,\ldots,P_n$ over  $n$ variables and each of degree $2n$ such that $Q_{S_{2d}}(\nx{P})=\Omega(n^{2d})$ (see Lemma \ref{lem:exist_for_nP}). Then, we combine the $n$ s-polynomials into a single polynomial $\dot{P}$ with degree $2d+1,$ by adding $n$ new variables, such that $Q_{S_{2d}}(\dot{P})=\Omega(Q_{S_{2d}}(\nx{P}))$.

While \cite{Hru11} uses the commutator $[x,y]$ to define the s-polynomials, we consider the higher order commutativity axiom $S_{2d}$ instead. It is possible to show that  $S_{2d}$ has sufficient properties for the lower bound as the commutator $[x,y]$ (see Lemmas \ref{fac:S_2d-equal-zero-when-constant}, \ref{lem:s-poly-linear-property}, \ref{lem:transfer-polynomials}).

\para{Part 2:} Note that $\mathcal E=\{S_{2d}(x_1,\ldots,x_{2d})\}$ is \textit{not }a basis of \matd, namely there are identities of \matd \ that are not generated by substitution instances of $S_{2d}$ (also notice that $Q_{\mathcal B}(f)$ can be  defined for any $\mathcal B\subseteq \freea$). The second part in the proof of Theorem \ref{thm:main_lower_bound} is dedicated to showing that when $d>2$, for \textit{all finite bases $\mathcal B$ of the identities of \matd} the following holds for the hard identity $f$ considered in the theorem: $Q_{\mathcal B}(f)<c\cdot Q_{\mathcal E}(f)$ for some constant $c$.


For this purpose, we find a special set $\mathcal B'\subseteq \freea$ which serves as an  ``intermediate'' set between $\mathcal B$ and $\mathcal E,$ such that $\mathcal B$ is generated by $\mathcal B'$, and all the polynomials in $\mathcal B'$  that contribute to the generation of the hard instance $f$ can be generated already by $\mathcal E$. We then show (Lemma \ref{lem:special-basis}) that for any basis $\mathcal B$, there is a specific set $\mathcal B'$ of polynomials of a special form, namely, \textit{multi-homogenous commutator polynomials }(Definition \ref{def:commutator_identity}), that can generate $\mathcal B$. Based on the properties of multi-homogenous commutator polynomials, we show that, for the hard instance $f$, only the generators of degree at most $2d+1$ in $\mathcal B'$  can contribute to the generation of $f$ (Lemma \ref{lem:2d+2-cant-generate}). We then prove that  when $d>2$, all the generators of degree at most $2d+1$ in $\mathcal B'$ can be generated by $\mathcal E$ (this is where we use the assumption that $d>2$ (see Lemma \ref{lem:generated-by-S_2d})). We thus get the conclusion $Q_{\mathcal B'}(f)<c \cdot Q_{\mathcal E}(f)$, when $d>2$.\bigskip


A very interesting feature of our proof (and theorem), is that it is in fact an \textit{open problem }to describe bases of the identities of \matd, for any $d>2$. For the case $d=2$ the basis is known by a result of Drensky \cite{Dre81} (see  Section \ref{sec:system-diff-algebras}). However,  a highly nontrivial result of Kemer \cite{Kem87}, shows that for any natural $d$ \emph{there exists} a finite basis for \matd. Our proof shows roughly that for the hard instances $f$ in Theorem \ref{thm:main_lower_bound} no generators different from the  $S_{2d}$ generators can contribute to the generation of $f$.
\smallskip

%
%
%
%


%
\section{Towards strong lower bounds on (full) arithmetic proofs}
\label{sec:intro_hard_identities}

Here we continue the study of matrix identities as hard proof complexity instances, and set out a program to establish  lower bounds on arithmetic proofs. We present two conjectures, interesting by themselves: one
about non-commutative arithmetic circuit complexity and the
other about proof-complexity, based on which up to exponential-size lower bounds on arithmetic proofs (in terms of the non-commutative circuit-size of the
identity proved) follow. We discuss in details these
conjectures and the parameters the are needed for different
kinds of lower-bounds.

Informally, the two conjectures are as follows (recall the complexity measure $Q_{\mathcal B}(f)$ from Sec.~\ref{sec:ovrv:proof_of_main_lower_bound}, counting the minimal number of substitution instances of generators from a basis $\mathcal B$ needed to generate an identity $f$):

\newtheorem*{conj_circ_size}{Conjecture I}

\begin{conj_circ_size} (Informal)
There exist non-commutative arithmetic circuits of small size that compute matrix identities of high generative complexity.
\end{conj_circ_size}

\newtheorem*{main_conj}{Conjecture II}
\begin{main_conj} (Informal)
Proving matrix identities by reasoning with polynomials whose variables $X_1,\ldots, X_n$ range over matrices is \emph{as
efficient} as proving matrix identities using polynomials whose variables  range over the \emph{entries} of the matrices $X_1,\ldots, X_n$?
\end{main_conj}


%


\subsection{Towards lower bounds on \PMd\ in terms of arithmetic-circuit size}\label{sec:ovrv:fragment-lower-bounds}

Recall that a non-commutative arithmetic circuit is an arithmetic circuit that has an order on the children of product gates and the product is performed according to this order (see Sec.~\ref{sec:app_arithmetic_circuit}).
To get a size lower bound on \PMd\ proofs in terms of the circuit equations proved, we need to assume the existence of non-commutative arithmetic circuits of small size that compute matrix identities of high generative complexity:

\bigskip

\newtheorem*{C1}{Conjecture I}
\begin{boxedminipage}[c]{1.\textwidth}
\vspace{5pt}
\begin{C1}
For some fixed $d\ge 1$, there exists a family of identities  $f_n\in\freea$ of \matd, with $n$ variables, such that $Q_{\mathcal B}(f_n)=\Omega(n^d)$, for some basis $\mathcal B $ of the identities of \matd,
\textit{and} such that
$f_n$ has a non-commutative arithmetic circuit of size  $O(n^r)$, for some constant $r<d$.
\end{C1}\vspace{0pt}
\end{boxedminipage}
\bigskip

Assuming the veracity of the above conjecture we obtain the following lower bound:\medskip

\ind
\textbf{Polynomial lower bounds on \PMd-proofs (assuming Conjecture I):} \textit{There exists a family of identities $f_n$ of \matd\ whose non-commutative arithmetic circuit-size is $s_n$ but every \PMd-proof of $f_n$ has size  $\Omega(s_n^{d-r})$.}
\bigskip

Note that we know by Theorem \ref{thm:main_lower_bound} that the lower bound in Conjecture I is true for any $d>2$ and for some specific family  $f_n$. But we do not know whether this specific $f_n$ has small circuits, as required in Conjecture I. 

\bigskip

\subsection{Towards polynomial-size  lower bounds on full arithmetic  proofs}

Here we consider the possibility that the  arbitrary polynomial-size lower bounds on matrix identities proofs \PMd\ transfer  to arithmetic proofs \PC(\F) lower bounds.





The natural way to formalize Conjecture II mentioned  informally above  is via the following  translation: consider a nonzero identity $f$ of \matd, for some $d>1$. Then $f$ is a nonzero non-commutative polynomial in \freea. If we substitute each (matrix) variable $x_i$ in $f$ by a \dbyd\ matrix of \emph{entry-variables} $\{x_{ijk}\}_{j,k\in[d]}$, then  $f$ corresponds to $d^2$ \textit{commutative }zero polynomials: $f=0$ says that for  every $(i,j)$ and for every possible assignment of field \F\ elements to the $(i,j)$-entry of each of the matrix variables in $f$ (when the product and addition of matrices are done in the standard way) the $(i,j)$-entry evaluates to $0$.
Accordingly, let $F$ be a non-commutative circuit computing $f$. Then under the above substitution of $d^2$ entry-variables to each variable in $F$, we get $d^2$  non-commutative circuits, each computing the zero polynomial \emph{when considered as commutative polynomials} (see Definition \ref{def:double-bracket}). We denote the set of $d^2$ circuits   corresponding to the identity $F$ by $\llbracket F\rrbracket _d  $ (and we  extend it naturally to equations between circuits: $\llbracket F=G\rrbracket _d$).

\para{Example:} Let $d=2$ and let $f=xy-yx$ (it is obviously not an identity of \mattwo, but we use it only for the sake of example). And let $F=xy-yx$ be the corresponding circuit (in fact, formula) computing $f$. Then we substitute matrices for $x,y$ to get:
$$\begin{pmatrix}x_{11} & x_{12} \\
x_{21} & x_{22} \\
\end{pmatrix} \cd
\begin{pmatrix}y_{11} & y_{12} \\
y_{21} & y_{22} \\
\end{pmatrix}
-
\begin{pmatrix}y_{11} & y_{12} \\
y_{21} & y_{22} \\
\end{pmatrix}
\cd
\begin{pmatrix}x_{11} & x_{12} \\
x_{21} & x_{22} \\
\end{pmatrix}.$$
And the $(1,1)$-entry non-commutative circuit (in fact formula) in  \(\convert{F}\), is:
$$(x_{11}y_{11}+x_{12}y_{21})-(y_{11}x_{11}+y_{12}x_{21}).$$

It is not hard to show that $\big|\convert{F}\big|=O\left(d^3 |F| \right)$, for every non-commutative circuit $F$ (where
$\big|\convert{F}\big|$ is the total sizes of all circuits in $\convert{F}$ and $|F|$ is the size of $F$).
We denote by
$$\big|\vdash_{\PC(\F)}\convert{F=0}\big|$$ the minimal size of a \PC(\F) proof that contains (as proof-lines) all the
circuit-equations in $\convert{F=0}$.

\bigskip

\begin{boxedminipage}[c]{1\textwidth}
\begin{main-open}  Let $d$ be a positive natural number and let $ \mathcal B$ be a (finite) basis of the  identities of \matd. Assume that $f\in\freea$ is an identity of \matd, and let $F$ be a non-commutative arithmetic circuit computing $f$. Then, the minimal number of lines in a $\PC(\F)$ proof of the collection of  $d^2$ (entry-wise) equations $\llbracket F=0 \rrbracket_d$  corresponding to $F$, is lower bounded (up to a constant factor) by $Q_{\mathcal  B}(f)$. And in symbols:
\begin{equation}\label{eq:main-open-prob}
\big|\vdash_{\PC(\F)}\convert{F=0}\big| = \Omega(Q_{\mathcal B}(f)).
\end{equation}
\end{main-open}
\end{boxedminipage}
\bigskip

\medskip

The conditional lower bound we get is now similar to that in Section \ref{sec:ovrv:fragment-lower-bounds}, except that it holds for \PC(\F) and not only for \textit{fragments} of \PC(\F):
\bigskip

\ind\textbf{Polynomial lower bounds on arithmetic proofs \PC(\F) (assuming Conjectures I and II):} \textit{There exists a family of identities $f_n$ of \matd\ whose non-commutative arithmetic circuit-size is $s_n$ but every $\PC(\F)$-proof of $f_n$ has size  $\Omega(s_n^{d-r})$.}
\bigskip

We also present a   \emph{propositional version} of Conjecture II, by  considering $\F$ to be $GF(2)$, adding to \PC(\F) the Boolean axioms $x_i^2+x_i=0$ and considering matrix identities for \matd\ (see Section \ref{sec:prop-vers}).

\subsection{Towards \textit {exponential-size} lower bounds on arithmetic proofs}

Assuming Conjecture II above holds (i.e., Equation \ref{eq:main-open-prob}), we show under which further conditions one gets \emph{exponential-size} lower bounds on arithmetic proofs \PC(\F).
The idea is  to take \textit{the dimension $d$ of the matrix algebras as a parameter by itself.}
For this we need to set up the assumptions more carefully:\bigskip

\ind\textbf{Assumptions:}

\begin{enumerate}

\item \textbf{Refinement of Conjecture II}: Assume that for any $d$ and any basis $\mathcal B_d$ of the identities of \matd\ the number of lines in any $\PC(\F)$ proof of $\llbracket F= 0\rrbracket_d$ is at least  $\mathcal C_{\mathcal B_d}\cd Q_{\mathcal B_d}(f)$, where $\mathcal C_{\mathcal B_d}$ is a number depending on \(\mathcal B_{d}\)  and $F$ is a non-commutative arithmetic circuit computing $f$ (this is the same as Conjecture II except that now $\mathcal C_{\mathcal B_d}$ is not a constant).

\item Assume that for any sufficiently large $d$ and any basis $\mathcal B_d$ of the identities of \matd, there exists a number $c_{\mathcal B_d}$ such that for all sufficiently large \(n\) there exists an identity $f_{n,d}$ with  $Q_{\mathcal B_d}(f_{n,d})\ge c_{\mathcal B_d}\cd n^{2d}$. (The existence of such identities are known from our unconditional lower bound.)

\item Assume that for the \(c_{\mathcal B_{d}}\) in item 2 above:   $c_{\mathcal B_d}\cd \mathcal C_{\mathcal B_d}= \Omega\left(\frac{1}{\poly(d)}\right)$.

\item \textbf{(Variant of) Conjecture I}: Assume that the non-commutative arithmetic circuit size of $f_{n,d}$ is  at most $\poly(n,d)$.
\end{enumerate}

\ind\textbf{Corollary (assuming Assumptions 1-4 above)}:
There exists a polynomial size (in $n$) family of identities between non-commutative arithmetic circuits, for which any \PC(\F) proof requires exponential $2^{\Omega(n)}$ number of proof-lines.

\begin{proof} By the assumptions, every $\PC(\F)$-proof of $\llbracket f_{n,d}=0 \rrbracket_d$ has size at least $c_{\mathcal B_d}\cd
\mathcal C_{\mathcal B_d}\cd n^{2d}$.  Consider the family $\{ f_{n,d}\}_{n=1}^\infty$, \textit{where $d$ is a function of $n$}, and we take  $d=n/4$. Then, we get the following lower bound on the number of lines in any    $\PC(\F)$-proof of the family $\{ f_{n,d}\}_{n=1}^\infty$:
\[c_{\mathcal B_d}\cd \mathcal C_{\mathcal B_d}\cd n^{2d}=\frac{1}{\poly(n/4)}n^{n/2}= 2^{\Omega(n)}
,
 \]
which (by Assumption 4) is \textit{exponential} in the arithmetic circuit-size of the identities \(f_{n,d}\) proved.
\end{proof}
\para{Justification of assumptions.}

We wish to justify to a certain extent the new Assumptions 3 above (which lets us obtain the exponential lower bound). We shall use the special hard polynomials $f$ that we proved exist in Theorem \ref{thm:main_lower_bound} for this purpose.

First, note that Assumption 2 holds for these $f$'s, by Theorem \ref{thm:main_lower_bound}.
  In Section \ref{sec:exponential-lower-bounds} we show that the function $c_{\mathcal B_d}$ for these $f$'s does not decrease too fast.
And we use this fact to get the following (conditional exponential lower bound):

\Comment{
That is, for the polynomial $f$ we can prove the following:
 $$Q_{\mathcal B_{n/4}}(f)=\Omega\left(\frac{2^n}{n^{5/2}\ln n}\right).$$
}


\begin{proposition*}
Suppose \emph{Assumption 1} above holds (refinement of Conjecture II) and assume that $\mathcal C_{\mathcal B_{n/4}}=
\Omega(1/{\rm poly}(n))$. Then, there exists a  family of
non-commutative circuits $\{F_n\}_{n=1}^\infty$ (computing the family of polynomials $\{f_{n,\frac{n}{4}}\}_{n=1}^\infty$)
such that  the number of lines in any $\PC(\F)$-proof of $\llbracket F_n= 0\rrbracket_{n/4}$ is at least $2^{\Omega(n)}$.
%
%
\end{proposition*}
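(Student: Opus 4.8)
The plan is to instantiate the unconditional generative-complexity bound of Theorem~\ref{thm:main_lower_bound} in the regime where the matrix dimension grows linearly with the number of variables, and then transfer it to a $\PC(\F)$ lower bound through Assumption~1. Fix $n$ divisible by $4$ and large, put $d:=n/4$ (so $d>2$), let $f_{n,n/4}\in\freea$ be the identity of $\matd$ of degree $2d+1$ on $n$ variables supplied by Theorem~\ref{thm:main_lower_bound}, and let $F_n$ be \emph{any} non-commutative arithmetic circuit computing $f_{n,n/4}$; such a circuit exists because $f_{n,n/4}$ is a non-commutative polynomial, and for the present statement its size is immaterial. By Assumption~1 (the refinement of Conjecture~II), every $\PC(\F)$-proof of the collection of equations $\llbracket F_n=0\rrbracket_{n/4}$ has at least $\mathcal C_{\mathcal B_{n/4}}\cd Q_{\mathcal B_{n/4}}(f_{n,n/4})$ lines; since by hypothesis $\mathcal C_{\mathcal B_{n/4}}=\Omega(1/\poly(n))$, the entire statement reduces to the single claim $Q_{\mathcal B_{n/4}}(f_{n,n/4})=2^{\Omega(n)}$. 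Equivalently, one must show that the constant hidden in the estimate ``$\Omega(n^{2d})$'' of Theorem~\ref{thm:main_lower_bound} does not decrease too fast once $d$ is coupled to $n$ by $d=n/4$.

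To prove this I would revisit the two parts of the proof of Theorem~\ref{thm:main_lower_bound}, this time keeping track of \emph{every} dependence on $d$; in the original argument $d$ is held fixed while $n\to\infty$, so all $d$-dependent factors are absorbed into the $\Omega(\cdot)$ and must here be made explicit. In Part~1 --- the counting argument producing the s-polynomials $P_1,\dots,P_n$, combining them via auxiliary variables into a degree-$(2d+1)$ polynomial, and invoking the Amitsur--Levitzki theorem --- I would re-derive the bound in the quantitative form $Q_{\mathcal E}(f_{n,n/4})\ge n^{2d}/\gamma_1(d)$ for an explicit $\gamma_1$, obtained (as in Lemma~\ref{lem:exist_for_nP} and its surroundings) by bounding how many of the relevant s-polynomials a single substitution instance of $S_{2d}$ can help generate, and then check that $\gamma_1$ grows slowly enough that $n^{2d}/\gamma_1(n/4)=2^{\Omega(n)}$ --- heuristically this should come out to $\Omega\bigl(2^n/\poly(n)\bigr)$. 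In Part~2, one passes from the given finite basis $\mathcal B_{n/4}$ to the intermediate set $\mathcal B'$ of multi-homogeneous commutator polynomials that generates it (Lemma~\ref{lem:special-basis}), notes via Lemma~\ref{lem:2d+2-cant-generate} that only generators of $\mathcal B'$ of degree at most $2d+1$ can contribute to producing the degree-$(2d+1)$ polynomial $f_{n,n/4}$, and rewrites each such bounded-degree generator through substitution instances of $S_{2d}$ (Lemma~\ref{lem:generated-by-S_2d}, which is where $d>2$ is used), so that a generation of $f_{n,n/4}$ by $k$ instances of $\mathcal B_{n/4}$ becomes a generation by $k\cd\gamma_2(d)$ instances of $S_{2d}$, i.e. $Q_{\mathcal B_{n/4}}(f_{n,n/4})\ge Q_{\mathcal E}(f_{n,n/4})/\gamma_2(d)$. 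The crucial point is that $\gamma_2$ is a function of $d$ alone and not of $\mathcal B_{n/4}$: since $\matd$ has \emph{no} identities of degree below $2d$ (by Amitsur--Levitzki), all the relevant polynomials live in degrees $2d$ and $2d+1$, where --- using the antisymmetry of $S_{2d}$ and the classical fact that the multilinear identities of $\matd$ of degree $2d$ span the one-dimensional space generated by $S_{2d}$ --- the pertinent spaces of multi-homogeneous identities are small, so $\gamma_2(d)$ is at most $\poly(d)$ (or, with room to spare, $2^{o(d)}$). Combining, $Q_{\mathcal B_{n/4}}(f_{n,n/4})\ge n^{2d}/(\gamma_1(d)\gamma_2(d))=2^{\Omega(n)}$, which together with the first paragraph completes the proof.

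The step I expect to be the main obstacle is making the constants in Part~1 genuinely precise. Hrube\v s's counting argument, and the proof of Theorem~\ref{thm:main_lower_bound} built on it, treat $d$ as a fixed constant, so the quantitative behaviour of the $d$-dependent factor $\gamma_1(d)$ is never isolated; carrying it through carefully in the coupled regime $d=\Theta(n)$ --- and in particular ruling out a factor growing like $d^{\Omega(d)}$, which (since $n^{2d}=n^{n/2}=2^{(n/2)\log_2 n}$) would swamp the main term --- is the quantitative heart of the matter. A secondary point requiring care is the analogous control in Part~2: one must verify that passing from an arbitrary, non-explicit (Kemer-provided) finite basis $\mathcal B_{n/4}$ all the way down to $\mathcal E=\{S_{2d}\}$ really costs only a $\poly(d)$, or at worst $2^{O(d)}$, factor for the degree-$(2d+1)$ hard instance, rather than something comparable to $(2d)!$. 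Everything else --- the transfer through Assumption~1, the degree-truncation step of Part~2, and the (arbitrary) choice of the circuit $F_n$ --- is routine given the results already in hand.
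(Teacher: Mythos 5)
Your plan correctly identifies the overall shape of the argument: the whole burden is to show that the constant hidden in the $\Omega(n^{2d})$ of Theorem~\ref{thm:main_lower_bound} survives once one couples $d=n/4$, i.e.\ that $Q_{\mathcal B_{n/4}}(f_{n,n/4})=2^{\Omega(n)}$, after which Assumption~1 and $\mathcal C_{\mathcal B_{n/4}}=\Omega(1/\poly(n))$ hand you the result. Your Part~1 (re-running the Hrube\v s-style counting argument keeping $d$ explicit, so that the constant comes out as roughly $1/((2d+1)^2\ln(4d+2))$) is exactly what the paper does via Equations~\ref{eq:count_constant_of_dimension_1} and~\ref{eq:count_constant_of_dimension_2}, and you correctly call this the quantitative heart.

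Where you diverge from the paper --- and where you have a genuine gap --- is Part~2. You attempt the stronger claim that, for an \emph{arbitrary} basis $\mathcal B_{n/4}$, the transfer factor $\gamma_2(d)=\max_{B\in\degr{\mathcal B'}{\le 2d+1}}Q_{S_{2d}}(B)$ (the cost of rewriting each low-degree generator as $S_{2d}$-instances) is $\poly(d)$, and you justify this only by gesturing at the smallness of the spaces of multilinear degree-$2d$ and degree-$(2d+1)$ identities. As written, that inference is not valid: ``the vector space is small'' does not by itself bound the number of \emph{distinct substitution instances} of $S_{2d}$ needed in a two-sided-ideal generation, which is what $Q_{S_{2d}}(\cdot)$ counts. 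A bound such as $O(d^2)$ for degree-$(2d+1)$ multilinear identities \emph{can} be proved (by using the alternating property of $S_{2d}$ to enumerate the $O(d^2)$ elementary multilinear instances $x_k S_{2d}(\ldots)$, $S_{2d}(\ldots)x_k$, $S_{2d}(\ldots,x_kx_l,\ldots)$ and observing that any multilinear identity lies in their span by Lemma~\ref{lem:generated-by-S_2d}), but that argument is not in your write-up, and you yourself flag this as a potential failure point ``comparable to $(2d)!$'' --- and indeed $(2d)!$ would kill the bound, since with $d=n/4$ you cannot afford more than a $2^{o(n)}$ loss. So as it stands Part~2 is a real gap.

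The paper takes a much simpler route that makes Part~2 trivial: it does \emph{not} try to handle every basis. Since the proposition only asserts existence of a family $\{F_n\}$ and Assumption~1 is quantified over all bases, the paper is free to \emph{choose} a convenient basis $\mathcal B_{n/4}$ --- namely one containing $S_{2d}$, and (implicitly, via Lemmas~\ref{lem:special-basis} and~\ref{lem:generated-by-S_2d}) one whose only generators of degree at most $2d+1$ are $S_{2d}$ itself --- which makes the relevant transfer constant equal to $1$ outright, with no estimate to carry out. Combining this with the explicit Part-1 constant gives $Q_{\mathcal B_{n/4}}(f)=\Omega\bigl(2^n/(n^{5/2}\ln n)\bigr)$ directly via Stirling. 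Your approach would, if the $\poly(d)$ bound on $\gamma_2$ were supplied, prove a slightly stronger uniform-over-bases statement; but for the proposition as stated, the paper's choice-of-basis device is the shorter and safer path, and it is the one you are missing.
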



Note that this will give us an exponential-size lower bound on \PC(\F) proofs only if moreover  the arithmetic circuit size of $\{F_n\}_{n=1}^\infty$ is small enough (e.g., if Assumption 4 above holds).

\section{Concluding remarks}
This work originates from the fundamental  goal of establishing lower bounds on strong proof systems. Our focus was on arithmetic proofs which serve as a useful \cite{HT12} analogue of propositional Extended Frege proofs. Along the way, we have discovered an interesting hierarchy within arithmetic proofs: a hierarchy of sound and complete proof systems for matrix identities of increasing dimensions. In this hierarchy we have been able to establish unconditional nontrivial size-lower bounds (in terms of the number of variables in the identities proved).

We then used these results, together with two seemingly natural conjectures about non-commutative arithmetic circuits and proof complexity, to propose matrix identities as hard candidates for strong proof systems. We showed that using these two conjectures, one can obtain up to exponential-size lower bounds (in terms of the circuit-size of the identities proved).

Proving lower bounds on strong (propositional) proof systems is a fundamental open problem in the theory of computing; nevertheless, it is in fact not clear whether such lower bounds are beyond current techniques (in contrast to other fundamental hardness problems in complexity, such as explicit Boolean circuits lower bounds). In light of this, and the fact that almost no hard candidates for strong proof systems are currently known  (see \cite{BBP95,Kra10-forcing}), it seems  that an important \textit{conceptual}, so to speak, contribution of this paper, is to supply such new hard candidates in the form of matrix identities. Moreover, as our work partially demonstrates, such matrix identities  have structure that is helpful in proving proof
complexity lower bounds.

%
%
%
%
%

%
\section{Relation to previous work}
\para{Relation to previous work by Hrube\v s \cite{Hru11}.}
The problem of proving \textit{quadratic} size lower bounds on arithmetic proofs $\PC$ was considered by Hrube\v{s} in \cite{Hru11}. The work in \cite{Hru11} gave several conditions and open problems, under which, quadratic size lower bounds on arithmetic proofs would follow (and further, showed that the general framework suggested may have potential, at least in theory, to yield Extended Frege quadratic-size lower bounds). The current work can be viewed as an attempt to extend the approach suggested in Hrube\v s \cite{Hru11}, from an approach suitable for proving up to  $\Omega(n^2)$ size lower bounds on $\PC$ proofs, (and potentially Extended Frege proofs) to an approach for proving much stronger lower bounds,  namely an $\Omega(n^d)$ lower bound on $\PC(\F)$ proofs, for every positive $d>2$ and for every zero characteristic field $\F$; and under stronger assumptions, exponential  $2^{\Omega(n)}$  lower bounds on $\PC(\F)$ proofs (and  similarly, potentially on Extended Frege proofs).

\para{Relation to other previous works.}
Apart from the connection to  \cite{Hru11}, we may consider the relation of the current work to the work of  Hrube\v{s} and Tzameret \cite{HT12} that obtained polynomial-size  (arithmetic and propositional) proofs for certain identities concerning matrices. As far as we see, there are no direct relations between these two works: in the current work we are studying matrix identities whose number of matrices (i.e., variables) grows with the number of variables $n$ (if the number of matrices in the matrix identities over \matd\ is $m$ then the number of variables in the translation of the identities to a set of $d^2$ identities is $d^2\cd n$).
 Whereas in \cite{HT12} the number of matrices was fixed and only the dimension of the matrices grows.

Note also that the matrix identities studied in \cite{HT12} are not even translations (via $\llbracket \cd \rrbracket $) of matrix identities over \matd. For instance consider the identity $\det(A)\cd\det(B)=\det(AB)$ from \cite{HT12}, where $A$ and $B$ are $2\times 2$ matrices. Then we get that:
$$ \,
\det\begin{pmatrix}
a & b \\
c & d \\
\end{pmatrix}
\cd
\det\begin{pmatrix}
e & f \\
g & h \\
\end{pmatrix} =
\det\begin{pmatrix}
ae+bg & af+bh \\
ce+dg & cf+dh\\
\end{pmatrix}$$
is equal
to
$
(ad-bc)\cd(eh-fg)=(ae+bg)(cf+dh)-(af+bh)(ce+dg).
$
But notice that, e.g., in our translation of a  matrix identity over \matd, two variables that correspond to the same matrix cannot multiply each other, while in the example above, $a$ multiplies $c$ and $b$ multiplies $d$, though they are entries of the same matrix.

\newpage

\appendix
\section*{\huge Technical appendix}

\section{Formal preliminaries}
\subsection{Algebras with polynomial identities}
For a natural number $n$, put $[n]:=\set{1,2,...,n}$. We use lower case letters $a,b,c$ for constants from the underlying field, $x,y,z$ for variables and $\overline x,\overline y,\overline z$ for vectors of variables, $f,g,h,\ell$ or upper case letters such as $A,B,P,Q$ for polynomials and $\overline f,\overline g,\overline h,\overline \ell, \overline A,\overline B,\overline P, \overline Q$, for vectors of polynomials (when the arity of the vector is clear from the context).

A polynomial is a formal sum of monomials, where a monomial  is a product of (possibly non-commuting) variables and a constant from the underlying field.
%
%
For two polynomials $f(x_1,\ldots,x_n)$ and $g$ we say that  $g$ \emph{is a substitution instance of $f$} if $g=f(h_1,\ldots,h_n)$
for some polynomials $h_1,\ldots,h_n$; and we sometimes denote $f(h_1,\ldots,h_n)$ by $f(\overline h)$.
For a polynomial $f(\nx{x})\in\freea$, $f\big |_{x_{i_1}\leftarrow g_{i_1},\ldots, x_{i_k}\leftarrow g_{i_k}}$ denotes the polynomial  that replaces $x_{i_1},\ldots, x_{i_k}$ by  $g_{i_1},\ldots, g_{i_k}$ in $f,$ respectively, where $g_{i_1},\ldots, g_{i_k}\in \freea, i_1 ,\ldots, i_k$ are distinct numbers from $[n]$ and $k\in[n]$.

For a vector $\t H$ of polynomials $H_1 ,\ldots, H_k\in\freea$ where $k$ is positive integer, we also use the notation  $\overline {H}|_{ H_j\leftarrow f}$, to denote the vector of polynomials that replace the $j^{th}$ coordinate $H_j$ in $\t {H}$ by a polynomial $f\in \freea$, where $j\in[k]$.
\iddocomment{but we say that lower case letters like f denotes constants?}

\begin{definition}

Let A be a vector space over a field $\F$ and $\,\cd: A\times A \rightarrow A$ be a distributive multiplication operation. If $\cd$ is associative, that is, $a_1 \cd (a_2 \cd a_3) = (a_1\cd a_2)\cd a_3$ for all $a_1,a_2,a_3$ in $A$, then the pair $(A,\cd)$ is called an \textbf{associative algebra over $\F$}, or an $\F$\textbf{-algebra}, for short.\footnote{In general an \F-algebra can be non-associative, but since we only talk about associative algebras in this paper we use the notion of $\F$-algebra to imply that the algebra is associative.}
\end{definition}
%

Perhaps the most prominent example of an $\F$-algebra is the algebra of $d\times d$ matrices, for some  positive natural number $d$, with entries from $\F$ (with the usual addition and multiplication of matrices). We denote this algebra by \matd. Note indeed that \matd\ is an associative algebra but not a commutative one (i.e., the multiplication of matrices is non-commutative because $AB$ does not necessarily equal $BA$, for two $d\times d$ matrices $A,B$).

\begin{definition}
Let $\freea$ denote the associative algebra of all polynomials such that the variables $X:=\{x_1,x_2,\ldots \}$ are non-commutative with respect to multiplication. \iddocomment{Should be finite? Countable?} We call $\freea$ the \textbf{free algebra (over $X$)}.
\end{definition}
For example, $x_1 x_2-x_2 x_1+x_3 x_2 x_3^2-x_2 x_3^3,\  \, x_1 x_2-x_2 x_1$ and $0$ are three distinct polynomials in $\freea$.

Note that the set $\freea$ forms a non-commutative ring. We sometimes call $\freea$ \emph{the ring of non-commutative polynomials} and call the polynomials from $\freea$ \emph{non-commutative polynomials}. Throughout this paper, unless otherwise stated, a polynomial is meant to be a non-commutative polynomial, namely a polynomial from the free algebra  $\freea$.

We now introduce the concept of a \textit{polynomial identity algebra}, PI-algebra for short:

\begin{definition}\label{def:PI-algebra}
Let $A$ be an \F-algebra. An \textbf{identity of $A$} is a polynomial $f(x_1,...,x_n)\in\freea$ such that: \[
f(a_1,...,a_n)=0,    \mbox{ for all } a_1,...,a_n\in A.
\]
A \textbf{PI-algebra} is simply an algebra that has a non-trivial identity, that is, there is a nonzero $f\in\freea$ that is an identity of the algebra.
\end{definition}

For example, every \textit{commutative} \F-algebra $A$ is also a PI-algebra: for any $a,b\in A$, it holds that $ab-ba = 0$, and so $x_i x_j - x_j x_i $ is a nonzero polynomial identity of $A$, for any positive $i\neq j\in \N$. A  concrete example of a commutative algebra is the usual ring of (\emph{commutative}) polynomials with coefficients from a field \F\ and variables $X=\{x_1,x_2,\ldots\}$, denoted usually $\F[X]$.

An example of an algebra that is \textit{not} a PI-algebra is the free algebra \freea\ itself. This is because a nonzero polynomial $f\in\freea $ cannot be an identity of \freea\ (since the assignment that maps each variable to itself does not nullify $f$).

A \emph{two-sided ideal} $I$ of an \F-algebra $A$ is a subset of $A$ such that for any (not necessarily distinct) elements $f_1,...,f_n$ from $I$ we have $\sum_{i=1}^n g_i\cd f_i \cd h_i \in I$, for all $g_1,...,g_n, h_1,...,h_n\in A$.

\begin{definition}\label{def:T-ideal} A \textbf{T-ideal} $\mathcal T$ is a two-sided ideal of \freea\ that is closed under all endomorphisms\footnote{An algebra endomorphism of $A$ is an (algebra)
homomorphism $A\to A$.}, namely, is closed under all substitutions of variables by polynomials.
\end{definition}
In other words, a T-ideal is a two-sided ideal $\mathcal T$, such that if $f(x_1,...,x_n)\in \mathcal T$ then $f(g_1,...,g_n)\in \mathcal T$, for any $g_1,...,g_n\in \freea$.

It is easy to see the following:
\begin{fact} \label{fac:identities-are-T-ideal}
The set of identities of an (associative) algebra is a T-ideal.
\end{fact}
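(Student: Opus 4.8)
The final statement to be proved is Fact~\ref{fac:identities-are-T-ideal}: the set of identities of an associative $\F$-algebra $A$ is a T-ideal of $\freea$.

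\medskip

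\noindent\textbf{Proof plan.} The plan is to verify directly from the definitions that the identity set, call it $\ii(A):=\{f\in\freea : f(\overline a)=0 \text{ for all } \overline a\in A^n\}$, satisfies the three defining closure properties of a T-ideal: it is a subgroup under addition, it absorbs two-sided multiplication by arbitrary elements of $\freea$, and it is closed under endomorphisms of $\freea$ (i.e.\ under substitution of variables by arbitrary polynomials). The key observation underlying everything is that for any fixed assignment $\phi\colon X\to A$, the induced map $\freea\to A$ sending $f\mapsto f(\phi(x_1),\ldots,\phi(x_n))$ is an $\F$-algebra homomorphism; this follows because $\freea$ is the \emph{free} associative $\F$-algebra on $X$, so any set map $X\to A$ extends uniquely to an algebra homomorphism. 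Denote this homomorphism $\hat\phi$.

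\medskip

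First I would handle the additive subgroup property: if $f,g\in\ii(A)$ then for every assignment $\phi$ we have $\hat\phi(f-g)=\hat\phi(f)-\hat\phi(g)=0-0=0$, so $f-g\in\ii(A)$; and $0\in\ii(A)$ trivially. Next, the two-sided ideal property: given $f_1,\ldots,f_n\in\ii(A)$ and arbitrary $g_1,h_1,\ldots,g_n,h_n\in\freea$, apply $\hat\phi$ to $\sum_i g_i f_i h_i$; since $\hat\phi$ is an algebra homomorphism it is multiplicative and additive, so $\hat\phi\bigl(\sum_i g_i f_i h_i\bigr)=\sum_i \hat\phi(g_i)\,\hat\phi(f_i)\,\hat\phi(h_i)=\sum_i \hat\phi(g_i)\cdot 0\cdot\hat\phi(h_i)=0$. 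Hence $\sum_i g_i f_i h_i\in\ii(A)$, and $\ii(A)$ is a two-sided ideal of $\freea$.

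\medskip

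Finally I would check closure under endomorphisms. Let $\psi\colon\freea\to\freea$ be an algebra endomorphism, determined by $\psi(x_i)=p_i\in\freea$, and let $f(x_1,\ldots,x_n)\in\ii(A)$; we must show $\psi(f)=f(p_1,\ldots,p_n)\in\ii(A)$. Fix any assignment $\phi\colon X\to A$ and consider $\hat\phi(\psi(f))$. The composition $\hat\phi\circ\psi\colon\freea\to A$ is again an algebra homomorphism, hence equals $\widehat{\phi'}$ where $\phi'\colon X\to A$ is the assignment $x_i\mapsto \hat\phi(p_i)$ (again by freeness: two algebra homomorphisms out of $\freea$ agreeing on the generators $X$ coincide). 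Therefore $\hat\phi(\psi(f))=\widehat{\phi'}(f)=f(\phi'(x_1),\ldots,\phi'(x_n))=0$ because $f$ is an identity of $A$ and $\phi'(x_i)\in A$. Since $\phi$ was arbitrary, $\psi(f)\in\ii(A)$, completing the verification. Combining the three parts, $\ii(A)$ is a T-ideal.

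\medskip

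\noindent\textbf{Main obstacle.} There is essentially no technical obstacle here; the statement is ``easy to see'' precisely because $\freea$ is free. The only point requiring a modicum of care is to state and use the universal property of the free algebra cleanly — namely that set maps $X\to A$ biject with $\F$-algebra homomorphisms $\freea\to A$ — so that evaluation and substitution are recognized as algebra homomorphisms, after which multiplicativity and additivity do all the work. If one did not want to invoke the universal property abstractly, the alternative is a direct induction on the structure of the polynomial $f$ (monomial by monomial, and within a monomial variable by variable), which is routine but more tedious; I would prefer the homomorphism argument for brevity.
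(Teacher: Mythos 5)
Your proof is correct. The paper itself supplies no proof of Fact~\ref{fac:identities-are-T-ideal}; it is introduced with ``It is easy to see the following:'' and stated without argument, so there is no official proof to compare against. Your write-up via the universal property of $\freea$ — treating every assignment $\phi\colon X\to A$ as inducing an algebra homomorphism $\hat\phi\colon\freea\to A$, and similarly every substitution as an endomorphism of $\freea$, then letting multiplicativity and additivity carry the three closure checks — is exactly the standard justification one would give for this routine fact, and each of your three steps (additive closure, two-sided absorption, and closure under endomorphisms via the factorization $\hat\phi\circ\psi=\widehat{\phi'}$) is carried out correctly. Your concluding remark about the alternative direct induction on monomials is also accurate; the homomorphism route is the cleaner choice.
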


A basis of a T-ideal $\mathcal T$ is a set of polynomials whose substitution instances generate $\mathcal T$ \textit{as an ideal}: 
\begin{definition}
Let $B\subseteq \freea$ be a set of polynomials and let $\mathcal T$ be a T-ideal in \freea. We say that $B$ is \textbf{a basis for $\mathcal T$} or that \textbf{$\mathcal  T$ is generated as a T-ideal by $B$}, if every $f\in \mathcal T$ can be written as:
\[
f=\sum_{i\in I} h_i\cd B_i(g_{i1},...,g_{in_i})\cd \ell_i\,,
\]
for $h_i,\ell_i,g_{i1},...,g_{in_i}\in\freea$ and $B_i\in B$ (for all $i\in I$).
\end{definition}
Given $B\subseteq \freea$, we write $T(B)$ to denote the T-ideal generated by $B$. Thus, a T-ideal $\mathcal T$ is generated by $B \subseteq \freea$ if $\mathcal T = T(B)$.

\begin{examples}
$T(x_{1})$ is simply the set of all polynomials from $\freea$. $T(x_1 x_2-x_2 x_1)$ is the set of all non-commutative polynomials that are zero if considered as commutative polynomials.
\end{examples}

Note that the concept of a T-ideal is already somewhat reminiscent of logical proof systems, where generators of the T-ideal $\mathcal T$ are like axioms schemes and generators of a two-sided ideal containing $f$ are like  substitution instances of the axioms.

\iddocomment[inline]{explain the last example  more precisely/formally.}

\iddocomment[inline]{Define homogenous polynomials and write this as a notation in the definition}
\iddocomment{no numbers in notations ?}

A polynomial is \textit{homogenous} if all its monomials have the same total degree. Given a polynomial $f$, the \emph{homogenous part of degree $j$} of $f$, denoted $f^{(j)}$ is the sum of all monomials with total degree $j$.  We write $\degr{C}{j}$ to denote the $j$th-homogeneous part of the circuit $C$ and the vector $\degr{\overline{C}}{j}$  denotes the vector consisting of the $j$th-homogeneous parts of the circuits $C_1,C_2,\ldots,C_{2d}$.

\iddocomment[inline]{Do you really need homogenous part of circuits or is it enough to have homogenous part of a polynomial?
"$\degr{C}{j}$ denotes the $j$th-homogeneous part of the circuit $C$ and the vector $\degr{\overline{C}}{j}$  denotes the vector consisting of the $j$th-homogeneous parts of the circuits $C_1,C_2,\ldots,C_{2d}$."
}

\smallskip

\begin{definition}
$S_d(x_1,x_2,\ldots, x_{d})$ denotes the  \textbf{standard identity} of degree $d$ as follows:
$$S_{d}(x_1,x_2,\ldots, x_{d}):=\sum_{\sigma\in \S_{d}}sgn(\sigma)\prod_{i=1}^{d}x_{\sigma(i)}\,,$$
where $\S_d$ denotes the symmetric group on $d$ elements and $sgn(\sigma)$ is the sign of the permutation $\sigma$. \iddocomment{sym. group of degree or of ``order'' d ?}
\end{definition}

For $n$ polynomials $\nx{f}$ where $n\geq 2,n\in \Z$, we define the \emph{\textbf{generalized-commutator}} $[\nx{f}]$ as follows:
$$[f_1,f_2]:=f_1f_2-f_2f_1, ~~~~\hbox{(in case $n=2$)}$$
$$\hbox{and}~~~~[f_1,\ldots,f_{n-1},f_n]:=[[f_1,\ldots,f_{n-1}],f_n],\;  ~~~\hbox{for $n>2$}.$$

A polynomial $f\in \freea$ with $n$ variables is homogenous  \emph{with degrees $(1,\ldots,1)$}  ($n$ times) if in every monomial the power of every variable $x_1,\ldots,x_n$ is precisely 1. In other words, every monomial is of the form $\alpha\cd \prod _{i=1}^n x_{\sigma(i)}$, for some permutation $\sigma$ of order $n$ and some scalar $\alpha$. For the sake of simplicity, we shall talk in the sequel about \textit{\textbf{ polynomial of degree $n$}}, when referring to  polynomial with degrees $(1,\ldots,1)$ ($n$ times). Thus, any  polynomial with $n$ variables is homogenous of total-degree  $n$.

\subsection{Arithmetic circuits}\label{sec:app_arithmetic_circuit}

\begin{definition}
Let $\F$ be a field, and let $X=\set{\nx{x}}$ be a set of input variables. An \textbf{arithmetic (or algebraic) circuit} is a directed acyclic graph, where the in-degree of nodes is at most $2$. Every leaf of the graph (namely, a node of in-degree 0) is labelled with either an input variable or a field element. Every other node of the graph is labelled with either $+$ or $\times$(in the first case the node is a sum-gate and in the second case a product-gate). Every edge in the graph is labelled with an arbitrary field element. A node of out-degree $0$ is called an output-gate of the circuit.
\end{definition}

Every node and every edge in an \emph{arithmetic circuit} computes a polynomial in the commutative polynomial-ring $\F[X]$ in the following way. A leaf just computes the input variable or field element that labels it. 
the sum of the polynomials computed by the two edges that reach it. A product-gate computes the product
of the polynomials computed by the two edges that reach it. We say that a polynomial $g \in \F[X]$ is computed by the circuit if it is computed by one of the circuit's output-gates.

The size of a circuit $\Phi$ is defined to be the number of edges in $\Phi$, and is denoted by $|\Phi|$.

\begin{definition}
Let $\F$ be a field, and let $X=\set{\nx{x}}$ be a set of input variables. A \textbf{non-commutative arithmetic circuits} is
similarly to the arithmetic circuits defined above, with the following additional feature: given any $\times$-gate of fanin $2$, its children are labeled by a fixed order.
\end{definition}
Every node and every edge in a \emph{non-commutative arithmetic circuit} computes a noncommutative polynomial in the free algebra $\freea$ in exactly the same way as the arithmetic circuit does, except that at each $\times-gate$, the ordering among the children is taken into account in defining the polynomial computed at the gate.

The size of a noncommutative circuit $\Phi$ is also defined to be the number of vertices in $\Phi$, and is denoted by $|\Phi|$.

\section{The complexity measure}
\iddocomment{might not be appropriate name}
Let $A$ be a PI-algebra (Definition \ref{def:PI-algebra}) and let  $\mathcal T$ be the T-ideal (Definition \ref{def:T-ideal}) consisting of all identities of $A$ (see Fact \ref{fac:identities-are-T-ideal}). Assume that $B$ is a basis for the  T-ideal $\mathcal T$, that is, $T(B)=\mathcal T$.  Then every $f\in\mathcal T$ is a consequence \iddotodo{define consequence} of $B$, namely, can be written as a linear combination of substitution instance of polynomials from $B$ as follows:
\begin{equation}\label{eq:def-complexity}
f=\sum_{i\in I} h_i\cd B_i(g_{i1},...,g_{in_i})\cd \ell_i\,,
\end{equation}
for $h_i,\ell_i,g_{i1},...,g_{in_i}\in\freea$ and $B_i\in B$ (for all $i\in I$).

A very natural question, from the complexity point of view, is the following: \textit{What is the minimal number of distinct substitution instances $B_i(g_{i1},\ldots,g_{in_i})$ of generators from $B$ that must occur in (\ref{eq:def-complexity})?} Or in other words, \textit{how many distinct substitution instances of generators are needed to generate $f$ above?}
\iddocomment{Note it's different than the minimal $|I|$ such that (\ref{eq:def-complexity}) holds,
because the same substitution instance can occur (necessarily) more than once in the sum (because non-commutative multiplication).}

Formally, we have the following:
\begin{definition}[$Q_{B}(f)$]\label{def:Q-measure}
For a set of polynomials  $B \subseteq \freea $, define $Q_B(f)$ as the smallest (finite) $k$ such that there exist substitution instances $g_1,g_2,\ldots, g_k$ of polynomials from $B$ with
$$
f\in \langle g_1,g_2,\ldots, g_k \rangle ,
$$
where $\langle g_1,g_2,\ldots, g_k\rangle$ is the two-sided ideal generated by $g_1,g_2,\ldots,g_k$.
\end{definition}

If the set $B$ is a singleton $B=\{h\}$, we shall sometimes
write $Q_{h}(\cd)$ instead of $Q_{\{h\}}(\cd)$.

Accordingly, we extend Definition \ref{def:Q-measure} to  a \textit{sequence }of polynomials and let $Q_{B}(\nx{f})$ be the smallest $k$ such that there exist some substitution instances $g_1,g_2,\ldots, g_k$ of polynomials from $B$ with
$$f_i\in \langle g_1,g_2,\ldots, g_k\rangle, ~~~\hbox{for all } i\in[k].$$

Note that $Q_B(f)$ is interesting only if $f$ is not already in the generating set. Hence, we need to make sure that the generating set does not contain $f$ and the easiest way to do this  (when considering asymptotic growth of  measure) is by stipulating the the generating set is finite. Given an algebra, the question whether there exists a finite generating set of the T-ideal of the identities of the algebra is a  highly non-trivial  problem, that goes by the name \textit{The Specht Problem{}}. Fortunately, for matrix algebras we can use the solution of the Specht problem given by Kemer \cite{Kem87}. Kemer showed that for every matrix algebra $A$ there exists a finite basis of the T-ideal of the identities of $A$. The problem to actually find such a finite basis for most matrix algebras (namely for all values of $d$, for \matd) is open.

We have the following simple proposition (which is analogous to a certain extent to the fact that every two Frege proof systems polynomially simulate each other; see e.g.~\cite{Kra95}):

\iddocomment[inline]{Unclear definition; also it should be a definition and not a notation.}

\begin{proposition}\label{prop:generate-means-less-Q}
Let $A$ be some \F-algebra and let $B_{0}$ and $B_1$ be two \emph{finite} bases for the identities of $A$. Then, there exists a constant $c$ (that depends only on $B_0,B_1$) such that for any identity $f$ of \(A\):
$$
Q_{B_0}(f)\le c \cd Q_{B_1}(f).
$$
\end{proposition}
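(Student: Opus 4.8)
The plan is to exploit the fact that $B_0$ and $B_1$ generate the *same* T-ideal $\mathcal T$ (the identities of $A$), so each generator of $B_1$ is itself a consequence of $B_0$, and vice versa. Concretely, since $B_1 \subseteq \mathcal T = T(B_0)$, every polynomial $B_1^{(j)}(x_1,\ldots,x_{m_j}) \in B_1$ can be written as
\[
B_1^{(j)} = \sum_{i \in I_j} h_{ij} \cd B_0^{(\pi(i,j))}(g_{ij1},\ldots,g_{ijn_{ij}}) \cd \ell_{ij}
\]
for finitely many terms, with each $B_0^{(\pi(i,j))} \in B_0$. Because $B_1$ is *finite*, let $c := \max_j |I_j|$ be the largest number of distinct substitution instances of $B_0$-generators appearing across these finitely many expressions for the generators of $B_1$. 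This $c$ depends only on $B_0$ and $B_1$ (and the fixed choice of the above witnessing expressions), as required.

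Next I would show the substitution-closure of this bound. Suppose $Q_{B_1}(f) = k$, so there are substitution instances $p_1,\ldots,p_k$ of $B_1$-generators with $f \in \langle p_1,\ldots,p_k\rangle$. Each $p_s$ has the form $B_1^{(j_s)}(\overline u_s)$ for some $B_1$-generator and some tuple $\overline u_s$ of polynomials. Apply the substitution $x_t \mapsto (\overline u_s)_t$ to the witnessing expression for $B_1^{(j_s)}$ displayed above: since a T-ideal (and the defining expression) is closed under substitutions, we get
\[
p_s = B_1^{(j_s)}(\overline u_s) = \sum_{i \in I_{j_s}} h_{ij_s}(\overline u_s)\cd B_0^{(\pi(i,j_s))}\big(g_{ij_s 1}(\overline u_s),\ldots\big)\cd \ell_{ij_s}(\overline u_s),
\]
which exhibits $p_s$ as lying in the two-sided ideal generated by at most $|I_{j_s}| \le c$ substitution instances of $B_0$-generators. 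Call this set of instances $G_s$, with $|G_s| \le c$. Then $p_s \in \langle G_s \rangle$ for each $s$, hence $\langle p_1,\ldots,p_k\rangle \subseteq \langle \bigcup_{s=1}^k G_s\rangle$, and therefore $f \in \langle \bigcup_s G_s\rangle$. Since $|\bigcup_s G_s| \le \sum_s |G_s| \le c\cd k$, all of these being substitution instances of $B_0$-generators, we conclude $Q_{B_0}(f) \le c\cd k = c\cd Q_{B_1}(f)$.

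The only genuinely delicate point is the very first one: ensuring the constant $c$ is *independent of $f$*. This is exactly where finiteness of $B_1$ is essential — we fix once and for all a finite family of witnessing expressions (one per generator of $B_1$) and read off $c$ from them; the argument then "pushes $f$ through" these fixed expressions via substitution without ever re-deriving anything depending on $f$. One should also double-check the bookkeeping that substituting into a sum of two-sided-ideal-multiples again yields a sum of two-sided-ideal-multiples of substitution instances of the *same* $B_0$-generators (the count $|I_j|$ does not grow under substitution), which is immediate from the definition of a substitution instance and the fact that composing substitutions yields a substitution. No finer structure of $A$ or of matrix identities is needed; this is a purely formal (T-ideal-theoretic) statement, the exact analogue of mutual polynomial simulation between two Frege systems.
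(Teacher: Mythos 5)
Your proposal is correct and follows essentially the same route as the paper's proof: express each generator of $B_1$ as a finite combination of substitution instances of $B_0$-generators, take the constant to be the maximum number of instances needed over the (finitely many) $B_1$-generators, and use closure of substitution under composition to push the outer substitution $\overline u_s$ through the fixed witnessing expressions, yielding $Q_{B_0}(f)\le c\cd Q_{B_1}(f)$. The paper's constant is $r=\max\{Q_{B_0}(B_i):i\in[\ell]\}$, which is just the minimal version of your $c=\max_j|I_j|$; the argument is otherwise identical.
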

\begin{proof}
Assume that $B_0=\{A_1,A_2,\ldots,A_k\}$ and $B_1=\{B_1,B_2,\ldots, B_\ell\}$. And suppose that $Q_{B_1}(f)=q$ and $f\in\ideal{B_{i_1}(\overline {g_1}),\ldots,B_{i_q}
(\overline {g_q})}$, for $i_j \in [\ell]$ and where $\overline {g_j}\in\freea$ are the substitutions of polynomials for the variables of $B_{i_j}$. By assumption that both $B_0$ and $B_1$ are bases for $A,$ there exists a constant $r$ such that $B_{i_j} \in \ideal{A_{j_1}(\overline {h_{j_1}}),...,A_{j_r}(\overline {h_{j_r}})}$, for all $j\in[q]$, and where $\overline {h_{j_l}}\in\freea$ are the substitutions of polynomials for the variables of $A_{j_l}$, for any $l\in[r]$ (formally, $r=\max\{Q_{B_0}(B_i)\;:\; i\in[\ell]\}$).

Note that if $B_{i_j}\in\ideal{A_{j_1}(\overline {h_{j_1}}),\ldots,A_{j_r}
(\overline {h_{j_r}})}$, then for any substitution $\overline g_j$ (of polynomials to the variables $X$) we
have $B_{i_j}(\overline{g_j})\in\ideal{\left(A_{j_1}(\overline {h_{j_1}})\right)(\overline {g_j}),\ldots,\left(A_{j_r}
(\overline {h_{j_r}})\right)(\overline {g_j})}$. Thus, every $  B_{i_j}(\overline{g_j})$ is generated by $r$ substitution instances of polynomials from $B_0$, for any $j\in[q]$. Therefore,  $f$ can be generated with at most $r\cd q $ substitution instances of generators from $B_0$, that is,
\begin{equation}\label{eq:propostion_generator_set}
  Q_{B_0}(f)\le r\cd Q_{B_1}(f)~~~~~~\text{where $r=\max\{Q_{B_0}(B_i)\;:\; i\in[\ell]\}$}.
\end{equation}
\end{proof}




\iddocomment[inline]{I rewrote a bit the proof; hopefully it is a bit more clear now.}

\section{Matrix algebras}

\para{Hrube\v s' work.}
For an identity $f$ in a commutative algebra, we define the notation $Q_{\{[x,y]\}}(f)$  as the minimal number of substitution instances of the commutativity axioms $[x,y]=0$ we need to generate $f$ in the two-sided ideal.

For example, $Q_{[x,y]}(x_1x_2-x_2x_1)$ is $1$.
And $Q_{[x,y]}(x_1x_2-x_2x_1+x_1x_3-x_3x_1)$ is also $1$ since the formula $ x_1x_2-x_2x_1+x_1x_3-x_3x_1$ equals $[x_2+x_3,x_1]$.
In \cite{Hru11} it was concluded that there is an identity $f$ with \(n\) variables, such that:
$$Q_{[x,y]}(f)=\Omega(n^2).$$
\bigskip

We wish  to extend this result to matrix algebras. Let \matd\ denote the $d\times d$ matrix algebra over $ \mathbb{F}$, that is,  the set of all $n  \times n$ matrices with entries from $\F$, with the usual operations of matrices.
First of all, we extend the notation $Q_{[x,y]}(f)$, which only count the instances of one axiom\iddocomment{undefined}, to the notation $Q_{A_1,A_2,\ldots,A_n}$ which count the instances of $n$ axioms $A_1=0,A_2=0,\ldots,A_n=0.$


Concerning matrix algebras, the following is the famous Amitsur-Levitzky Theorem:
\begin{A-LThm}[\cite{AL50}]
For any  natural number $d$ and any field \F\ (in fact, any commutative ring) the standard identity  $S_{2d}(x_1,x_2,\ldots, x_{2d})$ of degree $2d$ is an identity of  \matd.
\end{A-LThm}

%
%
%

\iddocomment{Before a reference put a space}
Further, it can be shown that \matd\ does not have identities of degree smaller than $2d$. And that the identities of \matd\ can be \textit{finitely }generated \iddocomment{undefined}\cite{Kem87}. That is, there must be a finite generating set for \matd.
By Proposition  \ref{prop:generate-means-less-Q} no matter which finite generating set $\{A_1,A_2,...,A_k\}$ for \matd\ we choose, the value $Q_{A_1,A_2,...A_k}$ is the same up to a constant factor.


Our main theorem is the following:
\begin{theorem}\label{thm:main_lower_bound}
Let \F\ be any field of characteristic 0.
For every natural number $d>2$ and for every finite basis \(\mathcal B\) of the T-ideal of identities of  \matd, there exists an identity \(P\) over \matd\ of degree $2d+1$ with $n$ variables, such that $Q_{\mathcal B}(P)=\Omega({n \choose 2d})=\Omega(n^{2d})$.
\end{theorem}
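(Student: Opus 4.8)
The argument proceeds in the two parts outlined in the overview. In Part~1 we work with the single ``higher-order commutativity axiom'' $\mathcal E=\{S_{2d}(x_1,\ldots,x_{2d})\}$ and exhibit, by a counting (dimension) argument, a family of hard instances whose generative complexity with respect to $\mathcal E$ is $\Omega(n^{2d})$. First I would isolate the right class of test polynomials -- the \emph{s-polynomials} $P_1,\ldots,P_n$ over $n$ variables, each homogenous of degree $2n$ -- and establish the basic linearity/substitution properties of $S_{2d}$ that play the role the bilinearity of the commutator $[x,y]$ plays in Hrube\v s \cite{Hru11} (namely: $S_{2d}$ vanishes when any argument is a constant; $S_{2d}$ is multilinear in its arguments; and a ``transfer'' lemma letting one push substitutions through). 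Using these one sets up a linear-algebraic framework: the set of all polynomials generated by at most $k$ substitution instances of $S_{2d}$ spans a subspace of the (finite-dimensional) space of multilinear polynomials of the relevant degree whose dimension is $O(k\cdot \mathrm{poly})$, while the space spanned by the targets has dimension $\binom{n}{2d}=\Omega(n^{2d})$; a counting argument then forces $Q_{S_{2d}}(\overline P)=\Omega(n^{2d})$. Finally one amalgamates the $n$ s-polynomials into a single identity $\dot P$ of degree $2d+1$ by introducing $n$ fresh ``selector'' variables, arguing that $Q_{S_{2d}}(\dot P)=\Omega(Q_{S_{2d}}(\overline P))$ because any generation of $\dot P$ restricts (via substitutions of the selector variables) to simultaneous generations of all the $P_i$.

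In Part~2 we upgrade this from the non-basis $\mathcal E$ to an arbitrary finite basis $\mathcal B$ of the identities of $\matd$; here is where $d>2$ is used. The plan is to interpose a set $\mathcal B'$ of \emph{multi-homogenous commutator polynomials} (Definition~\ref{def:commutator_identity}) between $\mathcal B$ and $\mathcal E$. Step one: show (Lemma~\ref{lem:special-basis}) that any basis $\mathcal B$ is generated, as a T-ideal, by such multi-homogenous commutator polynomials $\mathcal B'$ -- this uses the characteristic-$0$ assumption to pass to multilinear/multihomogeneous components, plus the fact that every proper identity of a matrix algebra lies in the T-ideal generated by commutators (the identities form a T-ideal contained in $T([x,y])$ modulo the appropriate multilinearization). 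Step two: show (Lemma~\ref{lem:2d+2-cant-generate}) that for the specific hard instance $f=\dot P$, which has degree exactly $2d+1$, no generator in $\mathcal B'$ of degree $\ge 2d+2$ can contribute to a generation of $f$ -- this follows from degree bookkeeping together with the multi-homogeneity of the generators, since a substitution instance of a higher-degree multihomogenous polynomial cannot cancel down to degree $2d+1$ without being trivial. Step three: show (Lemma~\ref{lem:generated-by-S_2d}) that \emph{every} multi-homogenous commutator identity of $\matd$ of degree at most $2d+1$ already lies in the T-ideal generated by $S_{2d}$ -- this is the only place $d>2$ enters, and it rests on the structure of low-degree identities of $\matd$ (for $d\ge 3$ the minimal-degree identity is $S_{2d}$ of degree $2d\ge 6$, and there are no ``new'' identities of degree $2d+1$ beyond consequences of $S_{2d}$; for $d=2$ this fails because $\mathrm{Mat}_2$ has the extra Hall identity $[[x,y]^2,z]$). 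Combining: any generation of $f$ by $\mathcal B$ yields one by $\mathcal B'$ (blowing up by a constant, via Proposition~\ref{prop:generate-means-less-Q}-style reasoning), which only uses low-degree generators, each of which is generated by $O(1)$ substitution instances of $S_{2d}$; hence $Q_{\mathcal B}(f)\ge c\cdot Q_{\mathcal E}(f)=\Omega(n^{2d})$, and the matching $\Omega(\binom n{2d})$ follows since the counting bound of Part~1 is genuinely $\binom n{2d}$.

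The main obstacle I expect is Step three of Part~2, the claim that all multi-homogenous commutator identities of degree $\le 2d+1$ are consequences of $S_{2d}$: controlling the full space of identities in this degree range requires genuine input from PI-theory about the structure of the relatively free algebra of $\matd$ in low degrees, and it is exactly the point at which $d=2$ breaks down. A secondary delicate point is making the counting argument of Part~1 tight enough to get $\binom{n}{2d}$ rather than merely $\Omega(n)$ -- one must verify that distinct $2d$-subsets of variables give genuinely linearly independent contributions and that a single substitution instance of $S_{2d}$ can ``cover'' only $O(1)$ of them, which is where the multilinearity and constant-annihilation properties of $S_{2d}$ are used crucially. The remaining steps -- the properties of s-polynomials, the amalgamation into $\dot P$, and the bookkeeping in Lemmas~\ref{lem:special-basis} and \ref{lem:2d+2-cant-generate} -- I expect to be routine generalizations of the corresponding arguments in \cite{Hru11}.
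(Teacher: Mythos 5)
Your Part~2 plan (passing from $\mathcal B$ to a finite set of multi-homogenous commutator polynomials via Kemer plus Proposition~4.3.3 of Drensky, the degree-cutoff argument that generators of degree $>2d+1$ cannot contribute, and the appeal to Leron's theorem that low-degree multi-homogenous identities of $\matd$ for $d>2$ are consequences of $S_{2d}$) matches the paper's proof step-for-step, including your correct diagnosis that the failure at $d=2$ is caused by the Hall identity. The amalgamation of $P_1,\ldots,P_n$ into $\dot P=\sum_i z_i P_i$ via the transfer lemma is also the paper's approach.

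However, there is a genuine gap in your Part~1 counting argument, and the gap is not just an imprecision. You propose a \emph{dimension} argument: that the polynomials generated by $k$ substitution instances of $S_{2d}$ ``span a subspace of dimension $O(k\cdot\mathrm{poly})$,'' so that a target family spanning a $\binom{n}{2d}$-dimensional space forces $k=\Omega(n^{2d})$. This does not work. Even a \emph{single} substitution instance of $S_{2d}$ can produce any one of the ``monomials'' $S_{2d}(x_{j_1},\ldots,x_{j_{2d}})$, so the union over all choices of one substitution instance already linearly spans the full $\binom{n}{2d}$-dimensional space of s-polynomials, and the set $\{f:Q_{S_{2d}}(f)\le k\}$ is not a vector space at all (it is a parametrized algebraic variety, not closed under addition). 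Similarly, for a single $n$-tuple $(P_1,\ldots,P_n)$ the span has dimension at most $n$, so ``the space spanned by the targets has dimension $\binom{n}{2d}$'' does not describe the right invariant either. The correct argument in the paper (Lemma~\ref{lem:exist_for_nP}) is different in kind: using the linearity lemma (Lemma~\ref{lem:s-poly-linear-property}) one observes that the map from the substitution parameters $(c_{i_j},a_{k\,t\,m})$ to the resulting coefficient vector of the generated s-polynomial tuple is a \emph{polynomial map} $\phi:\F^{(2d+1)n\ell}\to\F^{n\binom{n}{2d}}$ of bounded degree $2d+1$, and then one invokes the Hrube\v s--Yehudayoff lemma (\cite{HY11}, Lemma~5) bounding $\bigl|\phi(\F^{N})\cap\{0,1\}^{M}\bigr|\le(2\deg\phi)^{N}$. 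Since the s-polynomial tuples form a set of $2^{n\binom{n}{2d}}$ distinct $\{0,1\}$-vectors, and all must lie in the image, this forces $(2(2d+1))^{(2d+1)n\ell}\ge 2^{n\binom{n}{2d}}$, hence $\ell=\Omega(\binom{n}{2d})$. The essential point is a \emph{cardinality} bound on $\{0,1\}$-points in a low-degree parametrized image, not a dimension bound, and your framing (``a single substitution instance can cover only $O(1)$ of them'') does not capture the mechanism. You correctly flag the counting argument as delicate, but the idea you propose to fill it is the wrong one and would not close.

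One small cosmetic point: you describe the s-polynomials as ``each homogenous of degree $2n$,'' which echoes a slip in the paper's own overview section; by Definition~\ref{def:s-poly} each s-polynomial is multilinear of degree $2d$ (in $n$ variables), and the final hard instance $\dot P$ has degree $2d+1$. This does not affect the substance of your plan.
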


It is interesting to point out that although we do not necessarily know what is the (finite) generating set of \matd\ we still can lower bound the number of generators needed to generate certain identities. \iddocomment {This should be stated in intro.}
\iddocomment[inline] {You forgot to say that the field should always be of characteristic zero in this work !}

\subsection{The lower bound}\label{sec:the-lower-bound}

We start by proving a lower bound on  $Q_{S_{2d}}$, that  is, we prove a lower bound on the number of substitution instances of $S_{2d}$ identities needed to generate a certain identity (though $S_{2d}$ is \textit{not} known to be the basis of the T-ideal of the identities over \matd) .
\begin{lemma}
For any natural $d\ge 1$ and any field $\F$ of characteristic $0$ there exists a  polynomial $P\in \matd$ of degree $2d+1$ with $n$ variables  such that $Q_{S_{2d}}(P)=\Omega(n^{2d})$.
\end{lemma}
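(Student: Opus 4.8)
The plan is to adapt the counting argument of Hrube\v{s} \cite{Hru11} from the commutator $[x,y]$ to the higher-degree standard identity $S_{2d}$. First I would isolate the structural properties of $S_{2d}$ that the argument needs: (a) $S_{2d}(g_1,\ldots,g_{2d})$ vanishes whenever two of the arguments are equal (in fact whenever the arguments are linearly dependent), so in particular it vanishes if any argument is a field element; (b) $S_{2d}$ is multilinear in its $2d$ arguments, so a substitution instance $S_{2d}(g_1,\ldots,g_{2d})$ expands linearly in each $g_i$; and (c) the leading behaviour of $S_{2d}(g_1,\ldots,g_{2d})$ is controlled by the linear (and higher homogeneous) parts of the $g_i$, which lets one pass to homogeneous substitution instances without loss. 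These are exactly the analogues of the facts used for the commutator, and I expect them to be the content of the cited Lemmas \ref{fac:S_2d-equal-zero-when-constant}, \ref{lem:s-poly-linear-property}, \ref{lem:transfer-polynomials}.

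Next I would define the hard polynomial. Following the s-polynomial construction, I would take a homogeneous multilinear polynomial $P$ in $n$ variables of degree $2d+1$ (one extra variable beyond the $2d$ ``slots'' of a single $S_{2d}$ instance, so that $P$ is a genuine identity of \matd\ of the stated degree), built so that its support — the set of degree-$(2d+1)$ multilinear monomials appearing in $P$ — is a generic/random subset of all $\binom{n}{2d+1}$ such monomials with prescribed coefficients. The key point is a dimension count: a single substitution instance $S_{2d}(g_1,\ldots,g_{2d})$, after restricting to the multilinear degree-$(2d+1)$ component, lies in a linear space of monomials of dimension only $O(n)$ (because once we fix which $2d$ variables fill the $S_{2d}$ slots and which single variable is ``extra,'' the instance is determined up to the $O(n)$-dimensional choice of linear forms, while the antisymmetrization kills most monomials); contrast this with the full space of dimension $\Theta(n^{2d+1})$ or, counting only the relevant monomial-shapes, $\Theta(n^{2d})$. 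Hence to generate a polynomial whose multilinear degree-$(2d+1)$ part is ``spread out'' across $\Omega(n^{2d})$ independent directions one needs $\Omega(n^{2d})$ substitution instances. The counting/probabilistic step shows such a spread-out $P$ exists (equivalently, most $P$ work), giving $Q_{S_{2d}}(P)=\Omega(n^{2d})$, which up to the constant is $\Omega(\binom{n}{2d})$.

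The main obstacle, as in \cite{Hru11}, is making the ``dimension of a single instance is $O(n)$'' claim precise: a substitution instance sits inside a two-sided ideal, so the element of the ideal that $P$ equals is a sum $\sum_i h_i\cdot S_{2d}(\overline g_i)\cdot \ell_i$ with arbitrary left/right multipliers $h_i,\ell_i$, and one must argue that projecting onto the multilinear degree-$(2d+1)$ component and onto a suitably chosen monomial subspace still bounds the contribution of each instance by a linear-dimensional space. The device is to pick a judicious linear functional (or a restriction to a carefully chosen monomial pattern, e.g. fixing the cyclic/interval structure of where the ``extra'' variable sits) under which each instance contributes to at most $O(n)$ coordinates while a random target $P$ has $\Omega(n^{2d})$ nonzero coordinates; a Markov/union-bound or a rank argument over the generic coefficients then finishes it. I would carry out the steps in this order: (1) record properties (a)–(c) of $S_{2d}$; (2) set up the monomial space and the projection/functional; (3) bound the image of one instance; (4) a counting argument for the existence of $P$; (5) combine into $Q_{S_{2d}}(P)=\Omega(n^{2d})$. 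I expect step (3) to be where essentially all the work lies.
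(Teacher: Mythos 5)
Your preliminary observations (a)--(c) about $S_{2d}$ are correct and do match the roles of Lemmas \ref{fac:S_2d-equal-zero-when-constant}, \ref{lem:s-poly-linear-property}, and \ref{lem:transfer-polynomials} in the paper. But the core of your plan has two genuine gaps, and together they mean the argument as sketched would not reach $\Omega(n^{2d})$.

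First, the key quantitative step is not a rank/dimension argument. You claim that each substitution instance ``lies in a linear space of monomials of dimension only $O(n)$,'' but that is only true if you \emph{fix} the linear forms $g_1,\ldots,g_{2d}$. Across all choices of the $g_i$, the union of the two-sided ideals $\langle S_{2d}(g_1,\ldots,g_{2d})\rangle$ restricted to degree $2d+1$ spans essentially the whole ambient space (it already contains every $x_{j_0}S_{2d}(x_{j_1},\ldots,x_{j_{2d}})$, which is $\Theta(n^{2d+1})$ independent polynomials), so a linear-algebraic rank bound does not apply. What one can say is that the map from the parameters (coefficients of the $g_i$ and the scalar combination coefficients) to the resulting coefficient vector is a \emph{polynomial map of degree $2d+1$}. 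The tool that controls how much of the Boolean cube of coefficient vectors such a map can hit is the Hrube\v{s}--Yehudayoff lemma (\cite{HY11}, Lemma 5): a degree-$D$ polynomial map from $\F^N$ to $\F^M$ hits at most $(2D)^N$ points of $\{0,1\}^M$. Replacing this with ``a Markov/union-bound or a rank argument over the generic coefficients'' is precisely the step where the argument would break; the union bound over a continuum of $g_i$'s has no obvious handle, and the rank argument applies to linear maps, not degree-$(2d+1)$ polynomial maps.

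Second, and more structurally, even with the correct counting lemma in hand, going after a \emph{single} degree-$(2d+1)$ s-type polynomial directly only gives $\Omega(n^{2d-1})$: with $\ell$ instances you have roughly $(2d\cdot n+1)\ell$ real parameters and the codomain is $\F^{\binom{n}{2d}}$, so $(2(2d+1))^{O(dn\ell)}\ge 2^{\binom{n}{2d}}$ yields only $\ell = \Omega(\binom{n}{2d}/n)$. The paper gets the extra factor of $n$ by a two-stage argument that your plan skips: Lemma \ref{lem:exist_for_nP} first proves hardness for a \emph{sequence} of $n$ s-polynomials $P_1,\ldots,P_n$ simultaneously, which enlarges the codomain to $\F^{n\binom{n}{2d}}$ while only increasing the number of parameters by a constant factor (to $(2d+1)n\ell$, since the $\ell$ instances are shared), so the $n$'s cancel in the exponent inequality and one gets $\ell=\Omega(\binom{n}{2d})$. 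Then Lemma \ref{lem:combine_into_one} packages $P_1,\ldots,P_n$ into a single polynomial $\dot P=\sum_{i=1}^n z_iP_i$ of degree $2d+1$ with new variables $z_1,\ldots,z_n$, and shows $Q_{S_{2d}}(\dot P)\ge\frac{1}{2d+1}Q_{S_{2d}}(P_1,\ldots,P_n)$; this last step is where Lemma \ref{lem:transfer-polynomials} and the $\anbra{\cdot}$-map are actually used. Your ``extra variable'' plays informally the role of the $z_i$, but you do not exploit the fact that there are $n$ of them carrying $n$ separately hard s-polynomials, and without that boost the target bound is unreachable by this route.

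\end{document}
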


\begin{comment}
It can be shown that the lemma also holds for any finite \textit{field }$\F$. Since in Section \ref{sec:conc-for-any-basis-of-matd} we need to assume that the field is of characteristic $0$, we  prove the lemma only for fields of characteristic 0 .
\end{comment}

For proving the lemma, we introduce the following definition:

\begin{definition}\label{def:s-poly}
A polynomial $P\in\freea$ with $n$ variables $\nx{x}$ is called an \textbf{s-polynomial} if:
$$
P=\sum_{j_1<j_2<\ldots<j_{2d}\in [n]}c_{j_1j_2...j_{2d}}\cd S_{2d}\left(x_{j_1},                                                                                                                                                                                                x_{j_2}\ldots x_{j_{2d}}\right),
$$
for some natural $d$ and constants
$c_{j_1j_2....j_{2d}}\in \set{0,1}$, for
\(j_1<j_2<\ldots<j_{2d}\in [n]\).
\end{definition}

\begin{lemma} \label{fac:S_2d-equal-zero-when-constant}
For  any $P_1,P_2,\ldots,P_{2d}\in\freea$ where $d$ is a positive integer,  $S_{2d}(P_1,P_2,\ldots,P_{2d})$ is the zero polynomial if there exists $i\in[2d]$ such that \ $P_i$ is a constant.
\end{lemma}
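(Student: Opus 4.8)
The plan is to prove Lemma \ref{fac:S_2d-equal-zero-when-constant} by exploiting the multilinearity and antisymmetry of the standard polynomial $S_{2d}$. Recall that $S_{2d}(y_1,\ldots,y_{2d}) = \sum_{\sigma\in\S_{2d}}\mathrm{sgn}(\sigma)\prod_{i=1}^{2d}y_{\sigma(i)}$, and observe first that $S_{2d}$ is \emph{multilinear} in its $2d$ arguments: each monomial in the expansion uses each argument slot exactly once, so $S_{2d}$ is linear (additive and homogeneous of degree $1$) in each $y_j$ separately. It is also \emph{alternating}: swapping two arguments $y_j$ and $y_k$ changes the sign, hence $S_{2d}$ vanishes whenever two of its arguments are equal.

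First I would reduce to the case where the constant argument is $1$ (the multiplicative unit of $\freea$). Suppose $P_i = c\in\F$ is a scalar, viewed as $c\cdot 1\in\freea$. If $c=0$, multilinearity in the $i$th slot gives $S_{2d}(\ldots,0,\ldots)=0$ immediately; if $c\neq 0$, pull the scalar out by linearity in the $i$th argument to get $S_{2d}(P_1,\ldots,P_{2d}) = c\cdot S_{2d}(P_1,\ldots,P_{i-1},1,P_{i+1},\ldots,P_{2d})$, so it suffices to show this last polynomial is zero. Now I would substitute: $S_{2d}(P_1,\ldots,1,\ldots,P_{2d})$ is obtained from the formal polynomial $S_{2d}(x_1,\ldots,x_{2d})$ by the substitution $x_i\mapsto 1$ and $x_j\mapsto P_j$ for $j\neq i$. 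So it suffices to show $S_{2d}(x_1,\ldots,x_{i-1},1,x_{i+1},\ldots,x_{2d})=0$ as an element of $\F\langle X\rangle$, since substitution instances of the zero polynomial are zero.

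The core computation is then to pair up permutations. Without loss of generality take $i=2d$ (the general case is identical, or reduce to it by the alternating property after a relabeling). Setting $x_{2d}=1$ in $S_{2d}$, group the $(2d)!$ permutations $\sigma$ according to where they send the value $2d$, i.e. according to $\sigma^{-1}(2d)=:p\in[2d]$; in the monomial $\prod_k x_{\sigma(k)}$ with $x_{2d}$ replaced by $1$, the factor in position $p$ disappears, so we get $x_{\sigma(1)}\cdots x_{\sigma(p-1)}x_{\sigma(p+1)}\cdots x_{\sigma(2d)}$, a product of the $2d-1$ variables $x_1,\ldots,x_{2d-1}$ in some order. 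I would then exhibit a sign-reversing involution on the $(2d)!$ permutations that fixes this monomial: given $\sigma$ with $\sigma^{-1}(2d)=p$ where $1\le p\le 2d-1$, swap the values in positions $p$ and $p+1$ of $\sigma$ (i.e. compose with the transposition $(p\ \ p+1)$ on the left); this moves the ``hole'' from position $p$ to position $p+1$ without changing the relative order of the remaining variables, hence the surviving monomial is unchanged, while $\mathrm{sgn}$ flips. The map is not quite an involution as stated because $p$ could equal $2d-1$ or $1$ and needs care at the ends; a cleaner formulation is to note that the $(2d-1)!$ orderings of $x_1,\ldots,x_{2d-1}$ each arise from exactly $2d$ permutations $\sigma$ (one for each choice of $p\in[2d]$ of the position of $2d$), and among these $2d$ permutations the signs are $+,-,+,-,\ldots$ alternating (moving $2d$ one position to the right multiplies the sign by $-1$), so they sum to $0$ since $2d$ is even. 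Thus every monomial cancels and the polynomial is identically zero.

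The only mild obstacle is getting the sign bookkeeping exactly right — specifically verifying that among the $2d$ permutations producing a fixed surviving monomial the signs genuinely alternate. This follows because if $\sigma$ places $2d$ in position $p$ and $\sigma'$ places it in position $p+1$ while leaving the order of $1,\ldots,2d-1$ identical, then $\sigma' = (p\ \ p+1)\circ\sigma$ in one-line terms differs from $\sigma$ by one adjacent transposition, hence $\mathrm{sgn}(\sigma')=-\mathrm{sgn}(\sigma)$; iterating, the $2d$ signs are $\mathrm{sgn}(\sigma_1), -\mathrm{sgn}(\sigma_1),\ldots$ and sum to zero because there is an even number $2d$ of them. (Equivalently, and perhaps most slickly: $S_{2d}(x_1,\ldots,x_{2d-1},1)$ is the complete alternation of $x_1\cdots x_{2d-1}\cdot 1$ over $2d$ ``slots,'' and the presence of the repeated/neutral slot together with an even number of slots forces vanishing; or one can quote the standard fact that substituting $1$ for one variable in $S_m$ gives $\pm m\cdot S_{m-1}$ up to sign when $m$ is odd and $0$ when $m$ is even — here $m=2d$ is even.) I would present the involution/pairing argument in full as it is elementary and self-contained.
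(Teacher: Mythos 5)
Your proposal is correct and matches the paper's argument essentially line for line: both group the permutations by the position in which the constant slot $i$ appears (while fixing the ordering of the other $2d-1$ arguments), factor the scalar out, and observe that the $2d$ resulting signs alternate $+,-,+,-,\ldots$ and hence sum to zero because $2d$ is even. The only cosmetic differences are your preliminary normalization to $P_i=1$ and your initial detour through a sign-reversing pairing before settling on the counting argument, which is exactly the formulation the paper uses.
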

\begin{proof}
For  a fixed $\ii\in[2d]$, we have  $P_\ii=c\in \F$.

For convenience, write the set $\set{x\in [2d]|x\neq\ii}$ as $[2d]/\ii$, the permutation $\left(\begin{array}{cccccccc}
1 &2 &\ldots& m-1&m&m+1&\ldots&2d\\                                                                                                                                                                                                                                                                                                                                i_1 &i_2 & \ldots&i_{m-1}&\ii  &i_m&\ldots&i_{2d-1}
\end{array}\right)$ as $\s_m$ where $\set{i_1,\ldots, i_{2d-1}}=[2d]/\ii$.

 Then
\begin{align*}
  S_{2d}(P_1,P_2,\ldots,P_{2d})=&\sum_{\sigma\in \S_{2d}}sgn(\sigma)\prod_{i=1}^{2d}P_{\sigma(i)}\\
  =& \prod_{\set{i_1,i_2,\ldots,i_{2d-1}}=[2d]/\ii} \sum_{m=1}^{2d} sgn(\s_m)\prod_{j=1}^{m-1}P_{i_j}P_\ii\prod_{j=m}^{2d-1}P_{i_j}\\
 =& \prod_{\set{i_1,i_2,\ldots,i_{2d-1}}=[2d]/\ii }\sum_{m=1}^{2d} sgn(\s_m)c \prod_{j=1}^{2d-1}P_{i_j}\\
    =&c\prod_{\set{i_1,i_2,\ldots,i_{2d-1}}=[2d]/\ii} \left(\sum_{m=1}^{2d} sgn(\s_m)\right) \prod_{j=1}^{2d-1}P_{i_j}\\
        =&c\prod_{\set{i_1,i_2,\ldots,i_{2d-1}}=[2d]/\ii} \left(\sum_{m=1}^{d}( sgn(\s_{2m-1})+sgn(\s_{2m}))\right) \prod_{j=1}^{2d-1}P_{i_j}\\
        =&c\prod_{\set{i_1,i_2,\ldots,i_{2d-1}}=[2d]/\ii} \left(\sum_{m=1}^d0\right) \prod_{j=1}^{2d-1}P_{i_j}\\
    =&0.
\end{align*}
\end{proof}

Any  s-polynomial  has the following property:
\iddotodo{Write ideals with $\langle ... \rangle $ notation and not $I(...)$.}
\iddofix{In the following lemma: what is $r$? Is it $n$? Also, don't use the parameter $n$ here, because it's already used by number of variables $ x_1,\ldots,x_n $.}
\begin{lemma}\label{lem:s-poly-linear-property}
Let $f$ be an s-polynomial. If there exist vectors of polynomials  $\overline{P_1},\ldots,\overline{P_r}$ with
$$f\in \ideal{S_{2d}(\overline{P_1}),\ldots,
S_{2d}(\overline{P_r})},$$
then
$$f=\sum_{i=1}^r c_iS_{2d}\left(\degr{\overline{P_i}}{1}\right).$$
\end{lemma}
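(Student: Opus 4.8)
The statement says: if $f$ is an s-polynomial and $f$ lies in the two-sided ideal generated by substitution instances $S_{2d}(\overline{P_1}),\ldots,S_{2d}(\overline{P_r})$, then in fact $f$ equals $\sum_{i=1}^r c_i\, S_{2d}\big(\degr{\overline{P_i}}{1}\big)$ for suitable scalars $c_i$, i.e.\ only the degree-1 homogeneous parts of the substituted polynomials matter, and the ideal-multipliers can be taken to be scalars. The plan is to argue by homogeneity. First I would write out the hypothesis explicitly: $f = \sum_{i\in I} h_i \cdot S_{2d}(\overline{P_i})\cdot \ell_i$ for polynomials $h_i,\ell_i\in\freea$ (relabelling so each $\overline{P_i}$ appears, possibly with repetitions, among the original $r$ generators, which only changes constants). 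Since $f$ is an s-polynomial it is homogeneous of degree exactly $2d$ and moreover homogeneous of multidegree $(1,\ldots,1)$ in the relevant variables — every monomial of $f$ is multilinear in $2d$ of the $x_j$'s. So I may replace the right-hand side by its degree-$2d$ homogeneous component without changing $f$.

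Next I would exploit the fact that $S_{2d}$ is itself homogeneous of degree $2d$: when we expand $S_{2d}(\overline{P_i})$ into homogeneous components, the component of degree $2d$ is obtained by taking, in each of the $2d$ argument slots, the homogeneous part of the substituted polynomial $P_{i,t}$ of some degree $\delta_t\ge 1$ with $\sum_t \delta_t + \deg h_i + \deg \ell_i = 2d$. But each argument must contribute degree at least $1$ (by Lemma \ref{fac:S_2d-equal-zero-when-constant}, if any argument is a constant the whole standard polynomial vanishes — so in any surviving term every slot gets degree $\ge 1$), hence $\sum_t \delta_t \ge 2d$, forcing $\sum_t\delta_t = 2d$, $\deg h_i = \deg \ell_i = 0$, and $\delta_t = 1$ for every $t$. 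Thus the degree-$2d$ part of $h_i\cdot S_{2d}(\overline{P_i})\cdot\ell_i$ equals $(h_i\ell_i)\cdot S_{2d}\big(\degr{\overline{P_i}}{1}\big)$ with $h_i,\ell_i\in\F$ constants; collecting these constants gives $f = \sum_i c_i\, S_{2d}\big(\degr{\overline{P_i}}{1}\big)$ as claimed, after re-indexing to the original $r$ generators.

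The one point that needs a little care — and which I expect to be the main (minor) obstacle — is justifying that I may pass from "$f$ is in the two-sided ideal" to a representation in which I can isolate homogeneous parts cleanly: concretely, that the degree-$2d$ homogeneous parts of the multipliers can be handled uniformly and that no cancellation between different homogeneous degrees can reintroduce higher-degree pieces of the $\overline{P_i}$. This is handled by the standard observation that taking the degree-$j$ homogeneous part is an $\F$-linear projection on $\freea$ that is multiplicative on homogeneous elements, so applying the degree-$2d$ projection to both sides of $f=\sum_i h_i\cdot S_{2d}(\overline P_i)\cdot\ell_i$ and expanding $h_i$, $\ell_i$ and each $\overline{P_i}$ into homogeneous parts reduces everything to the degree-count argument above. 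I would also remark, for use later, that one could equally write $S_{2d}\big(\degr{\overline{P_i}}{1}\big)$ since $S_{2d}$ is multilinear, so only the linear (degree-1) parts of the arguments survive — exactly the content of the lemma.
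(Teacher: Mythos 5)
Your proof is correct and takes essentially the same route as the paper's: project onto the degree-$2d$ homogeneous component, use Lemma \ref{fac:S_2d-equal-zero-when-constant} together with the multilinearity of $S_{2d}$ to show that $\degr{S_{2d}(\overline{P_i})}{b}=0$ for $b<2d$ and $\degr{S_{2d}(\overline{P_i})}{2d}=S_{2d}\bigl(\degr{\overline{P_i}}{1}\bigr)$, and then degree-count to conclude that only the degree-$0$ parts of the ideal multipliers survive. If anything, your write-up is a bit more carefully ordered than the paper's, which first asserts $f\in\ideal{\degr{S_{2d}(\overline{P_j})}{2d}}$ and only afterwards invokes the lemma that justifies that step.
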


\begin{proof}
Notice that the s-formula $f$ is $2d-$homogenous.
Thus,
$$f=\degr{f}{2d}\in \left\{\degr{h}{2d} \;\big| \; h\in \ideal{S_{2d}(\overline{P_1}),\ldots, S_{2d}(\overline{P_r})}\right\}.$$
That is

$$f\in \ideal{S_{2d}(\overline{P_{1}})^{(2d)},\ldots, S_{2d}(\overline{P_{r}})^{(2d)}}.$$

By Lemma \ref{fac:S_2d-equal-zero-when-constant},  for some $j\in[r],i\in[2d]$, the polynomial $S_{2d}(\t P_j)$  equals to the zero polynomial if some  $\t P_{j_i}$ is a constant.
Namely
$S_{2d}(\overline{P_{j}})^{(2d)}=S_{2d}
        \left(\degr{\overline{P_j}}{1}\right), \text{
for all }  j\in[r].$ Then,
$$f\in \ideal{S_{2d}\left(\degr{\overline{P_1}}{1}\right),\ldots, S_{2d}\left(\degr{\overline{P_r}}{1}\right)}.$$
That is,
\begin{align*}
f &=\sum_{j=1}^r \sum_{i=1}^{t_j} A_{ji}S_{2d}\left(\degr{\overline{P_j}}{1}\right)B_{ji}, ~~~~\text{for some  $A_{ji},B_{ji}\in\freea$.}
\end{align*}
Moreover,
$$
    \degr{A_{ji}S_{2d}
        \left(
            \degr{
                \overline{P_j}
                }{1}
        \right)B_{ji}
        }{2d}=
        \degr{A_{ji}B_{ji}}{0}S_{2d}
            \left(
                \degr{\overline{P_j}}{1}
            \right).
$$

Thus   $$f=\sum_{j=1}^r c_jS_{2d}\left(\degr{\overline{P_j}}{1}\right),$$
where $c_j$ is the constant  $\sum_{i=1}^{t_j}\degr{A_{ji}B_{ji}}{0}$, for any $j\in[r]$.
\end{proof}

\subsubsection{The counting argument}\label{sec:count-arg}

\begin{notation*}
If $B\subseteq \freea$ contains only one polynomial $g$, then we write $Q_g(\cd)$ instead of $Q_{B}(\cd)$, to simplify the writing. Note that $B$ may not be a basis for the algebra considered (e.g., we may consider identities of the \matd\ generated by some $B$, where $B$ is not a basis for (all) the identities of \matd).\iddotodo{This should come in Background?}
\end{notation*}

\begin{lemma}\label{lem:exist_for_nP}
For any field \F\ of characteristic $0$, there exist s-polynomials $\nx{P}$ which are identities of \matd\ in $n$ variables, such that $Q_{S_{2d}}(\nx{P})=\Omega (n^{2d})$ (and $Q_{S_{2d}}(\nx{P})$ is finite). \iddofix{***CHECK!***}
\end{lemma}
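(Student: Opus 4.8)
The goal is to prove Lemma~\ref{lem:exist_for_nP}: there exist $s$-polynomials $P_1,\ldots,P_n$ which are identities of \matd\ in $n$ variables, with $Q_{S_{2d}}(\nx{P})=\Omega(n^{2d})$ and finite. The plan is to run a dimension-counting argument in the spirit of Hrube\v s \cite{Hru11}, lifted from the commutator $[x,y]$ to the higher-degree standard identity $S_{2d}$, exploiting Lemma~\ref{lem:s-poly-linear-property} as the key rigidity statement.

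First I would fix the ambient vector space: let $V$ be the span of all polynomials $S_{2d}(x_{j_1},\ldots,x_{j_{2d}})$ for $j_1<\cdots<j_{2d}\in[n]$, so $\dim V = \binom{n}{2d}$ by the linear independence of these polynomials (distinct leading monomials when ordered), and every $s$-polynomial lies in $V$. Suppose for contradiction that $Q_{S_{2d}}(\nx{P}) = k$ for some $k = o(n^{2d})$; then there exist substitution instances $g_1 = S_{2d}(\overline{P_1}),\ldots,g_k = S_{2d}(\overline{P_k})$ such that every $P_i$ lies in $\langle g_1,\ldots,g_k\rangle$. By Lemma~\ref{lem:s-poly-linear-property} applied to each $s$-polynomial $P_i$ (each is $2d$-homogeneous), we get $P_i = \sum_{j=1}^k c_{ij} S_{2d}\bigl(\degr{\overline{P_j}}{1}\bigr)$. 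So all of $P_1,\ldots,P_n$ lie in the \emph{linear span} $W$ of the $k$ fixed polynomials $S_{2d}(\degr{\overline{P_1}}{1}),\ldots,S_{2d}(\degr{\overline{P_k}}{1})$, a subspace of $V$ of dimension at most $k$.

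Next I would do the counting. On one side, the number of distinct $s$-polynomials in $n$ variables is $2^{\binom{n}{2d}}$ (each coefficient vector in $\{0,1\}^{\binom{n}{2d}}$). On the other side, if we are forced to choose $P_1,\ldots,P_n$ all from a $k$-dimensional subspace $W\subseteq V$, then — and this is the step that needs care — the number of $n$-tuples of $s$-polynomials that can simultaneously sit inside \emph{some} $k$-dimensional subspace of $V$ is bounded. Here I would argue: a $k$-dimensional subspace $W$ of $V$ that is spanned by $k$ of our $s$-polynomials (one may reduce to this case, or pass to a basis expressed in the $S_{2d}(x_{j_1},\ldots,x_{j_{2d}})$ coordinates) contains at most $2^k$ of the $2^{\binom{n}{2d}}$ distinguished $\{0,1\}$-combinations that are themselves $s$-polynomials — more precisely, count the tuples: the number of ways to pick $P_1,\ldots,P_n$ so that they all lie in a common $k$-dimensional space is at most $\binom{2^{\binom{n}{2d}}}{k}\cdot (2^k)^n$ (choose $k$ spanning $s$-polynomials, then express each $P_i$ as one of at most $2^k$ $\{0,1\}$-combinations in that span). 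If $k = o(n^{2d})$ this is at most $2^{O(k\binom{n}{2d})} = 2^{o(n^{2d}\cdot n^{2d})}$; I need instead to compare against the $2^{n\cdot\binom{n}{2d}}$ freedom of choosing $n$ independent $s$-polynomials — so actually I should fix the comparison so that the "bad" count $k\cdot\binom{n}{2d} + nk$ is strictly less than $n\binom{n}{2d}$, which forces $k < n$ unless we are cleverer; hence the tuple must be chosen to genuinely require large $k$, and a pigeonhole/probabilistic existence argument yields an $n$-tuple of $s$-polynomials not all fitting in any subspace of dimension $o(n^{2d})$, giving $Q_{S_{2d}}(\nx{P}) = \Omega(n^{2d})$. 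The finiteness of $Q_{S_{2d}}(\nx{P})$ is immediate since each $P_i$ is visibly a $\{0,1\}$-combination of at most $\binom{n}{2d}$ substitution instances of $S_{2d}$, so $Q_{S_{2d}}(\nx{P}) \le \binom{n}{2d}$ always; and by the Amitsur--Levitzki Theorem each $S_{2d}(\overline x)$, hence each $P_i$, is an identity of \matd.

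The main obstacle I anticipate is getting the counting bookkeeping exactly right: the naive bound $2^k$ on $s$-polynomials in a $k$-dimensional span is too lossy, because a generic $k$-dimensional subspace need not be coordinate-aligned, and I must bound the number of $n$-tuples $(P_1,\ldots,P_n)$ of $s$-polynomials that admit \emph{some} common $k$-dimensional containing subspace — this requires carefully parametrizing such subspaces (e.g.\ by $k$ of the $P_i$'s spanning them, using that the $P_i$ are themselves $0/1$-vectors) and checking that the resulting count $2^{O(k\cdot n^{2d})}$ with the right constant stays below the total count $2^{n\cdot\binom{n}{2d}}$ precisely when $k = o(n^{2d})$. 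Following \cite{Hru11} the trick is to choose the $P_i$ to be "independent enough" — e.g.\ by a greedy/probabilistic selection where $P_{i+1}$ is chosen outside the span of anything generated so far — so that forcing all of them into a small span is impossible; I would import that argument essentially verbatim, with $[x,y]$ replaced by $S_{2d}$ and Lemma~\ref{lem:s-poly-linear-property} supplying the needed "only the degree-$1$ homogeneous parts of the substitutions matter" reduction.
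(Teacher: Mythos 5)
Your reduction via Lemma~\ref{lem:s-poly-linear-property} is the right first step, and your observations about finiteness ($Q_{S_{2d}}(P_i)\le\binom{n}{2d}$) and about Amitsur--Levitzki guaranteeing that $s$-polynomials are identities of \matd\ are both correct. But the core counting argument you sketch does not deliver $\Omega(n^{2d})$; it delivers only $\Omega(n)$, and you do not have a way to close that gap.

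Here is the concrete failure. You propose to bound the number of $n$-tuples $(P_1,\ldots,P_n)$ of $s$-polynomials that can be absorbed into a common $k$-dimensional subspace of $V$ by parametrizing the subspace by $k$ spanning $s$-polynomials and using the (correct) fact that a $k$-dimensional subspace of $\F^{\binom{n}{2d}}$ contains at most $2^{k}$ vectors from $\{0,1\}^{\binom{n}{2d}}$. Carried out, this gives at most roughly $\binom{2^{\binom{n}{2d}}}{k}\cdot 2^{kn}\le 2^{k\binom{n}{2d}+kn+O(\log k)}$ bad tuples versus $2^{n\binom{n}{2d}}$ total tuples, so the comparison forces only $k\ge n-o(n)$. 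You notice this yourself (``which forces $k<n$ unless we are cleverer''), but the proposed fixes --- tightening the bookkeeping, a greedy/probabilistic selection --- do not escape it: the term $k\binom{n}{2d}$ in the exponent is inherent to any argument that tries to enumerate subspaces by choosing $k$ $s$-polynomials to span them, because there are simply that many such choices. No amount of greedy selection changes the count.

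The ingredient you are missing is the one the paper imports from \cite{HY11} (Lemma 5 there): if $\mu:\F^{N}\to\F^{M}$ is a polynomial map of degree $\delta$, then $|\mu(\F^{N})\cap\{0,1\}^{M}|\le(2\delta)^{N}$. The point is that the bound depends only on the \emph{source} dimension $N$, not on the target $M$ and not on any enumeration of subspaces. In the setting at hand, after applying Lemma~\ref{lem:s-poly-linear-property}, the coefficient vector of the tuple $(P_1,\ldots,P_n)$ in the basis $\{S_{2d}(x_{j_1},\ldots,x_{j_{2d}})\}$ is a polynomial function of the linear-substitution coefficients $a_{mj\mu}$ and the scalar multipliers $c_{ij}$; these number only $(2d+1)n\ell$, where $\ell$ is the worst-case value of $Q_{S_{2d}}$ over all tuples. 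This defines a degree-$(2d+1)$ polynomial map $\phi:\F^{(2d+1)n\ell}\to\F^{n\binom{n}{2d}}$ whose image must cover all $2^{n\binom{n}{2d}}$ possible $\{0,1\}$-coefficient vectors, and the cited lemma then forces $(2(2d+1))^{(2d+1)n\ell}\ge 2^{n\binom{n}{2d}}$, hence $\ell=\Omega(n^{2d})$. The decisive structural fact --- also the reason a subspace count cannot work over an infinite field, where there is a continuum of candidate subspaces --- is that the number of free parameters is $O(n\ell)$, not $O(k\binom{n}{2d})$. You need to replace your subspace enumeration with this polynomial-map image bound; without it the lower bound stalls at $\Omega(n)$.
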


In Section \ref{sec:conc-for-any-basis-of-matd} we show that, if $\F$ is of characteristic $0$ then  this lower bound holds for \emph{any finite basis of} \matd, namely for $Q_B$, where $B$ is any finite basis of \matd.
\begin{proof}
We prove by a generalization of the counting argument from \cite{Hru11} that there exists a sequence of polynomials $P_1,P_2,\ldots,P_n$  that require  $\Omega \left(n^{2d}\right)$ substitution instances of the $S_{2d}(x_1,\ldots,x_{2d})$ identities to generate (all of the polynomials in the sequence) in a two-sided ideal.

Recall that an s-polynomial (Definition \ref{def:s-poly}) is of the following form:

\begin{equation}\label{eq:Pi-something}
\sum_{j_1<j_2<\ldots<j_{2d}\in [n]}c_{i_{j_1j_2\cdots j_{2d}}}S_{2d}(x_{j_1},x_{j_2},\ldots,x_{j_{2d}}),~\hbox{ where }c_{i_{j_1j_2\cdots j_{2d}}}\in \set{0,1}.
\end{equation}
\smallskip

Assume that

$$\l=\max \left\{Q_{S_{2d}}(\nx{P}) \;:\; \text{ $P_i$ is an s-polynomial, for all $i\in[n]$}\right\}.$$

 Then for any choice of $n$ s-polynomials $P_1,\ldots,P_n$ there are $\l$ vectors of polynomials $\overline{Q_1},\ldots,\overline{Q_{\l}}$ from \freea, such that
$$\nx{P}\in \ideal{S_{2d}(\overline{Q_1}),\ldots, S_{2d}(\overline{Q_{\l}})}.$$
By Lemma \ref{lem:s-poly-linear-property},
for any choice of $P_1,\ldots,P_n$ and $\overline Q_1,\ldots,\overline Q_\l$, for every $i\in[n]$:
\begin{align*}
P_i=\sum_{j=1}^{\l} c_{i_j}S_{2d}\left(\overline{Q_{j}}^{(1)}\right)&=\sum_{j=1}^{\l} c_{i_j}S_{2d}\left(\sum_{m=1}^n a_{mj_1}x_m,\sum_{m=1}^n a_{mj_2}x_m,\ldots,\sum_{m=1}^n a_{mj_{2d}}x_m\right)\\
&~~~~~~~~~~~~~~~~~~~~~~~~~~~~~~~~~~~~~~~(\text{for some } c_{i_j},a_{mj_{k}}\in \F).
\end{align*}

Consider a vector $(c_{1_j} ,\ldots, c_{n_j},a_{k1m},\ldots,
a_{k(2d)m})\;(m\in [n],k\in[\l]).$ By linearity of $S_{2d}$:
\begin{align}\label{eq:what-induces-map}
\sum_{k=1}^{\l}
    c_{i_k} S_{2d}
        \left(
            \sum_{m=1}^n
                a_{k1m} x_m,
                    \sum_{m=1}^n a_{k2m} x_m,\ldots,\sum_{m=1}^n a_{k(2d)m} x_m
        \right) = \text{ }
\\
\sum_{j_1<j_2<\ldots<j_{2d}\in[n]}
    c_{i_{j_1j_2\cdots j_{2d}}}
        S_{2d}(x_{j_1},x_{j_2},\ldots,x_{j_{2d}})
~~~~~~~~(\text{where }   c_{i_{j_1j_2\cdots j_{2d}}}\in \F).
\end{align}
A  \textit{polynomial map} $\mu : \F^n \rightarrow \F^m$ of degree $d > 0$, is a map
$\mu = (\mu_1, \ldots, \mu_m)$, where each $\mu_i$ is a (commutative) polynomial of degree $d$ with $n$ variables.

\begin{claim*}
Consider the coefficients $c_{1_j} ,\ldots, c_{n_j},a_{k1m},\ldots, a_{k(2d)m}$ and the coefficients $ c_{i_{j_1j_2\cdots j_{2d}}} $ in Equation \ref{eq:what-induces-map} \emph{as variables}. Then, Equation \ref{eq:what-induces-map} defines a degree-$(2d+1)$ polynomial map $\phi:\F^{({2d}+1)nl}\to \F^{n{ n\choose {2d}}}$ that maps each vector
\[
(c_{1_j} ,\ldots, c_{n_j},a_{k1m},\ldots,a_{k(2d)m}),~~\hbox{ for } m\in [n],k\in[\l],
\]
to
\[
(c_{1_{j_1j_2\cdots j_{2d}}} ,\ldots, c_{n_{j_1j_2\cdots j_{2d}}}),\hbox{ ~~for~} j_1<j_2<\ldots<j_{2d}\in[n].
\]
\end{claim*}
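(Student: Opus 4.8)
The plan is to verify the two defining requirements of a polynomial map separately: first that the output coordinates are genuinely polynomials in the input variables (of the claimed degree), and second that the claimed source and target dimensions are correct. The key observation is that $S_{2d}$ is \emph{multilinear} in its $2d$ arguments (this is the standard identity, a signed sum over permutations of products $\prod x_{\sigma(i)}$, so it is homogeneous of degree $1$ in each slot). Hence when we substitute $\sum_{m=1}^n a_{kim}x_m$ into the $i$th slot of $S_{2d}$, expanding by multilinearity gives a sum of terms, each a product of $2d$ of the coefficients $a_{k1m_1},a_{k2m_2},\ldots,a_{k(2d)m_{2d}}$ times a monomial $x_{m_1}x_{m_2}\cdots x_{m_{2d}}$. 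Multiplying further by the scalar $c_{i_k}$ contributes one more factor, so each term on the left-hand side of Equation \ref{eq:what-induces-map}, after collecting the non-commutative monomials in the $x$'s, has coefficient which is a sum of monomials of degree exactly $2d+1$ in the variables $c_{i_k}, a_{kim}$.

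Next I would describe how the coefficients of the $x$-monomials collapse onto the basis $\{S_{2d}(x_{j_1},\ldots,x_{j_{2d}}) : j_1<\cdots<j_{2d}\}$. Since $S_{2d}$ is alternating, $S_{2d}(x_{m_1},\ldots,x_{m_{2d}})$ vanishes whenever two indices coincide and equals $\pm S_{2d}(x_{j_1},\ldots,x_{j_{2d}})$ (with the sign of the sorting permutation) when $\{m_1,\ldots,m_{2d}\}=\{j_1<\cdots<j_{2d}\}$. Therefore, grouping the expansion of the $i$th component of the left-hand side, the coefficient $c_{i_{j_1j_2\cdots j_{2d}}}$ of $S_{2d}(x_{j_1},\ldots,x_{j_{2d}})$ is obtained as an explicit $\F$-linear combination (with $\pm1$ coefficients coming from signs of permutations) of the degree-$(2d+1)$ monomials $c_{i_k}\cdot a_{k1\sigma(j_1)}a_{k2\sigma(j_2)}\cdots a_{k(2d)\sigma(j_{2d})}$ over $k\in[\l]$ and permutations $\sigma$; in particular it is a homogeneous polynomial of degree $2d+1$ in the input variables. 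This exhibits each output coordinate as a polynomial of degree $2d+1$, so the map $\phi$ is a polynomial map of degree $2d+1$ in the sense defined just above the Claim.

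Finally I would count dimensions. The input tuple ranges over the variables $c_{i_k}$ for $i\in[n], k\in[\l]$, contributing $n\l$ coordinates, together with $a_{kim}$ for $k\in[\l], i\in[2d], m\in[n]$, contributing $2d\,n\l$ coordinates; the total is $(2d+1)n\l$, matching the stated source $\F^{(2d+1)n\l}$. The output tuple consists of the coefficients $c_{i_{j_1j_2\cdots j_{2d}}}$ indexed by $i\in[n]$ and by increasing $2d$-tuples $j_1<\cdots<j_{2d}$ in $[n]$, of which there are $n\binom{n}{2d}$, matching the target $\F^{n\binom{n}{2d}}$. Assembling the $n$ component polynomials $\phi_i$ (one for each $i\in[n]$, each itself an $\binom{n}{2d}$-tuple of degree-$(2d+1)$ polynomials) into a single map $\phi=(\phi_1,\ldots,\phi_n)$ completes the argument.

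I do not anticipate a serious obstacle here: this Claim is essentially a bookkeeping statement, and the only real content is the multilinearity and alternating property of $S_{2d}$, which has already been recorded (Lemma \ref{fac:S_2d-equal-zero-when-constant} gives the vanishing-on-constants behaviour, and the alternating property is immediate from the definition). The one point requiring mild care is making sure that the collection on the right-hand side of Equation \ref{eq:what-induces-map} really is indexed by \emph{increasing} tuples $j_1<\cdots<j_{2d}$ — i.e.\ that we have chosen a basis of the multilinear degree-$2d$ span of $S_{2d}$-instances and are not double-counting — but this follows directly from the alternating identity $S_{2d}(x_{\sigma(j_1)},\ldots,x_{\sigma(j_{2d})})=\mathrm{sgn}(\sigma)S_{2d}(x_{j_1},\ldots,x_{j_{2d}})$.
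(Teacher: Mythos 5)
Your proposal is correct. Note that the paper itself explicitly \emph{omits} the details of this claim (``We omit the details of the proof of this claim''), so there is no authorial proof to compare against; your argument supplies the natural bookkeeping that the authors presumably had in mind. The three ingredients you isolate — multilinearity of $S_{2d}$ in its $2d$ slots, its alternating property (so that $S_{2d}(x_{m_1},\ldots,x_{m_{2d}})$ collapses, up to sign, onto a representative with strictly increasing indices, or vanishes when indices repeat), and the dimension count $n\ell + 2dn\ell = (2d+1)n\ell$ on the source and $n\binom{n}{2d}$ on the target — are exactly what is needed, and your degree count (each output coordinate is a polynomial of degree $1 + 2d$ in the $c_{i_k}$'s and $a_{kim}$'s) matches the definition of a degree-$(2d+1)$ polynomial map stated just above the claim. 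One small remark: you cite Lemma \ref{fac:S_2d-equal-zero-when-constant} as recording the vanishing behaviour, but that lemma concerns substituting a \emph{constant} into a slot of $S_{2d}$, which is not what is used here; what you actually need is the alternating property of $S_{2d}$ (vanishing on repeated arguments), which, as you also say, follows directly from the definition of the standard polynomial. The proof does not need that lemma at all, so the citation is harmless but superfluous.
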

We omit the details of the proof of this claim.
We have the following lemma:
\begin{lemma}[\cite{HY11}, Lemma 5]
\iddofixl{Say that the polynomials are commutative here.}
For any field $\F$, if  $\mu$ : $\F^n \rightarrow \F^m$ is a  polynomial map of degree $d > 0$,  then $\left|\mu(\F^n) \bigcap\{0, 1\}^m\right| \leq  (2d)^n$.
\end{lemma}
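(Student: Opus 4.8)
The plan is to turn the statement into a counting problem about the irreducible components of a single algebraic set, estimated in two ways. First I would pass to the algebraic closure $\overline{\F}$ of $\F$: since $\{0,1\}\subseteq\F\subseteq\overline{\F}$ and $\mu(\F^n)\subseteq\mu(\overline{\F}^n)$, it suffices to prove $\big|\mu(\overline{\F}^n)\cap\{0,1\}^m\big|\le(2d)^n$. Put $q_i:=\mu_i^2-\mu_i$ for $i\in[m]$; each $q_i$ has degree at most $2d$ in the $n$ variables, and the preimage set $Z:=\{a\in\overline{\F}^n:\mu(a)\in\{0,1\}^m\}$ is exactly the affine variety $\mathcal Z(q_1,\ldots,q_m)$. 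Everything then reduces to a two-sided estimate on the number of irreducible components of $Z$.

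For the lower bound, write $V:=\mu(\overline{\F}^n)\cap\{0,1\}^m$ and observe that $Z=\bigsqcup_{v\in V}\mu^{-1}(v)$ is a \emph{disjoint} union of the \emph{nonempty} Zariski-closed sets $\mu^{-1}(v)=\mathcal Z(\mu_1-v_1,\ldots,\mu_m-v_m)$. Since these pieces are closed and pairwise disjoint, no irreducible subset of $Z$ can meet two of them, hence the irreducible components of $Z$ are precisely the components of the individual fibers; as each of the $|V|$ fibers contributes at least one component, $Z$ has at least $|V|$ irreducible components. For the upper bound, $Z=\mathcal Z(q_1,\ldots,q_m)$ is cut out by polynomials of degree $\le 2d$ in $n$ variables, so a B\'ezout-type bound on the geometric degree of an algebraic set (Heintz) gives that $Z$ has at most $(2d)^n$ irreducible components. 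Combining the two estimates yields $|V|\le(2d)^n$, which is the claim. In order, the steps are: (1) reduce to $\overline{\F}$; (2) identify $Z$ with $\mathcal Z(\mu_1^2-\mu_1,\ldots,\mu_m^2-\mu_m)$ and decompose it as the disjoint union of the nonempty fibers $\mu^{-1}(v)$, $v\in V$; (3) conclude $\#\{\text{components of }Z\}\ge|V|$ from disjointness; (4) bound $\#\{\text{components of }Z\}\le(2d)^n$ via B\'ezout/Heintz.

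The main obstacle — the one genuinely non-elementary input, and the step that produces both the exponent $n$ and the base $2d$ in the final bound — is step (4): a crude estimate such as $\binom{n+2d}{n}$ on the number of components would not be good enough, so one really needs the sharp fact that a variety in $\overline{\F}^n$ defined by polynomials of degree $\le D$ has at most $D^n$ irreducible components. I would either quote Heintz's theorem directly, or, to keep things self-contained, derive the instance needed by induction on $n$ from the B\'ezout inequality $\deg(X\cap\mathcal Z(f))\le\deg X\cdot\deg f$ together with the fact that a hypersurface $\mathcal Z(f)$ has degree $\le\deg f$. A secondary point to be careful about is step (3): that the irreducible components of a finite disjoint union of closed sets are exactly the components of the summands (because an irreducible closed set cannot be split by the finitely many closed-and-open pieces), so that each nonempty fiber really does contribute at least one component of $Z$.
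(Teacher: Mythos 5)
The paper does not prove this lemma; it cites it to [HY11] without reproducing an argument, so there is no in-paper proof to compare against. Assessing your proposal on its own merits: the overall structure is sound and almost certainly in the same spirit as what a careful proof must do. Passing to $\overline{\F}$ is correct (only an upper bound is needed, and $\mu(\F^n)\cap\{0,1\}^m\subseteq\mu(\overline{\F}^n)\cap\{0,1\}^m$), the identification $Z=\mathcal Z(\mu_1^2-\mu_1,\ldots,\mu_m^2-\mu_m)$ is exact, and step (3) is fine: an irreducible closed set cannot meet two members of a finite collection of pairwise disjoint closed sets, so the irreducible components of $Z$ are distributed among the (nonempty) fibers $\mu^{-1}(v)$, $v\in V$, giving at least $|V|$ components.

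The place where you should be more careful is exactly the place you flag, step (4) — but the fallback you sketch does not actually close it. Iterating $\deg(X\cap\mathcal Z(f))\le\deg X\cdot\deg f$ over the $m$ defining polynomials $q_1,\ldots,q_m$ yields $(2d)^m$, and here $m$ can be arbitrarily large compared to $n$; no ``induction on $n$'' is visible in that chain, since intersecting with a hypersurface does not reduce the ambient dimension. To get $(2d)^n$ one genuinely needs the stronger affine B\'ezout/Heintz statement: a Zariski-closed subset of $\overline{\F}^n$ cut out by polynomials of degree at most $D$ has (cumulative) degree, hence number of irreducible components, at most $D^n$, \emph{independently of how many defining polynomials there are}. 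The proof of that statement replaces the $m$ given polynomials by at most $n$ generic $\overline{\F}$-linear combinations $g_1,\ldots,g_n$ and shows every irreducible component of $Z$ survives as a component of $\mathcal Z(g_1,\ldots,g_n)$; then the $n$-fold B\'ezout chain gives $D^n$. So: quoting Heintz directly, as you also propose, is the right move and makes the argument correct; the ``self-contained induction on $n$'' as written would need to be replaced by that generic-combination reduction to be an actual proof.
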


Thus, for the degree-$(2d+1)$ polynomial map $\phi:\F^{({2d}+1)nl}\to \F^{n{ n\choose {2d}}}$, we have
 $$|\phi(\F^{({2d}+1)nl})\bigcap\set{0,1}^{n{ n\choose {2d}}}|\leq  (2(2d+1))^{({2d}+1)nl}.$$
\smallskip
%
%

Recall that for any choice of $n$ s-polynomials $P_1,\ldots,P_n$ there are $\l$ vectors of polynomials $\overline{Q_1},\ldots,\overline{  Q_{\l}}$ from \freea, such that
$$\nx{P}\in \ideal{S_{2d}(\overline{Q_1}),\ldots, S_{2d}(\overline{Q_{\l}})}.$$

For convenience, we use  $\t{\mathcal C}$ for the $0-1$ vector $(c_{1_{j_1j_2\cdots j_{2d}}} ,\ldots, c_{n_{j_1j_2\cdots j_{2d}}})$, where $c_{i_{j_1j_2\cdots j_{2d}}}\in \set{0,1},i\in[n],j_1<j_2<\ldots<j_{2d}\in[n]$.
Since for every possible $\t{\mathcal C}$, the following polynomials are s-polynomials:
$$\sum_{j_1<j_2<\ldots<j_{2d}\in [n]}\mathcal C_{1_{j_1 j_2\cdots j_{2d}}} S_{2d}(x_{j_1},x_{j_2},\ldots,x_{j_{2d}}),
~~~~\ldots,~~~\sum_{j_1<j_2<\ldots<j_{2d}\in [n]}\mathcal C_{n_{j_1 j_2\cdots j_{2d}}} S_{2d}(x_{j_1},x_{j_2},\ldots,x_{j_{2d}}),$$

 there exist $\l$ vectors of polynomials $\overline{Q_1},\ldots,\overline{Q_{\l}}$ in \freea, such that
$$\sum_{j_1<j_2<\ldots<j_{2d}\in [n]}\mathcal C_{i_{j_1 j_2\cdots j_{2d}}} S_{2d}(x_{j_1},x_{j_2},\ldots,x_{j_{2d}})\in \ideal{S_{2d}(\overline{Q_1}),\ldots, S_{2d}(\overline{Q_{\l}})},i\in[n].$$
That is,  there exists a vector $(c_{1_j} ,\ldots, c_{n_j},a_{k1m},\ldots,a_{k(2d)m})\;(m\in [n],k\in[\l]),$ such that $\phi(c_{1_j} ,\ldots, c_{n_j},a_{k1m},\ldots,a_{k(2d)m})=\t{\mathcal C}$.

\smallskip
Therefore,  every possible $\t{\mathcal C}$ belongs to $\phi(\F^{({2d}+1)nl})\bigcap\set{0,1}^{n{ n\choose {2d}}}.$

Further there are $2^{n{ n\choose {2d}}}$ distinct vectors $\t{\mathcal C}=(c_{1_{j_1j_2\cdots j_{2d}}} ,\ldots, c_{n_{j_1j_2\cdots j_{2d}}})$, where $c_{i_{j_1j_2\cdots j_{2d}}}\in \set{0,1},i\in[n],j_1<,\ldots, <j_{2d}\in[n]$.
Hence,
$$|\phi(\F^{({2d}+1)nl})\bigcap\set{0,1}^{n{ n\choose {2d}}}|\ge 2^{n{ n\choose {2d}}}.$$

This implies that
\begin{equation}
(2(2d+1))^{({2d}+1)nl}\ge 2^{n{ n\choose {2d}}}.
\end{equation}
Using the $\ln$ function on both sides:
$$(2d+1)nl\ln (2(2d+1)) \ge n {n\choose{2d}}\ln 2.$$
Hence,
\begin{equation}\label{eq:count_constant_of_dimension_1}
  l> \frac{{n\choose {2d}}\ln 2}{(2d+1)\ln (4d+2)}.
\end{equation}
Namely
\begin{align*}
l&> c{n\choose {2d}}=c\frac{n(n-1)\dots(n-2d+1)}{d!}=\Omega\left(n^{2d}\right)\\
&~~~~~~~~~~~~~~~~~~(\text{where }c\in \F), \hbox{  hence}
\end{align*}
$$l=\Omega\left(n^{2d}\right).$$
\end{proof}

\subsubsection{Combining the polynomials into one}

Here we show that there exists already a \textit{single}
polynomial, denoted $\dot{P}$   such that $Q_{S_{2d}}(\dot{P})=\Omega(n^{2d})$. This is done in a manner which is  similar to the work of Hrube\v s \cite{Hru11}; however, there is a further complication here, which is dealt via the technical Lemma \ref{lem:transfer-polynomials}.

 \begin{lemma}
\label{lem:combine_into_one}
Let $\nx{P}$ be s-polynomials in n variables $\nx{x}$, and  let $\nx{z}$ be new variables,  different from $\nx{x}$. Let $\dot{P}$:=$\sum_{i=1}^n z_iP_i$. Then
\begin{equation}\label{eq:count_constant_of_dimension_2}
  Q_{S_{2d}}(\dot{P})\geq\frac{1}{2d+1}Q_{S_{2d}}(\nx{P}).
\end{equation}
Specifically, for any field \F\ of characteristic 0 and every $d\ge 1$, there exists a polynomial with $n$ variables such that
$Q_{S_{2d}}(\dot{P})=\Omega(n^{2d})$.
\end{lemma}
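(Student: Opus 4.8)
The plan is to reduce $Q_{S_{2d}}(\dot P)$ to $Q_{S_{2d}}(\nx{P})$ by a homogeneity-and-substitution argument, following the template of Hrube\v s \cite{Hru11} but inserting the technical Lemma \ref{lem:transfer-polynomials} to handle the presence of the new variables $z_i$. First I would set $q:=Q_{S_{2d}}(\dot P)$ and fix vectors of polynomials $\overline{R_1},\ldots,\overline{R_q}\in\freea$ (each $\overline{R_j}=(R_{j,1},\ldots,R_{j,2d})$) with $\dot P\in\ideal{S_{2d}(\overline{R_1}),\ldots,S_{2d}(\overline{R_q})}$, so $\dot P=\sum_{j=1}^q\sum_i A_{ji}\,S_{2d}(\overline{R_j})\,B_{ji}$ for some $A_{ji},B_{ji}\in\freea$. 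Since $\dot P=\sum_{i=1}^n z_iP_i$ and each $P_i$ is an s-polynomial, $\dot P$ is homogeneous of total degree $2d+1$, homogeneous of degree $1$ in the $z$-variables together and degree $2d$ in the $x$-variables together. Passing to the relevant multihomogeneous component of the equation, exactly as in Lemma \ref{lem:s-poly-linear-property}, each surviving generator $S_{2d}(\overline{R_j})$ contributes through $S_{2d}$ applied to the degree-$1$ parts of its arguments; by Lemma \ref{fac:S_2d-equal-zero-when-constant} any generator whose degree-$1$ part of some argument is constant vanishes, so we may assume each $\overline{R_j}^{(1)}$ consists of honest linear forms in $x$'s and $z$'s.

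The core of the argument is then to remove the $z$-variables. Each generator $S_{2d}(\overline{R_j}^{(1)})$, being $S_{2d}$ of linear forms, is multilinear; in the component of $\dot P$ of $z$-degree $1$ it can contribute a $z$-variable in at most one of its $2d$ slots. I would group the generators by which slot carries the $z$-dependence — this is where the factor $\tfrac{1}{2d+1}$ (here $2d$ slots plus the ``ambient'' $A_{ji},B_{ji}$ contributing the $z$) comes from, by pigeonhole — and for one such group substitute the $z$-variables by appropriate field elements (or by $1$) to read off, via Lemma \ref{lem:transfer-polynomials}, a generation of a fixed nonzero linear combination $\sum_i\lambda_iP_i$ of the $P_i$'s using at most $q$ (in fact $\le 2d\,q/(2d+1)$, say) substitution instances of $S_{2d}$. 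A little more care — varying the substitution of the $z$'s over $n$ linearly independent choices, or equivalently differentiating with respect to each $z_i$ — recovers each individual $P_i$ in the ideal generated by the same collection, giving $Q_{S_{2d}}(\nx{P})\le(2d+1)\,q$, i.e. the displayed inequality \eqref{eq:count_constant_of_dimension_2}. Combining with Lemma \ref{lem:exist_for_nP}, which supplies s-polynomials $\nx{P}$ (identities of \matd) with $Q_{S_{2d}}(\nx{P})=\Omega(n^{2d})$, and noting that $\dot P=\sum_i z_iP_i$ is still an identity of \matd\ (each $P_i$ is, and identities form a two-sided ideal), yields a single polynomial $\dot P$ in $2n$ variables with $Q_{S_{2d}}(\dot P)=\Omega(n^{2d})$.

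The step I expect to be the main obstacle is the clean passage from a generation of $\dot P$ to a generation of the $P_i$'s: because the ideal is two-sided and multiplication is noncommutative, substituting constants for the $z_i$ in the identity $\dot P=\sum_{j,i}A_{ji}S_{2d}(\overline{R_j})B_{ji}$ does not immediately give an ideal membership statement for the $P_i$ with a controlled number of generators — the $z$'s may appear inside the $A_{ji},B_{ji}$ and inside the $\overline{R_j}$ simultaneously. This is precisely the complication that Lemma \ref{lem:transfer-polynomials} is designed to resolve, and the proof will hinge on applying it to transfer the $z$-free information out of the generators $S_{2d}(\overline{R_j})$ while keeping the count of generators linear in $q$ up to the factor $2d+1$. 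The remaining bookkeeping — tracking homogeneous degrees and the pigeonhole over the $2d+1$ possible locations of the $z$-dependence — is routine.
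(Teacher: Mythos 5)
Your proposal has the right skeleton and you have correctly identified the key lemma: reduce to the multihomogeneous $z$-degree-$1$ component, note there are $2d+1$ ``locations'' the $z$ can occupy (one of the $2d$ arguments of $S_{2d}$, or the ambient multipliers $A_{ji},B_{ji}$), and use Lemma~\ref{lem:transfer-polynomials} to handle the case where the $z$ sits inside an argument of $S_{2d}$. The factor $1/(2d+1)$ is exactly the expansion you describe. However, the concrete mechanism you sketch---``pigeonhole over the locations, then substitute field elements for the $z_i$''---is not the paper's mechanism, and as you yourself note at the end it leaves a real gap: substituting constants for the $z_i$ into $\dot P=\sum_{j,i}A_{ji}S_{2d}(\overline{R_j})B_{ji}$ changes the generators $S_{2d}(\overline{R_j})$ along with everything else, so you cannot read off a fixed, $Z$-independent generator set that works uniformly. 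Pigeonhole over ``which group is small'' does not repair this, because a single generator can carry $z$-dependence in several locations simultaneously (in $A_{ji}$ \emph{and} $B_{ji}$ \emph{and} in an argument), so the groups are not disjoint.

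What the paper does instead, and what your sketch is missing, is a purely formal linear operator $\anbra{\cdot}$ on polynomials: it kills all monomials of $z$-degree $\neq 1$, and on a monomial $M=M_1\,z\,M_2$ of $z$-degree exactly $1$ (with $M_1,M_2$ $Z$-free) it returns $z\,M_2\,M_1$, i.e.\ a cyclic shift bringing the $z$ to the very front. Since $\dot P = \anbra{\dot P}$, one applies $\anbra{\cdot}$ to the ideal-membership expression and decomposes by where the $z$-degree-$1$ part comes from. When the $z$ is in $A_{ji}$ or $B_{ji}$, the cyclic shift directly places the surviving expression inside $\ideal{S_{2d}(\overline{\mathcal H_j})}$ where $\overline{\mathcal H_j}$ is the $Z$-degree-$0$ part of $\overline{R_j}$; when the $z$ is in the $k$-th argument, Lemma~\ref{lem:transfer-polynomials} shows the result lies in $\ideal{S_{2d}(\overline{\mathcal H_j}|_{\mathcal H_{j,k}\leftarrow\sum_i\mathcal G_{ji}\mathcal F_{ji}})}$. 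This gives $1+2d$ explicitly $Z$-independent generators per original generator, with no substitution of $z$'s anywhere in this main step. The substitution $z_1\leftarrow 1$, $z_j\leftarrow 0$ appears only in a preliminary claim, once the generators are already $Z$-independent, to convert generation of $\dot P$ into generation of each $P_i$ by the \emph{same} set. So: correct target, correct lemma, but the $\anbra{\cdot}$ map---not pigeonhole-plus-substitution---is what actually carries the argument.
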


\begin{proof}
For convenience, call the new variables $\nx{z}$  the $Z$-variables. Given a polynomial $f$, the  \textbf{\emph{$Z$-homogenous part of degree $j$ of $f$}}, denoted $\zd{f}{(j)}$, is the sum of all monomials where the total degree  of the $Z$-variables is $j$. For example if $f=z_1xy+z_2z_1+z_3x+1+x$, then $\zd{f}{1}=z_1xy+z_3x$, $\zd{f}{2}=z_2z_1$, $\zd{f}{0}=1+x$.
A polynomial  that does not contain any $Z$-variable is said to be \emph{$Z$-independent}.

First, we claim the $\dot{P}$ has the following property:
\begin{claim*}
For any    $\l$ $Z$-independent polynomials $\overline  G_{1},\overline G_2,\ldots,\overline G_{\l}\in \freea $, if
$$\dot{P}\in \ideal{S_{2d}(\t G_{1}),\ldots,S_{2d}(\t G_{\l}) },$$
then
$$\nx{P}\in \ideal{  S_{2d}(\t G_{1}),\ldots,S_{2d}(\t G_{\l})}.$$
\end{claim*}
\begin{proofclaim}
Since $\dot{P}\in \ideal{S_{2d}(\t G_{1}),\ldots,S_{2d}(\t G_{\l})}$,
$$\dot{P}=\sum_{i=1}^n z_iP_i=\sum_{j=1}^\l \sum_{i=1}^{t_j} f_{ji}S_{2d}(\t G_{j})g_{ji}, ~~~~\text{for some  $f_{ji},g_{ji}\in\freea$}.$$
Now, assign   $z_1=1, z_2=z_3=\dots=z_n=0$ in $\dot{P}$. Since  $\t G_1,\ldots,\t G_{\l}$ do not  contain  $\nx{z}$, the $\t G_1,\ldots,\t G_{\l}$ will remain the same. Thus,
$$P_1=\sum_{j=1}^\l \sum_{i=1}^{t_j} f'_{ji}S_{2d}(\t G_{j})g'_{ji},$$
where $f'_{ji}=f_{ji}|_{z_1\leftarrow 1,z_2\leftarrow 0,\ldots,z_n\leftarrow 0},g_{ji}'=g_{ji}|_{z_1\leftarrow 1,z_2\leftarrow 0,\ldots,z_n\leftarrow 0}$.
Namely, $P_1\in \ideal{  S_{2d}(\t G_{1}),\ldots,S_{2d}(\t G_{\l})}$.

Similarly, we can show $P_2,\ldots,P_n\in \ideal{  S_{2d}(\t G_{1}),\ldots,S_{2d}(\t G_{\l})}$.
Therefore,  $$\nx{P}\in \ideal{  S_{2d}(\t G_{1}),\ldots,S_{2d}(\t G_{\l})}.$$
\end{proofclaim}

In the following, assume $Q_{S_{2d}}(\dot{P})=\l$.
 That is, there are   $k$ vectors of polynomials $\overline  G_{1},\overline G_2,\ldots,\overline G_{\l}$ such that
 $$\dot{P}\in \ideal{S_{2d}(\t G_{1}),\ldots,S_{2d}(\t G_{\l})}.$$
 Namely $$\dot{P}=\sum_{i=1}^n z_iP_i=\sum_{j=1}^\l \sum_{i=1}^{t_j} f_{ji}S_{2d}(\t G_{j})g_{ji}, ~~~~\text{for some  $f_{ji},g_{ji}\in\freea$}.$$
If we can find   $\Number$ $Z$-independent
 vector of polynomials $\overline  G_{1},\overline G_2,\ldots,\overline G_{\Number}$ such that
$$\dot{P}=\sum_{j=1}^\l \sum_{i=1}^{t_j} f_{ji}S_{2d}(\t G_{j})g_{ji}\in \ideal{S_{2d}(\t G_{1}),\ldots,S_{2d}(\t G_{\Number}) }.
  $$
then we can, by the above claim, show that
$$\nx{P}\in \ideal{  S_{2d}(\t G_{1}),\ldots,S_{2d}(\t G_{\Number})},$$
which is the conclusion we want to prove: $$Q_{S_{2d}}(\nx{P})\leq\Number.$$

Now, to find the $\Number$ $Z$-independent vectors of polynomials $\overline G_{1},\overline G_2,\ldots,\overline G_{\Number}$ which generate $\dot{P}$, let $\anbra{\cdot}$ be a map that maps a polynomial $P\in \freea$ to a polynomial $\anbra{P}$ that is defined by the following three properties:

\begin{enumerate}
\item The map $\anbra{\cdot}$ is linear, namely $\anbra{\alpha G+\beta H}=\alpha\anbra{G}+\beta\anbra{H}$ for any polynomials $G,H$ and $\alpha,\beta$ $\in\F$;
and\item Let  $M$ be a monomial whose  $Z$-homogenous part is of degree $1$.  Thus,  $M$  can be uniquely written as $M_1 z_{i}M_2,z_{i}\in Z,$ where $M_1,M_2$ are $Z$-independent.
Then $$\anbra{M}=\anbra{M_1 zM_2}=zM_2 M_1\,;~~~\hbox{and}$$
\item For a monomial $M$ whose  $Z$-homogenous part is not of degree $1$,  $\anbra{M}=0$.
\end{enumerate}

For convenience,  in what follows, given the polynomials $f,g$ and the vector of polynomials $\t H,$ we denote $\zd{f}{0},\zd{\t H}{0},\zd{g}{0}$ by $\mathcal{F}, \overline{\mathcal{ H}},\mathcal{G}$, respectively.

\begin{claim*}
For any  polynomials $f_1,g_1 ,\ldots,  f_k,g_k$ and vector of polynomials $\t H$ with variables $\nx{X},\nx{ z}$:
$$\anbra{ \sum_{i=1}^k f_i S_{2d}(\t H) g_i}\in\ideal{S_{2d}(\t{\mathcal{H}}),S_{2d}(\overline {\mathcal{H}}|_{ {\mathcal{H}}_j\leftarrow \sum_{i=1}^k\mathcal{G}_i\mathcal{F}_i})},~~~~\hbox{
for any }j\in[2d].$$
\end{claim*}
\begin{proofclaim}
Consider the following:
\begin{align*}
\anbra{\sum_{i=1}^k f_i S_{2d}(\t H) g_i}=&\anbra{\zd{\sum_{i=1}^k f_i S_{2d}(\t H) g_i}{1}}\text{~~by Property 3 of $[\cd]$}\\
=&\anbra{ \sum_{i=1}^k \zd{f_i}{1}  S_{2d}(\overline{\mathcal{ H}})\mathcal{G}_i+\sum_{i=1}^k \sum_{j=1}^{2d} \mathcal{F}_iS_{2d}\left(\overline {\mathcal{H}}|_{\mathcal{H}_j\leftarrow \zd{ H_j}{1}}\right)\mathcal{G}_i+\sum_{i=1}^k \mathcal{F}_i S_{2d}(\t {\mathcal{H}})\zd{g_i}{1}}        \\
\text{(by linearity of $[\cd]$)}~~~~ =&\sum_{i=1}^k\anbra{ \zd{f_i}{1}S_{2d}(\t {\mathcal{ H}})\mathcal{G}_i}+\sum_{j=1}^{2d} \anbra{\sum_{i=1}^k \mathcal{F}_i S_{2d}\left(\t {\mathcal{H}}|_{ {\mathcal{H}}_j\leftarrow \zd{ H_j}{1}}\right)\mathcal{ G}_i}+\sum_{i=1}^k\anbra{ \mathcal{F}_i S_{2d}(\t {\mathcal{ H}})\zd{g_i}{1}}.
\end{align*}

For every $i\in[n]$, assume $\zd{f_i}{1}=\sum_{i=1}^n \sum_jg_{ij}z_ih_{ij}$ where $g_{ij},h_{ij}$ are $Z$-independent polynomials and $\nx{z}$ are  $Z$-variables,
then
$$\anbra{\zd{f_i}{1}S_{2d}(\t {\mathcal{ H}})\mathcal{G}_i}=\anbra{\sum_{i=1}^n \sum_jg_{ij}z_ih_{ij}S_{2d}(\t {\mathcal{ H}})\mathcal{G}_i}=\sum_{i=1}^n \sum_jz_ih_{ij}S_{2d}(\t{\mathcal{ H}})\mathcal{G}_i g_{ij}\in \ideal{S_{2d}(\t {\mathcal{ H}})}$$
where the right most equality
stems from Property 2 of the map $[\cd]$. Similarly, for every $i\in [n]$, we can show  $$\anbra{\mathcal{F}_i S_{2d}(\t {\mathcal{ H}})\zd{g_i}{1}}\in \ideal{S_{2d}(\t {\mathcal{ H}})}.$$

By Lemma \ref{lem:transfer-polynomials}, which is proved below, we have  $$\anbra{\sum_{i=1}^k\mathcal{ F }_i S_{2d}(\t {\mathcal{H}}|_{\mathcal{H}_j\leftarrow \zd{ H_j}{1}})\mathcal{G}_i} \in  \ideal{S_{2d}(\overline {\mathcal{H}}|_{\mathcal{H}_j\leftarrow \sum_{i=1}^k\mathcal{G}_i\mathcal{F}_i})}, ~~~~\hbox{ for any } j\in[2d]. $$

Thus    $\anbra{\sum_{i=1}^kf_iS_{2d}(\t H)g_i} \in  \ideal{S_{2d}(\t H),S_{2d}(\overline {\mathcal{H}}|_{\mathcal{H}_j\leftarrow \sum_{i=1}^k\mathcal{G}_i\mathcal{F}_i})}$ for any $j\in[2d] $.
\end{proofclaim}

Note that  $\dot{P}=\zd{\dot{P}}{1}$. By the properties of $[\cd]$ we have:
\begin{align*}
\dot{P}&=\anbra{\dot{P}}\\
&=\anbra{\sum_{j=1}^\l \sum_{i=1}^{t_j} f_{ji}S_{2d}(\t H_{j})g_{ji}}\\
&=\sum_{j=1}^\l \anbra{ \sum_{i=1}^{t_j} f_{ji}S_{2d}(\t H_{j})g_{ji}}\\
&\in \ideal{S_{2d}(\t H_j),S_{2d}(\t H_j|_{H_{j_q}\leftarrow \sum_{m=1}^{t_j} \mathcal{G}_{jm}\mathcal{F}_{jm}})} \hbox{~~~for any }j\in[\l],q\in[2d].
\end{align*}

Namely for $\dot{P}=\sum_{j=1}^\l \sum_{i=1}^{t_j} f_{ji}S_{2d}(\t H_{j})g_{ji}$, we  have $\Number$  $Z$-independent polynomials  that generate $\dot{P}$,  concluding the  theorem.
\end{proof}

\begin{lemma}\label{lem:transfer-polynomials}
Let $X=\{x_1,x_2,\ldots,x_n\}$ and $f_1,g_1 ,\ldots,  f_k,g_k \in  \F \langle X \rangle $. Let $Z=\{z,z_1,z_2,\ldots,z_n\}$ and assume that  $n$ is an even positive integer, and let
$\t P $ be a vector of polynomials $(P_1,P_2,\ldots,P_n)$ with variable set $X\cup Z$. We denote $\zd{\t P}{0}$, $\zd{f_i}{0},
\zd{g_i}{0}$ by $\overline {\mathcal{P}},\mathcal F_i,\mathcal G_i,i\in[k]$, respectively. Then, for any $j\in [n]$, it holds that
$$\anbra{\sum_{i=1}^k\mathcal F_i S_{n}(\t {\mathcal P}|_{{\mathcal P}_j\leftarrow \zd{P_j}{1}})\mathcal G_i } \in  \ideal{ S_{n}(\t{\mathcal P}|_{{\mathcal P }_j\leftarrow \sum_{i=1}^k\mathcal G_i\mathcal F_i})}. $$
\end{lemma}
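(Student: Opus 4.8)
\medskip
\noindent\textbf{Proof plan.} The plan is to compute the left-hand side explicitly from the definition of the map $\anbra{\cdot}$, then recognise it, up to sign, as a left/right multiple of the generator on the right-hand side; the sign identity is where the hypothesis ``$n$ even'' enters.

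First I would reduce to the case in which $\zd{P_j}{1}$ is a single monomial. Since $\anbra{\cdot}$ and $S_n(\cdot)$ are $\F$-linear in each relevant argument, and the two-sided ideal $\ideal{S_n(\overline{\mathcal P}|_{\mathcal P_j\leftarrow \sum_{i=1}^k\mathcal G_i\mathcal F_i})}$ is a linear subspace that does not depend on $\zd{P_j}{1}$, it suffices to prove the containment when $\zd{P_j}{1}=u\,\zeta\,v$ for $Z$-independent monomials $u,v$ and a single variable $\zeta\in Z$ (every monomial of $\zd{P_j}{1}$ has $Z$-degree exactly $1$, hence this shape); the general case follows by summing over the monomials of $\zd{P_j}{1}$. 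Likewise one may take the $\mathcal F_i$ and $\mathcal G_i$ to be monomials.

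Writing $\mathcal P_l:=\zd{P_l}{0}$ and $m=m(\sigma):=\sigma^{-1}(j)$, expanding $S_n$ gives
\[
\sum_{i=1}^k\mathcal F_i\,S_n\big(\overline{\mathcal P}|_{\mathcal P_j\leftarrow u \zeta v}\big)\,\mathcal G_i=\sum_{i=1}^k\sum_{\sigma\in\S_n}sgn(\sigma)\,\mathcal F_i\,\mathcal P_{\sigma(1)}\cdots\mathcal P_{\sigma(m-1)}\,u\,\zeta\,v\,\mathcal P_{\sigma(m+1)}\cdots\mathcal P_{\sigma(n)}\,\mathcal G_i ,
\]
and every monomial here has $Z$-degree exactly $1$, the unique $Z$-variable being the displayed $\zeta$. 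Applying $\anbra{\cdot}$ term by term via Property 2 of the map moves $\zeta$ to the front and cyclically swaps the $Z$-independent prefix and suffix; collecting terms and setting $Q:=\sum_{i=1}^k\mathcal G_i\mathcal F_i$, I expect to obtain
\[
\anbra{\textstyle\sum_{i=1}^k\mathcal F_i\,S_n(\overline{\mathcal P}|_{\mathcal P_j\leftarrow u \zeta v})\,\mathcal G_i}=(\zeta v)\cdot T\cdot u,\qquad T:=\sum_{\sigma\in\S_n}sgn(\sigma)\,\mathcal P_{\sigma(m+1)}\cdots\mathcal P_{\sigma(n)}\,Q\,\mathcal P_{\sigma(1)}\cdots\mathcal P_{\sigma(m-1)} ,
\]
where in each term $Q$ occupies the position of the (former) $j$-th argument; note that $T$ no longer involves $u,v,\zeta$.

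The key step is to prove $T=-S_n(\overline{\mathcal P}|_{\mathcal P_j\leftarrow Q})=:-R$. To each $\sigma\in\S_n$ I would associate the permutation $\tau=\tau_\sigma$ whose value-word is $(\sigma(m+1),\dots,\sigma(n),\sigma(m),\sigma(1),\dots,\sigma(m-1))$; since $\sigma(m)=j$ and $\sigma$ is injective, the $\sigma$-term of $T$ equals the $\tau$-term of $R$ as a word, and $\sigma\mapsto\tau_\sigma$ is a bijection of $\S_n$. One checks that $\tau_\sigma=\sigma\circ\eta_m$, where $\eta_m$ is the permutation of $[n]$ sending $p\mapsto m+p$ for $1\le p\le n-m$, $n-m+1\mapsto m$, and $n-m+1+q\mapsto q$ for $1\le q\le m-1$; equivalently $\eta_m=\rho\circ c^{\,m}$ with $c$ the $n$-cycle $(1\,2\,\cdots\,n)$ and $\rho$ the $m$-cycle $(1\,m\,m-1\,\cdots\,2)$. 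Hence $sgn(\eta_m)=(-1)^{m-1}\cdot(-1)^{m(n-1)}=(-1)^{mn-1}$, which equals $-1$ for every $m$ precisely because $n$ is even. Therefore $sgn(\tau_\sigma)=-sgn(\sigma)$, and re-summing over $\tau$ yields $T=-R$. Consequently the left-hand side is $-(\zeta v)\,R\,u\in\ideal{R}$, and summing over the monomials of a general $\zd{P_j}{1}$ completes the proof. The main obstacle is keeping the bookkeeping of the $\anbra{\cdot}$-rotation straight and computing the sign of $\eta_m$ correctly; this is the only place where ``$n$ even'' is used, and it is genuinely needed, since $S_n$ fails to be cyclically symmetric for odd $n$.
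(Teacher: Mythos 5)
Your proof is correct and follows essentially the same route as the paper's: expand $S_n$, apply the rotation $\anbra{\cd}$ monomial by monomial, and identify the result with $\pm$ a left/right multiple of the target generator via the permutation you call $\eta_m$ (the paper's $\pi_m$), whose sign is $-1$ precisely because $n$ is even. The only cosmetic difference is that you first reduce to the case where $\zd{P_j}{1}$ and the $\mathcal F_i,\mathcal G_i$ are monomials, whereas the paper carries the sums through directly.
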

For example, when $n=2$, the above lemma shows the following:
$$\anbra{\sum_{i=1}^k\mathcal F_iS_2(\zd{P_1}{1},\mathcal P_2)\mathcal G_i}\in \ideal{S_2(\sum_{i=1}^k\mathcal G_i\mathcal F_i,P_2  )},$$
$$\anbra{\sum_{i=1}^k\mathcal F_i S_2(\mathcal P_1,\zd{P_2}{1})\mathcal G_i}\in \ideal{S_2(P_1, \sum_{i=1}^k\mathcal G_i\mathcal F_i )}.$$
\begin{proof}
For  a fixed ${\ii}\in [n]$, we have $\zd{ P_\ii}{1 }=\uzv$, where $z \in Z,\, \U,\V\in\freea$ and $\U,\V$ are $Z$-independent.

For a permutation $\s\in \S_n$ and the polynomial vector $\t P= (\nx{P})$, we let
$$(\t P)_{\s[i,j]}=\left\{
\begin{array}{ll}
\prod_{m=i}^j P_{\s(m)}, & i\leq j; \\
 1, & i>j.
\end{array}\right. $$

We write $\S_n/m$  to denote the set $\set{\s\in \S_n\;|\;\s(m)=\ii}$.

And define $$\pi_m = \left(\begin{array}{cccccccc}
1 &2 &...& n-m&n-m+1&n-m+2&...&n\\                                                                                                                                                                                                                                                                                                                                m+1 &m+2 & ...&n&m& 1 &...&m-1
\end{array}\right) \forall m\in [n]
.$$

\begin{fact}\label{fac:sign}
  $sgn(\pi_m)=(-1)^{m(n-m)+m-1}=(-1)^{nm-m(m-1)-1}=-1$.
\end{fact}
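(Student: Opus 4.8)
The plan is to obtain $sgn(\pi_m)$ by counting the inversions of $\pi_m$ in one-line notation and then simplifying the resulting exponent. Reading off the bottom row of $\pi_m$, its one-line notation is the word
$$w \;=\; (\,m+1,\;m+2,\;\ldots,\;n,\;\;m,\;\;1,\;2,\;\ldots,\;m-1\,),$$
which decomposes into three consecutive blocks: block $A=(m+1,\ldots,n)$ in positions $1,\ldots,n-m$; the single entry $m$ in position $n-m+1$; and block $B=(1,2,\ldots,m-1)$ in positions $n-m+2,\ldots,n$ (blocks $A$ and $B$ may be empty when $m=n$ or $m=1$, which causes no trouble below). Since $sgn(\pi_m)=(-1)^{\mathrm{inv}(\pi_m)}$, where $\mathrm{inv}(\pi_m)$ is the number of pairs of positions $i<j$ with $w_i>w_j$, it suffices to count these pairs.

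First I would count the inversions block by block. Within $A$ and within $B$ the entries increase, so each contributes no inversion. Each of the $n-m$ entries of $A$ exceeds $m$ and lies to the left of position $n-m+1$, contributing $n-m$ inversions with the entry $m$; each entry of $A$ exceeds every entry of $B$ while $A$ precedes $B$, contributing $(n-m)(m-1)$ inversions; and $m$ exceeds every entry of $B$ while preceding $B$, contributing $m-1$ inversions. Summing,
$$\mathrm{inv}(\pi_m) \;=\; (n-m) + (n-m)(m-1) + (m-1) \;=\; m(n-m)+m-1,$$
so $sgn(\pi_m)=(-1)^{m(n-m)+m-1}$, which is the first claimed equality.

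For the remaining two equalities I would simplify the exponent: $m(n-m)+m-1 = nm-m^2+m-1 = nm-m(m-1)-1$, which is the second form; since $m(m-1)$ is a product of consecutive integers and hence even, $(-1)^{nm-m(m-1)-1}=(-1)^{nm-1}$; and because $n$ is even in the setting of Lemma~\ref{lem:transfer-polynomials}, $nm$ is even and $(-1)^{nm-1}=-1$. This establishes all three equalities. (An alternative route is to write $\pi_m=\rho\circ\tau$, with $\tau\colon i\mapsto i+m \bmod n$ the rotation by $m$ positions — an $m$-th power of an $n$-cycle, hence of sign $((-1)^{n-1})^m=(-1)^{m(n-1)}$ — and $\rho$ the $m$-cycle $(1\;m\;m-1\;\cdots\;2)$ fixing $m+1,\ldots,n$, of sign $(-1)^{m-1}$; multiplying recovers $sgn(\pi_m)=(-1)^{m(n-1)+m-1}=(-1)^{nm-1}$.) There is no genuine obstacle; the only care needed is to get the three blocks and the off-by-one counts exactly right, and to invoke the hypothesis that $n$ is even for the final value $-1$.
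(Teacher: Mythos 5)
Your proof is correct. The paper states Fact~\ref{fac:sign} without any proof, so there is nothing to compare against; your inversion count of the one-line word $(m+1,\ldots,n,\,m,\,1,\ldots,m-1)$ is the standard verification, the block-by-block tally $(n-m)+(n-m)(m-1)+(m-1)=m(n-m)+m-1$ is right (including the degenerate cases $m=1$ and $m=n$), and you correctly identify that the final equality $=-1$ requires the hypothesis that $n$ is even from Lemma~\ref{lem:transfer-polynomials} (there $n=2d$), via the parity observations that $m(m-1)$ is even and $nm$ is even.
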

\begin{fact}\label{fac:perm-transfer}
  $ \t{P}_{\s[m+1,n]}\cd \t{P}_{\s[1,m-1]}=\t{P}_{\s\pi_m[1,n-m]}\cd \t{P}_{\s\pi_m[n-m+2,n]}$, for all $\s\in \S_n/m$.
\end{fact}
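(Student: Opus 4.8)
The plan is to prove Fact~\ref{fac:perm-transfer} by directly unwinding the definition of $\pi_m$ and of the bracket notation $\t{P}_{\s[i,j]}$. This is a purely combinatorial identity about the reindexing $i\mapsto\pi_m(i)$: it does not use the parity of $n$ (that enters only in Fact~\ref{fac:sign}), nor anything about the polynomials $f_i,g_i$ or about $S_n$.

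First I would write $\pi_m$ in one-line notation, reading it off the displayed two-row array: $\pi_m(i)=m+i$ for $i\in[n-m]$, then $\pi_m(n-m+1)=m$, and $\pi_m(n-m+1+j)=j$ for $j\in[m-1]$. Throughout I use the convention $(\s\pi_m)(i)=\s(\pi_m(i))$. Note that for $\s\in\S_n/m$, i.e.\ $\s(m)=\ii$, this gives $(\s\pi_m)(n-m+1)=\s(m)=\ii$; this is exactly why the slot $n-m+1$ (the one carrying $P_\ii$) is the one omitted on the right-hand side of the claimed identity.

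Next I would evaluate the two factors on the right-hand side. Since $i\mapsto m+i$ is an increasing bijection from $[n-m]$ onto $\{m+1,\dots,n\}$,
\[
\t{P}_{\s\pi_m[1,n-m]}=\prod_{i=1}^{n-m}P_{\s(\pi_m(i))}=\prod_{i=1}^{n-m}P_{\s(m+i)}=\prod_{k=m+1}^{n}P_{\s(k)}=\t{P}_{\s[m+1,n]}.
\]
Similarly, writing $i=n-m+1+j$ with $j$ ranging over $[m-1]$,
\[
\t{P}_{\s\pi_m[n-m+2,n]}=\prod_{j=1}^{m-1}P_{\s(\pi_m(n-m+1+j))}=\prod_{j=1}^{m-1}P_{\s(j)}=\t{P}_{\s[1,m-1]},
\]
the order of the factors being preserved in both chains. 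Multiplying these two identities in $\freea$ (where the order of the factors matters, and is respected here) yields $\t{P}_{\s\pi_m[1,n-m]}\cd\t{P}_{\s\pi_m[n-m+2,n]}=\t{P}_{\s[m+1,n]}\cd\t{P}_{\s[1,m-1]}$, which is precisely the claim. The boundary cases $m=1$ and $m=n$ are handled by the convention that an empty product equals $1$ (so $\t{P}_{\s[1,0]}=1$, resp.\ $\t{P}_{\s[n+1,n]}=1$).

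I expect no genuine obstacle here: the content is just the bookkeeping of the three index blocks $\{1,\dots,n-m\}$, $\{n-m+1\}$, $\{n-m+2,\dots,n\}$ and their images under $\pi_m$. The only points needing care are keeping the off-by-one indexing straight and fixing the composition convention $(\s\pi_m)(i)=\s(\pi_m(i))$ consistently — with the opposite convention the same transfer would be realized through $\pi_m\s$ instead, so it is worth stating which is meant.
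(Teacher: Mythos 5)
Your proof is correct: the paper states Fact~\ref{fac:perm-transfer} without proof, and your direct unwinding of $\pi_m$ on the three index blocks is exactly the verification that is being left implicit. You also correctly fix the composition convention $(\s\pi_m)(i)=\s(\pi_m(i))$, which is the one the paper must be using for Fact~\ref{fac:set-transfer} (namely $(\s\pi_m)(n-m+1)=\s(m)=\ii$) and for the substitution $\pi=\s\pi_m$ later in the chain of equalities to go through.
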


\begin{fact}\label{fac:set-transfer}
  $(\S_n/m)\pi_m= \S_n/(n-m+1)$.
\end{fact}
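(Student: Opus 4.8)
The plan is to isolate the only feature of the permutation $\pi_m$ that actually gets used---its value on the coordinate $n-m+1$---and then chase the definition of $\S_n/m$.

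\textbf{Step 1 (fix the composition convention).} Throughout, I would read the product $\s\pi_m$ as the composition $\s\circ\pi_m$, i.e.\ $(\s\pi_m)(\ell)=\s(\pi_m(\ell))$. First I would point out that this is exactly the reading forced by Fact~\ref{fac:perm-transfer}: the two displayed products there agree, as words in the noncommuting $P_1,\dots,P_n$, precisely when $(\s\pi_m)(\ell)=\s(\pi_m(\ell))$ for all $\ell$ (indeed, under this convention the index sequence $(\s\pi_m)(1),\dots,(\s\pi_m)(n-m),(\s\pi_m)(n-m+2),\dots,(\s\pi_m)(n)$ is $\s(m+1),\dots,\s(n),\s(1),\dots,\s(m-1)$, matching the left-hand side). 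The same convention is the one implicit in Fact~\ref{fac:sign}.

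\textbf{Step 2 (read off $\pi_m$).} From the two-line description of $\pi_m$, the column whose bottom entry equals $m$ is the column whose top entry is $n-m+1$; hence
\[
\pi_m(n-m+1)=m,\qquad\text{equivalently}\qquad \pi_m^{-1}(m)=n-m+1.
\]
This holds uniformly for every $m\in[n]$, including the degenerate cases $m=1$ (where $n-m+1=n$) and $m=n$ (where $n-m+1=1$).

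\textbf{Step 3 (the two inclusions).} Recall $\S_n/k=\{\s\in\S_n:\s(k)=\ii\}$. If $\s\in\S_n/m$ then $(\s\pi_m)(n-m+1)=\s\bigl(\pi_m(n-m+1)\bigr)=\s(m)=\ii$, so $\s\pi_m\in\S_n/(n-m+1)$; this gives $(\S_n/m)\pi_m\subseteq\S_n/(n-m+1)$. Conversely, given $\tau\in\S_n/(n-m+1)$, set $\s:=\tau\pi_m^{-1}$; then $\s(m)=\tau\bigl(\pi_m^{-1}(m)\bigr)=\tau(n-m+1)=\ii$, so $\s\in\S_n/m$ and $\tau=\s\pi_m\in(\S_n/m)\pi_m$. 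Hence equality. (Alternatively, either inclusion alone suffices, since right multiplication by the permutation $\pi_m$ is a bijection of $\S_n$, whence $\lvert(\S_n/m)\pi_m\rvert=\lvert\S_n/m\rvert=(n-1)!=\lvert\S_n/(n-m+1)\rvert$.)

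There is essentially no obstacle here: all the content lies in Steps 1 and 2. The only points that require care are interpreting ``$\s\pi_m$'' consistently with Facts~\ref{fac:sign} and~\ref{fac:perm-transfer}, and not interchanging $\pi_m$ with $\pi_m^{-1}$ in the reverse inclusion.
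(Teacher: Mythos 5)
Your proof is correct: the paper states Fact~\ref{fac:set-transfer} without proof, and your argument (fixing the composition convention $(\s\pi_m)(\ell)=\s(\pi_m(\ell))$ forced by Fact~\ref{fac:perm-transfer}, reading off $\pi_m(n-m+1)=m$, and checking both inclusions) is exactly the intended, routine verification. No gaps.
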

So we have the following:
\begin{align*}
&\anbra{\sum_{i=1}^k\mathcal {F}_i  s_{n}(\t{\mathcal P}|_{\mathcal P _\ii\leftarrow \uzv})\mathcal {G}_i }\\
=&\anbra{\sum_{i=1}^k \mathcal {F}_i   \sum_{\s \in \S_{n} }sgn(\s)(\t{\mathcal P}_{\s[1,n]})|_{ \mathcal P _\ii\leftarrow \uzv} \mathcal {G}_i }\\
=&\anbra{\sum_{i=1}^k \mathcal {F}_i  \sum_{m=1}^{n} \sum_{\begin{array}{c}
\s \in \S_{n} \\
\s^{-1}(i)=m
\end{array}}sgn(\sigma)(-1)^m(\t{\mathcal P}_{\s[1,m-1]}\mathcal P_{\s(m)} \t{\mathcal P}_{\s[m+1,n]})|_{\mathcal P _\ii\leftarrow \uzv}\mathcal {G}_i } \\
=&\anbra{\sum_{i=1}^k \mathcal {F}_i  \sum_{m=1}^{n} \sum_{\s\in \S_{n}/m} sgn(\s)(-1)^m(\t{\mathcal P}_{\s[1,m-1]} \mathcal P _\ii \t{\mathcal P}_{\s[m+1,n]})|_{\overline{ \mathcal P }_\ii\leftarrow \uzv}\mathcal {G}_i } \\
=&\anbra{\sum_{i=1}^k \mathcal {F}_i  \sum_{m=1}^{n} \sum_{\s\in \S_{n}/m}sgn(\s)(-1)^m(\t{\mathcal P}_{\s[1,m-1]}\uzv \t{\mathcal P}_{\s[m+1,n]})\mathcal {G}_i } \\
=& \sum_{i=1}^n\sum_j z_i \V\sum_{m=1}^{n} \sum_{\s\in \S_{n}/m}sgn(\s)(-1)^m \t{\mathcal P}_{\s[m+1,n]}\left(\sum_{i=1}^k\mathcal {G}_i \mathcal {F}_i\right) \t{\mathcal P}_{\s[1,m-1]} \U\\
=& \sum_{i=1}^n \sum_j z_i \V\sum_{m=1}^{n} \sum_{\s\in \S_{n}/m}sgn(\s)(-1)^m \t{\mathcal P}_{\s\pi_m[1,n-m]}\left(\sum_{i=1}^k\mathcal {G}_i \mathcal {F}_i\right) \t{\mathcal P}_{\s\pi_m[n-m+2,n]} \U~~~~~\text{by \textbf{Fact} \ref{fac:perm-transfer}}\\
=& \sum_{i=1}^n \sum_j z_i \V\sum_{m=1}^{n} \sum_{\s\in \S_{n}/m}sgn(\s\pi_m)sgn(\pi_m)(-1)^m \t{\mathcal P}_{\s\pi_m[1,n-m]}\left(\sum_{i=1}^k\mathcal {G}_i \mathcal {F}_i\right) \t{\mathcal P}_{\s\pi_m[n-m+2,n]} \U.\\
&\text{let $\pi=\s\pi_m$, then $\pi\pi_m^{-1}=\s$,}\\
=& \sum_{i=1}^n \sum_j z_i \V\sum_{m=1}^{n} \sum_{\pi\pi_m^{-1}\in \S_{n}/m}sgn(\pi)(-1)(-1)^m \t{\mathcal P}_{\pi[1,n-m]}\left(\sum_{i=1}^k\mathcal {G}_i \mathcal {F}_i\right) \t{\mathcal P}_{\pi[n-m+2,n]} \U~~~~\text{by \textbf{Fact} \ref{fac:sign}}\\
=&-\sum_{i=1}^n \sum_j z_i \V\sum_{m=1}^{n} \sum_{\pi\in \S_{n}/(n-m+1)}sgn(\pi)(-1)^m \t{\mathcal P}_{\pi[1,n-m]}\left(\sum_{i=1}^k\mathcal {G}_i \mathcal {F}_i\right) \t{\mathcal P}_{\pi[n-m+2,n]} \U~~~~~\text{by \textbf{Fact} \ref{fac:set-transfer}}\\
&\text{let $m'=n-m+1$, then $m=n-m'-1$,}\\
=& - \sum_{i=1}^n \sum_j z_i \V\sum_{m'=1}^{n} \sum_{\pi\in \S_{n}/m'}sgn(\pi)(-1)^{n-m'+1} \t{\mathcal P}_{\pi[1,m'-1]}\left(\sum_{i=1}^k\mathcal {G}_i \mathcal {F}_i\right) \t{\mathcal P}_{\pi[m'+1,n]} \U\\
=& -(-1)^{n+1} \sum_{i=1}^n \sum_j z_i \V\sum_{m'=1}^{n} \sum_{\pi\in \S_{n}/m'}sgn(\pi)(-1)^{m'} \t{\mathcal P}_{\pi[1,m'-1]}\left(\sum_{i=1}^k\mathcal {G}_i \mathcal {F}_i\right) \t{\mathcal P}_{\pi[m'+1,n]} \U\\
=&  \sum_{i=1}^n \sum_j z_i \V S_{n}(\t{\mathcal P}|_{ \mathcal P _\ii\leftarrow \sum_{i=1}^k\mathcal {G}_i \mathcal {F}_i})\U\\
\in&\ideal{S_{n}(\t{\mathcal P}|_{ \mathcal P _\ii\leftarrow \sum_{i=1}^k\mathcal {G}_i \mathcal {F}_i})}.
\end{align*}
\end{proof}

\subsubsection{Concluding the lower bound for every basis of the identities of \matd}
\label{sec:conc-for-any-basis-of-matd}


Here we show that the $\Omega(n^{2d})$ lower bound proved in previous sections holds (for every $d>2$ and) \emph{every  finite basis of the identities of \matd}, when \F\ is of characteristic $0$.
To this end, we use several results from the theory of PI-algebras (for more on PI-theory see the monographs \cite{Row80,Dre99}).

A polynomial $f\in \freea$ with $n$ variables is \textbf{\emph{multi-homogenous with degrees $(1,\ldots,1)$}} ($n$ times) if in every monomial the power of every variable $x_1,\ldots,x_n$ is precisely 1. In other words, every monomial is of the form $\alpha\cd \prod _{i=1}^n x_{\sigma(i)}$, for some permutation $\sigma$ of order $n$ and some scalar $\alpha$. For the sake of simplicity, we shall talk in the sequel about a \textit{\textbf{multi-homogenous polynomial of degree $n$}}, when referring to a multi-homogenous polynomial with degrees $(1,\ldots,1)$ ($n$ times). Thus, any multi-homogenous polynomial with $n$ variables is homogenous of total-degree  $n$.

We need the following definition:

\begin{definition}\label{def:commutator_identity}
A polynomial $f\in \freea$ is called a \textbf{commutator polynomial} if it is a linear combination of products of generalized-commutators.
(We assume that $1$ is a product of an empty set of commutators.) \end{definition}
For example, $[x_1,x_2]\cd[x_3,x_4]+[x_1,x_2,x_3]$ is a commutator polynomial.

We need the following proposition:
\begin{proposition}[Proposition $4.3.3$ in \cite{Dre99}]
\label{prop:generated-by-multi-commutator-polynomial}
   If $R$ is a unitary PI-algebra over a field $\F$ of characteristic $0$, then every identity of $R$ can be generated by  multi-homogenous commutator  polynomials.
\end{proposition}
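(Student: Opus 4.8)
The plan is to prove this in two stages: first reduce to multilinear identities using that $\F$ has characteristic $0$, and then analyze a single multilinear identity by splitting off its ``commutator part'' with the help of the Poincar\'e--Birkhoff--Witt (PBW) basis of $\freea$, where unitarity enters precisely through the substitution $x_j\mapsto 1$. \emph{For the first step}, since $\F$ is infinite the T-ideal of identities of $R$ is generated, as a T-ideal, by its multi-homogeneous (here: multilinear) elements: given an identity $f$, a Vandermonde argument applied to the scalings $x_i\mapsto\lambda_i x_i$ shows each multi-homogeneous component of $f$ is again an identity of $R$; and if a variable $x_i$ occurs to degree $\ge 2$ in a multi-homogeneous identity, writing $x_i=x_i'+x_i''$ and extracting the part linear in $x_i'$ produces an identity of strictly smaller $x_i$-degree from which $f$ is recovered by re-identifying $x_i'=x_i''=x_i$ and dividing by the multiplicity of $x_i$ --- this is the only place characteristic $0$ is used. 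Iterating, $f$ is a T-consequence of multilinear identities of $R$, so it suffices to treat a multilinear identity $f(x_1,\dots,x_n)$.

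\emph{The second step} peels off commutators via PBW. Using the classical fact that $\freea$ is the universal enveloping algebra of the free Lie algebra $L$ on $X$, together with the PBW theorem and the fact that $L$ in degrees $\ge 2$ is spanned by left-normed commutators $[x_{i_1},\dots,x_{i_k}]$ with $k\ge 2$, every element of $\freea$ is a linear combination of products $w\cdot\gamma$ with $w$ a non-decreasing word in $X$ and $\gamma$ a product of left-normed commutators --- i.e.\ a monomial of a commutator polynomial in the sense of Definition~\ref{def:commutator_identity}. Since this basis is compatible with the multidegree, applying it to the multilinear $f$ and collecting terms yields a presentation
\[
f=\sum_{T\subseteq[n]}x_{[n]\setminus T}\cdot g_T ,
\]
where $x_S$ denotes the product of the $x_i$ with $i\in S$ in increasing order and each $g_T$ is a multilinear commutator polynomial in the variables $\{x_i:i\in T\}$ (with $g_\emptyset\in\F$). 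I would then claim each $g_T$ is itself an identity of $R$; granting this, $f$ lies in the ideal generated by the $g_T$, which are multi-homogeneous commutator polynomials and identities of $R$, and combining with the first step gives that every identity of $R$ is a T-consequence of multi-homogeneous commutator polynomials that are identities of $R$, which is the assertion.

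\emph{The third step} proves the $g_T$ are identities, by induction on $|T|$, and this is where unitarity is used: substituting $x_j\mapsto 1$ into an identity of $R$ again yields an identity of $R$. For $T=\emptyset$, substituting $x_j\mapsto 1$ for every $j$ annihilates all $g_T$ with $T\ne\emptyset$ (a commutator polynomial with every variable set to $1$ vanishes) and leaves $f(1,\dots,1)=g_\emptyset$; since $f(1,\dots,1)=0$ in the nonzero unitary algebra $R$, the scalar $g_\emptyset$ is $0$. For $|T_0|=k>0$, assume $g_T$ is an identity of $R$ whenever $|T|<k$; substituting $x_j\mapsto 1$ for all $j\notin T_0$ kills exactly the summands with $T\not\subseteq T_0$ and replaces $x_{[n]\setminus T}$ by $x_{T_0\setminus T}$, giving the identity $f':=\sum_{T\subseteq T_0}x_{T_0\setminus T}\,g_T$; subtracting $\sum_{T\subsetneq T_0}x_{T_0\setminus T}g_T$, which is an identity by the induction hypothesis (and since the identities of $R$ form an ideal), shows $g_{T_0}=f'-\sum_{T\subsetneq T_0}x_{T_0\setminus T}g_T$ is an identity of $R$.

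The main obstacle is the second step: one needs the PBW/free-Lie structure of $\freea$ to get the canonical ``word times commutator-polynomial'' decomposition, after which the third step is short. The essential use of the hypothesis is that $R$ is \emph{unitary} --- the substitution $x_j\mapsto 1$ is the only point at which one steps outside the world of commutator polynomials, and for non-unitary PI-algebras the statement genuinely fails (one must then allow the larger class of proper polynomials). Characteristic $0$ is used only for the multilinearization and can be relaxed to sufficiently large characteristic when identities of bounded degree are considered.
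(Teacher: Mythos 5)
The paper does not prove this proposition itself: it is quoted verbatim as Proposition 4.3.3 of Drensky's monograph \cite{Dre99} and used as a black box in Lemma \ref{lem:special-basis}. Your reconstruction is correct and matches the argument in that source: multilinearize via a Vandermonde/polarization step (the only place characteristic $0$ is used), invoke the PBW decomposition of $\freea$ over the free Lie algebra to write a multilinear identity uniquely as $\sum_{T\subseteq[n]} x_{[n]\setminus T}\, g_T$ with each $g_T$ a multilinear commutator (``proper'') polynomial, and then show each $g_T$ is itself an identity by induction on $|T|$, substituting $1$ for the variables outside $T$ --- which is precisely where unitarity enters.
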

\begin{remark*}
\emph{Multi-homogenous} and \emph{commutator polynomials}, in the current paper, are called \emph{multilinear} and \emph{proper polynomials} in \cite{Dre99}, respectively.
\end{remark*}

\begin{lemma}\label{lem:special-basis}
  Let $R$ be a  unitary PI-algebra  and let  $\mathcal T$ be the T-ideal consisting of all identities of $R$. Then $\mathcal T$ has a  finite basis in which every polynomial is a multi-homogenous commutator polynomial.
\end{lemma}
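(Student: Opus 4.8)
The plan is to combine the two ingredients already recalled in the excerpt: Kemer's theorem, which guarantees that the T-ideal $\mathcal T$ of all identities of a unitary PI-algebra $R$ has \emph{some} finite basis $B_0 = \{f_1,\ldots,f_m\}$, and Proposition~\ref{prop:generated-by-multi-commutator-polynomial}, which says that in characteristic $0$ every identity of $R$ can be generated by multi-homogenous commutator polynomials. The strategy is to start from the finite basis $B_0$ and replace each generator $f_i$ by a \emph{finite} set of multi-homogenous commutator polynomials that together generate the same T-ideal as $f_i$, and then take the union of all these sets.

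First I would fix a finite basis $B_0 = \{f_1,\ldots,f_m\}$ of $\mathcal T$ using Kemer \cite{Kem87}. Each $f_i \in \mathcal T$, so by Proposition~\ref{prop:generated-by-multi-commutator-polynomial} there exist multi-homogenous commutator polynomials $h_{i,1},\ldots,h_{i,k_i} \in \mathcal T$ such that $f_i \in T(\{h_{i,1},\ldots,h_{i,k_i}\})$; that is, $f_i = \sum_{j} a_j \cdot h_{i,\iota_j}(\overline g_j) \cdot b_j$ for finitely many terms, and since the sum is finite only finitely many of the $h_{i,j}$ occur, so we may take $k_i$ finite. Now set $\mathcal B := \bigcup_{i=1}^m \{h_{i,1},\ldots,h_{i,k_i}\}$. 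This is a finite set, every element is a multi-homogenous commutator polynomial, and every element lies in $\mathcal T$ (being an identity of $R$), so $T(\mathcal B) \subseteq \mathcal T$. Conversely, each $f_i \in T(\mathcal B)$ by construction, hence $B_0 \subseteq T(\mathcal B)$, hence $\mathcal T = T(B_0) \subseteq T(\mathcal B)$. Therefore $T(\mathcal B) = \mathcal T$, i.e., $\mathcal B$ is a finite basis of $\mathcal T$ consisting of multi-homogenous commutator polynomials.

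The one subtlety I would be careful about is the precise reading of Proposition~\ref{prop:generated-by-multi-commutator-polynomial}: I want to extract from ``every identity can be generated by multi-homogenous commutator polynomials'' a \emph{particular finite} generating set for a \emph{single} identity $f_i$, and this is immediate since any generation of $f_i$ in the sense of Definition~\ref{def:T-ideal} (a finite sum) only involves finitely many generators. A second point worth a line of justification: one might worry that we need the $h_{i,j}$ themselves to be homogenous of a fixed multidegree matching something; but the statement of the lemma only asks that every polynomial in the basis be a multi-homogenous commutator polynomial, with no constraint relating the different generators, so no extra work is needed there. I do not expect a genuine obstacle here — the content is entirely carried by Kemer's theorem (finiteness) and Proposition~\ref{prop:generated-by-multi-commutator-polynomial} (the form of the generators), and the lemma is just the observation that these two facts can be applied in sequence. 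The only thing to state cleanly is the elementary fact that if $B_0 \subseteq T(\mathcal B)$ and $\mathcal B \subseteq \mathcal T = T(B_0)$ then $T(\mathcal B) = T(B_0)$, which follows because $T(\cdot)$ is monotone and idempotent.
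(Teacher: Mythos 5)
Your proposal is correct and follows essentially the same route as the paper's own (very terse) proof: invoke Kemer's finite basis theorem to get a finite basis $B_0$ of $\mathcal T$, then replace each element of $B_0$ by the finitely many multi-homogenous commutator polynomials that Proposition~\ref{prop:generated-by-multi-commutator-polynomial} provides, and take the union. The only added value in your write-up is that you make explicit the bookkeeping the paper leaves implicit — that the generators $h_{i,j}$ extracted from Proposition~\ref{prop:generated-by-multi-commutator-polynomial} lie in $\mathcal T$, so $T(\mathcal B)\subseteq\mathcal T$, and the final monotonicity/idempotency argument giving $T(\mathcal B)=\mathcal T$ — which is a mild but genuine improvement in rigor over the two-sentence sketch in the paper.
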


\begin{proof}
By Kemer \cite{Kem87}, the identities of any $\F$-algebra, for any \F, \iddocomment{Is it for any field?}  can be generated by a finite set of identities. Namely $\mathcal T$ has a finite basis $\{A_1, ,\ldots,  A_k\}$, for some positive integer $k$.

By Proposition \ref{prop:generated-by-multi-commutator-polynomial}, for a fixed identity of $R$, we can find finite many  multi-homogenous commutator  polynomials to generate.
Thus, each $A_i,\,i\in[k]$, can be generated by finite many  multi-homogenous commutator polynomials. Then there are finite many multi-homogenous commutator polynomials generating the basis $\{A_1, ,\ldots, A_k\}$ of $\mathcal T$, and hence, also finite many multi-homogenous commutator identities generating $\mathcal T$.

\end{proof}

\begin{lemma}\label{lem:collapse}
  Let  $f\in \freea$ be a  multi-homogenous  commutator polynomial with $n$ variables.
 If $x_i$ is a constant for some  $i\in[n]$,  then $f(\nx{x})\equiv 0$ (that is, $f$ is the zero polynomial).
\end{lemma}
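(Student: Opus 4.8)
The plan is to exploit the structure of a commutator polynomial: every monomial of $f$ is (a scalar times) a product of generalized-commutators $[y_1,\ldots,y_r]$ where each $y_j$ is one of the variables $x_1,\ldots,x_n$. So it suffices to show that any single product of generalized-commutators vanishes once one of the variables, say $x_i$, is set to a constant $c\in\F$. Since $f$ is multi-homogenous of degree $n$, the variable $x_i$ occurs exactly once in each monomial, hence exactly once somewhere inside one of the commutator factors of that monomial; and a product of polynomials is zero as soon as one factor is zero. Thus the whole claim reduces to the following assertion: a generalized-commutator $[y_1,\ldots,y_r]$ becomes the zero polynomial if one of its arguments $y_j$ is replaced by a constant.

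First I would prove this last assertion by induction on $r$, the length of the generalized-commutator, using the recursive definition $[y_1,\ldots,y_r]=[[y_1,\ldots,y_{r-1}],y_r]$ together with the base case $[a,b]=ab-ba$. The base case is immediate: if $a=c$ is a constant then $cb-bc=0$ (scalars commute with everything in an $\F$-algebra / in \freea). For the inductive step, write $g=[y_1,\ldots,y_{r-1}]$, so $[y_1,\ldots,y_r]=g\,y_r-y_r\,g$. If the constant is substituted for $y_r$, then $g\,c-c\,g=0$ since $c$ is central. If instead the constant is substituted for one of $y_1,\ldots,y_{r-1}$, then by the induction hypothesis $g$ becomes the zero polynomial, and hence $g\,y_r-y_r\,g=0$ as well. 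This closes the induction.

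Assembling the pieces: fix $i\in[n]$ and substitute a constant for $x_i$ in $f$. Expand $f$ as a linear combination $\sum_\ell \alpha_\ell\,M_\ell$ where each $M_\ell$ is a product $C_{\ell,1}\cdots C_{\ell,m_\ell}$ of generalized-commutators in the variables $x_1,\ldots,x_n$. Because $f$ is multi-homogenous with all degrees equal to $1$, the variable $x_i$ appears exactly once in $M_\ell$, so it appears as an argument of exactly one factor $C_{\ell,t}$; by the assertion proved above, that factor becomes the zero polynomial after the substitution, and therefore the entire product $M_\ell$ becomes zero. Summing over $\ell$ gives $f(x_1,\ldots,x_n)\equiv 0$ after the substitution, which is exactly the statement of the lemma.

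I do not anticipate a genuine obstacle here; the only point requiring a little care is the bookkeeping that multi-homogeneity forces $x_i$ into exactly one commutator factor of each monomial (this is what rules out, e.g., $x_i$ appearing with degree $2$ spread across two factors, which would not automatically vanish). Once that observation is in place the result is a routine induction, and the centrality of scalars in an $\F$-algebra does all the work.
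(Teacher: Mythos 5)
Your proof is correct and takes essentially the same route as the paper: reduce to a single product of generalized-commutators, use multi-homogeneity to force $x_i$ into exactly one commutator factor, and observe that that factor (hence the whole product) vanishes upon substituting a scalar. The only difference is that the paper dismisses the key fact---a generalized commutator becomes the zero polynomial when one of its arguments is a scalar---as ``easy to check,'' while you supply the short induction on the commutator length using centrality of scalars; this is a harmless elaboration, not a different argument.
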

\begin{proof} In the proof, when we talk about the commutator, we mean the non-zero polynomial $[x_{t_1},\ldots,x_{t_s}]$ for all possible $t_1,\ldots, t_s\in[n]$ and some natural number $s\geq 2$. It is easy to check that if we replace a variable by  a constant $c\in \F$ in the commutator $[x_{t_1},\ldots,x_{t_s}]$, then the commutator equals $0$.

By the definition of commutator polynomial, we know
$$f=\sum_{i=1}^m c_i\prod_{j=1}^{k_i} B_{ij},$$
$~~~\text{where }0\neq c_i\in \F\hbox{ and }m,n\in \N,\text{ and  $B_{ij}$ is some  commutator $[x_{i_1},\ldots,x_{i_s}]$}. $
\smallskip

For a fixed $\ii\in[n]$, by the  definition of multi-homogenous polynomial, $f$ must be linear in  $x_\ii$, namely $c_i\prod_{j=1}^{k_i}B_{ij}$ must be linear in  $x_\ii$ for every $i\in[m]$. Then there must be a $j_0\in[k]$ such that $B_{ij_0}$ is linear in $x_\ii$. That is, $B_{ij_0}|_{x_\ii\leftarrow c}=0$. Furthermore,  $\prod_{j=1}^{k_i}B_{ij}|_{x_\ii\leftarrow c}=0$ for all $i\in[m]$. Namely
$f|_{x_\ii\leftarrow c}=0$.
\end{proof}


By lemma \ref{lem:exist_for_nP} and lemma \ref{lem:combine_into_one}, we know that there exist s-polynomials $\nx{P}$ in $n$ variables $\nx{x}$ that are identities over  \matd, such that putting  $\dot{P}$:=$\sum_{i=1}^n z_iP_i$, where  $\nx{z}$ are new variables, we have:
$$Q_{S_{2d}}(\dot{P})\geq\frac{1}{2d+1}\cd Q_{S_{2d}}(\nx{P})=\Omega(n^{2d}).$$
\medskip

The following is the main lemma of this section:
\begin{lemma}\label{lem:relation-S-2d-Matd}
  Let $d>2$, and let $\mathcal B$ be some basis for the T-ideals of the identities of \matd. Then, there are constants $c,c'$ such that for any identity $P$ over \matd\ of   degree $2d+1$:

  $$  c Q_{S_{2d}}(P)\le Q_{\mathcal B}(P)\le c' Q_{S_{2d}}(P).$$

\end{lemma}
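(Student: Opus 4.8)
The plan is to establish the two inequalities separately, the left one being easy and the right one being the heart of the matter. For the lower bound $cQ_{S_{2d}}(P) \le Q_{\mathcal B}(P)$: since $\mathcal E = \{S_{2d}(x_1,\ldots,x_{2d})\}$ consists of identities of \matd, every substitution instance of a polynomial from $\mathcal E$ lies in the T-ideal of \matd, hence is generated (as a two-sided ideal element) by finitely many substitution instances of polynomials from $\mathcal B$; letting $c_0 = \max\{Q_{\mathcal B}(S_{2d}(\overline g)) : \overline g\}$ — which is finite and bounded uniformly because $S_{2d}$ is a single polynomial and $\mathcal B$ is a finite basis, by the argument in Proposition~\ref{prop:generate-means-less-Q} — we get $Q_{\mathcal B}(P) \le c_0 Q_{S_{2d}}(P)$, which after rewriting gives the claimed inequality with $c = 1/c_0$. (In fact this direction is exactly Proposition~\ref{prop:generate-means-less-Q} applied with the caveat that $\mathcal E$ is not a basis, so only one direction of the simulation is available.)

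For the upper bound $Q_{\mathcal B}(P) \le c' Q_{S_{2d}}(P)$, which is where the restriction $d>2$ is used, I would proceed in three steps. First, by Lemma~\ref{lem:special-basis}, replace $\mathcal B$ by a finite basis $\mathcal B'$ of the identities of \matd\ consisting entirely of multi-homogenous commutator polynomials; by Proposition~\ref{prop:generate-means-less-Q} this changes $Q$ by at most a constant factor, so it suffices to bound $Q_{\mathcal B'}(P)$. Second, since $P$ has degree $2d+1$ and each generator in $\mathcal B'$ is multi-homogenous, a degree/multi-homogeneity bookkeeping argument (this is Lemma~\ref{lem:2d+2-cant-generate}, to be invoked) shows that only generators of degree at most $2d+1$ in $\mathcal B'$ can actually contribute to a representation of $P$ in the two-sided ideal: a generator of degree $\ge 2d+2$, being multi-homogenous, can only produce monomials of degree $\ge 2d+2$ in any substitution instance sandwiched between polynomials, and these cannot cancel down to contribute to a degree-$(2d+1)$ target once one tracks homogeneous components carefully (using that substitution instances of multi-homogenous commutator polynomials vanish when a variable is set to a constant, Lemma~\ref{lem:collapse}, the analogue of Lemma~\ref{fac:S_2d-equal-zero-when-constant} for the commutator setting). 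Third, invoke Lemma~\ref{lem:generated-by-S_2d}: when $d>2$, every multi-homogenous commutator polynomial of degree at most $2d+1$ that is an identity of \matd\ is already generated by $S_{2d}$, i.e.\ lies in $T(\mathcal E)$ with a uniformly bounded number of generators. Combining the second and third steps, a $\mathcal B'$-representation of $P$ using $k = Q_{\mathcal B'}(P)$ substitution instances can be converted into an $\mathcal E$-representation using $O(k)$ substitution instances, giving $Q_{S_{2d}}(P) \le c'' Q_{\mathcal B'}(P) \le c''' Q_{\mathcal B}(P)$, i.e.\ the desired inequality after renaming constants.

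The main obstacle is the second step — showing that high-degree generators from $\mathcal B'$ genuinely cannot help. The subtlety is that in a two-sided ideal one multiplies a substitution instance $B'_i(\overline g)$ on both sides by arbitrary polynomials $h_i, \ell_i$, and the substituted arguments $\overline g$ may themselves have constant terms, so a naive "degree only goes up" argument fails; one must pass to homogeneous components and use that a multi-homogenous commutator polynomial with a variable specialized to a scalar vanishes identically (Lemma~\ref{lem:collapse}), exactly mirroring how Lemma~\ref{lem:s-poly-linear-property} extracted the linear part of the arguments in the $S_{2d}$ analysis. The third step also genuinely needs $d>2$: for $d=2$ there are multi-homogenous commutator identities of \mattwo\ of low degree (coming from Drensky's basis \cite{Dre81}) that are not consequences of $S_4$, which is precisely why the theorem excludes $d=2$, and I would flag this as the reason the argument breaks there rather than trying to patch it.
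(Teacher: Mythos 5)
Your proof follows essentially the same route as the paper's: reduce to a basis $\mathcal B'$ of multi-homogenous commutator polynomials (Lemma~\ref{lem:special-basis}), show that for a target of degree $2d+1$ only generators of degree at most $2d+1$ can matter (Lemma~\ref{lem:2d+2-cant-generate}, relying on Lemma~\ref{lem:collapse}), and invoke Leron's result that for $d>2$ those low-degree multi-homogenous identities are consequences of $S_{2d}$ (Lemma~\ref{lem:generated-by-S_2d}), then finish with the uniform simulation bound from Proposition~\ref{prop:generate-means-less-Q}. So the substance is right and the two genuinely necessary ingredients --- the degree cutoff and the $d>2$ restriction --- are identified correctly.

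However, the two inequalities are mislabeled, and the ``rewriting'' step you use to patch this is not valid algebra. Your first paragraph derives $Q_{\mathcal B}(P) \le c_0\, Q_{S_{2d}}(P)$. That is literally the \emph{right-hand} inequality $Q_{\mathcal B}(P) \le c'\, Q_{S_{2d}}(P)$ (with $c' = c_0$); it does \emph{not} ``rewrite'' to $c\, Q_{S_{2d}}(P) \le Q_{\mathcal B}(P)$ with $c = 1/c_0$ --- dividing by $c_0$ gives $(1/c_0) Q_{\mathcal B}(P) \le Q_{S_{2d}}(P)$, which is an inequality in the opposite direction. Dually, your second paragraph concludes $Q_{S_{2d}}(P) \le c'''\, Q_{\mathcal B}(P)$, which after dividing by $c'''$ is exactly the \emph{left-hand} inequality $c\, Q_{S_{2d}}(P) \le Q_{\mathcal B}(P)$ --- not the ``upper bound'' you announce at the top of that paragraph. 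In short, the easy simulation ($S_{2d}$ is an identity, hence in $T(\mathcal B)$) proves the upper bound on $Q_{\mathcal B}(P)$, while the degree-cutoff plus Leron's theorem proves the lower bound on $Q_{\mathcal B}(P)$; you proved both, but swapped the names. Since together your two paragraphs do establish both directions, the argument closes once the labels are corrected, and I would just flag that the caveat in your first paragraph (``only one direction of the simulation is available'' because $\{S_{2d}\}$ is not a basis) is precisely why the easy argument yields only the upper bound and the lower bound genuinely requires the $d>2$ machinery of the second paragraph.
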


To prove this theorem we need
the following two lemmata.
\begin{lemma}\label{lem:generated-by-S_2d}
For any natural number $d>2$, every multi-homogenous identity  (with any number of variables) over \matd\ of degree at most $2d+1$ is a consequence of the standard identity $S_{2d}$.
\end{lemma}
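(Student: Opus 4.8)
The plan is to argue according to the degree $m\le 2d+1$ of the multilinear (i.e.\ multi-homogeneous with all degrees $1$) identity $f=f(x_1,\dots,x_m)$ of \matd, the real content lying at $m=2d+1$. If $m<2d$ then $f=0$, since \matd\ has no nonzero identity of degree below $2d$ (the classical lower bound complementing the Amitsur--Levitzki Theorem), and $0$ is vacuously a consequence of $S_{2d}$. If $m=2d$ then $f=c\cdot S_{2d}(x_1,\dots,x_{2d})$ for some $c\in\F$, because it is classical that the space of multilinear identities of \matd\ of degree $2d$ is one-dimensional, spanned by $S_{2d}$ (see, e.g., \cite{Row80,Dre99}). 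So assume from now on that $m=2d+1$.

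\emph{Reduction by substituting the unit.} For each $i\in[2d+1]$, the polynomial $f_i:=f\big|_{x_i\leftarrow 1}\in\F\langle x_1,\dots,\widehat{x_i},\dots,x_{2d+1}\rangle$ is again an identity of \matd\ (substituting the identity matrix for a variable of an identity gives an identity), and it is multilinear of degree $2d$; hence by the case $m=2d$ we have $f_i=\lambda_i\, S_{2d}(x_1,\dots,\widehat{x_i},\dots,x_{2d+1})$ for some $\lambda_i\in\F$. Put
\[
g:=\sum_{i=1}^{2d+1}\lambda_i\, x_i\cdot S_{2d}(x_1,\dots,\widehat{x_i},\dots,x_{2d+1}),
\]
which is by construction a consequence of $S_{2d}$. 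By Lemma~\ref{fac:S_2d-equal-zero-when-constant} (applied with the constant $1$), $S_{2d}$ becomes the zero polynomial whenever one of its arguments is replaced by $1$; hence $g\big|_{x_j\leftarrow 1}=\lambda_j\,S_{2d}(x_1,\dots,\widehat{x_j},\dots,x_{2d+1})=f_j$ for every $j\in[2d+1]$. Therefore $h:=f-g$ is a multilinear identity of \matd\ of degree $2d+1$ with $h\big|_{x_j\leftarrow 1}=0$ for all $j$, and since $g\in T(S_{2d})$ it suffices to prove $h\in T(S_{2d})$.

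\emph{Reduction to commutator polynomials.} Since \F\ has characteristic $0$, a multilinear polynomial all of whose substitutions $x_j\leftarrow 1$ vanish is exactly a linear combination of multilinear commutator polynomials — the classical description of multilinear \emph{proper} polynomials (cf.\ Proposition~\ref{prop:generated-by-multi-commutator-polynomial} and \cite{Dre99}; the implication in one direction is Lemma~\ref{lem:collapse}). Hence $h$ is a linear combination of products of left-normed generalized commutators whose blocks partition $\{x_1,\dots,x_{2d+1}\}$ into parts of size $\ge 2$. It remains to prove: \emph{for $d>2$, every multilinear commutator polynomial of degree $2d+1$ that is an identity of \matd\ is a consequence of $S_{2d}$.}

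\emph{The crux, and where $d>2$ enters.} Here one appeals to the structure of the identities of matrix algebras in low degree: comparing the $\S_{2d+1}$-module of multilinear identities of \matd\ of degree $2d+1$ with the $\S_{2d+1}$-module of multilinear consequences of $S_{2d}$ (the latter induced, roughly, from the sign-type component carrying $S_{2d}$ in degree $2d$), one checks that for $d>2$ these two modules coincide — equivalently, that no partition $\lambda\vdash 2d+1$ contributes an identity of \matd\ beyond what $S_{2d}$ already forces. The hypothesis $d>2$ is used precisely at this comparison: for $d=2$ the modules differ, the discrepancy being the fully linearized Hall identity of degree $5=2\cdot 2+1$ (arising from the Cayley--Hamilton fact that a traceless $2\times2$ matrix has a scalar square, so $[[x,y]^2,z]$ vanishes on \mattwo), whereas for $d\ge 3$ the known sources of extra identities are of strictly larger degree — for instance a minimal central polynomial of \matd\ has degree $d^2$, giving identities only of degree $\ge d^2+1>2d+1$. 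I expect this final comparison — pinning down exactly which multilinear commutator polynomials of degree $2d+1$ annihilate \matd\ and exhibiting each as a consequence of $S_{2d}$ — to be the main obstacle; depending on how much of the cocharacter theory of \matd\ in degrees $2d$ and $2d+1$ one is willing to quote, it is either essentially a citation or a moderately involved combinatorial computation with generalized commutators and the standard polynomial.
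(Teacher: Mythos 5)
Your proposal is structured quite differently from the paper's proof, which is a pure citation: the paper cites Exercise~7.1.2 in \cite{Dre99} for the fact that \matd\ has no identity of degree below $2d$ and that the multilinear identities of degree exactly $2d$ are spanned by $S_{2d}$, and it cites Leron \cite{Ler73} for the statement that, for $d>2$, every multilinear identity of \matd\ of degree $2d+1$ is a consequence of $S_{2d}$. Your reductions are correct and add genuine content the paper omits: the substitution $x_i\leftarrow 1$ combined with Lemma~\ref{fac:S_2d-equal-zero-when-constant} cleanly peels off the ``non-proper'' part of $f$, and over a field of characteristic $0$ the remainder $h$ (with $h|_{x_j\leftarrow 1}=0$ for all $j$) is indeed a multilinear proper (commutator) polynomial. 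This is a tidy normalization.

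However, the crux --- showing that every multilinear commutator polynomial of degree $2d+1$ that is an identity of \matd, for $d>2$, lies in $T(S_{2d})$ --- is exactly where you stop, and your own wording concedes this: you gesture at comparing $\S_{2d+1}$-module structures and say the comparison is ``either essentially a citation or a moderately involved combinatorial computation,'' but you carry out neither. The heuristic that central polynomials of \matd\ have degree $d^{2}\ge 9>2d+1$ does not close the gap; it rules out one potential source of extra identities without showing there is no other. That is precisely the content of Leron's theorem, and without invoking it (or actually performing the $\S_{2d+1}$-cocharacter analysis) the degree-$(2d+1)$ case, which is the whole point of the lemma, remains unproved. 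Your remark that the hypothesis $d>2$ excludes the multilinearized Hall identity $[[x,y]^{2},z]$ on \mattwo\ is accurate and matches the paper's note after Lemma~\ref{lem:relation-S-2d-Matd}, but an example of failure at $d=2$ is not a proof of success at $d>2$. In short, the preliminary cases (degree $<2d$, degree $=2d$) and the reduction to proper polynomials are fine, but the main step is left as an acknowledged obstacle rather than discharged; a citation of Leron \cite{Ler73} (as the paper does) or a worked-out representation-theoretic argument is required to complete it.
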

\begin{proof}
By Leron \cite{Ler73}, we know that  for any  $d>2$ every multi-homogenous identity of \matd\ with degree $2d+1$  is a consequence of the standard identity $S_{2d}$. By Exercise $7.1.2$ in  \cite{Dre99}, there are no identities of degree less than $2d$ in \matd\ and  every multi-homogenous polynomial identity of degree $2d$  in  $\matd$ is also a consequence of the standard identity $S_{2d}$.
\end{proof}

By Lemma \ref{lem:special-basis}, there is a basis  $\{A_1,A_2,\ldots,A_{m}\}$ of $\matd$, where $A_1,\ldots,A_m$ are all multi-homogenous commutator identities (Definition \ref{def:commutator_identity}).

\begin{lemma}\label{lem:2d+2-cant-generate}
Let $P$  be an identity of \matd\ of degree $2d+1$ and let $G$ be a basis  $\{A_1,A_2,\ldots,A_m\}$ of $\matd$, where $A_1,\ldots,A_m$ are all multi-homogenous  commutator identities of \matd. And assume $Q_{G}(P)=k$, that is, $k$ is the minimal number such that exist $k$ substitution instances $B_1,B_2,\ldots,B_k$ of $A_1,A_2,\ldots,A_m,$ for which:
$$P\in \ideal{B_1,B_2,\ldots,B_k}.$$
Then, no $B_\l$, for $\l\in[k]$, is a substitution instance of a basis element $A_j$ whose degree  is  greater than $2d+1$.
\end{lemma}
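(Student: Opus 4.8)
Suppose toward a contradiction that the minimality $k=Q_{G}(P)$ is witnessed by $B_1,\dots,B_k$, each a substitution instance of some $A_{j(i)}\in G$, and that $B_{\ell}=A_{j_0}(h_1,\dots,h_t)$ with $h_i\in\freea$, where $t=\deg A_{j_0}>2d+1$. The plan is to show that $P$ is then already generated by $k-1$ substitution instances. The first step is to pin down the minimal degree of $B_{\ell}$. Split each $h_i=h_i^{(0)}+h_i^{(\ge 1)}$ into its constant part and its positive-degree part. By multilinearity of $A_{j_0}$ and Lemma \ref{lem:collapse} (a multi-homogeneous commutator polynomial vanishes identically once any variable is replaced by a field constant), every contribution $A_{j_0}(h_1^{(\varepsilon_1)},\dots,h_t^{(\varepsilon_t)})$ with some $\varepsilon_i=0$ is the zero polynomial, so $B_{\ell}=A_{j_0}(h_1^{(\ge 1)},\dots,h_t^{(\ge 1)})$; hence every monomial of $B_{\ell}$ has total degree $\ge t>2d+1$, i.e.\ $B_{\ell}^{(m)}=0$ for all $m\le 2d+1$. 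The identical computation, together with the fact that $\matd$ has no identity of degree $<2d$, shows that every $B_i$ has all its monomials of degree $\ge\deg A_{j(i)}\ge 2d$.

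Next I would take a representation $P=\sum_{i\in[k]}\sum_{s}f_{i,s}B_ig_{i,s}$ and apply the projection onto the homogeneous component of total degree $2d+1$. Since $P$ is homogeneous of degree $2d+1$ the left-hand side is $P$, while each $(f_{\ell,s}B_{\ell}g_{\ell,s})^{(2d+1)}$ vanishes because all its monomials have degree $>2d+1$. Therefore
$$P=\sum_{i\ne\ell}\sum_{s}(f_{i,s}B_ig_{i,s})^{(2d+1)}=\sum_{i\ne\ell}\sum_{s}\sum_{a+b+c=2d+1}f_{i,s}^{(a)}B_i^{(b)}g_{i,s}^{(c)},$$
and the minimal-degree bound forces $b\in\{2d,2d+1\}$ in the inner sum, with $b=2d$ possible only when $\deg A_{j(i)}=2d$. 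By multilinearity of $A_{j(i)}$ each $B_i^{(b)}$ is a sum of substitution instances of $A_{j(i)}$ obtained by substituting homogeneous parts of the $h$'s, so every substitution instance appearing on the right is an instance of a basis element of degree at most $2d+1$, and $B_{\ell}$ (and any other instance of a degree-$>2d+1$ basis element, which also vanishes under the projection) has been removed. This already exhibits $P$ in a two-sided ideal generated by instances of degree-$\le 2d+1$ basis elements only.

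The main obstacle is the remaining bookkeeping: a priori, replacing a generator $B_i$ by its homogeneous parts $B_i^{(2d)},B_i^{(2d+1)}$ could inflate the generator count beyond $k-1$. I would control this by a case split on $\deg A_{j(i)}$. When $\deg A_{j(i)}=2d+1$, the only surviving term is $f_{i,s}^{(0)}B_i^{(2d+1)}g_{i,s}^{(0)}$, i.e.\ a scalar multiple of the single instance $B_i^{(2d+1)}=A_{j(i)}(h_{i,1}^{(1)},\dots,h_{i,2d+1}^{(1)})$ (the scalar being absorbable by multilinearity), so $B_i$ is replaced one-for-one. When $\deg A_{j(i)}=2d$, one also gets terms $f_{i,s}^{(1)}B_i^{(2d)}g_{i,s}^{(0)}$, $f_{i,s}^{(0)}B_i^{(2d)}g_{i,s}^{(1)}$ and the sum making up $B_i^{(2d+1)}$; here I would use the homogeneous relation obtained by projecting $P=\sum f_{i,s}B_ig_{i,s}$ to total degree $2d$, namely $\sum_{i}\big(\sum_{s}f_{i,s}^{(0)}g_{i,s}^{(0)}\big)B_i^{(2d)}=0$, to absorb the extra degree-$2d$ instances and keep the overall count at most $k-1$. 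Combining this with the minimal-degree analysis and Lemma \ref{lem:collapse} yields $Q_G(P)\le k-1$, contradicting $Q_G(P)=k$; hence no $B_{\ell}$ can be a substitution instance of a basis element of degree greater than $2d+1$.
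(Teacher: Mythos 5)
Your overall strategy matches the paper's: take a representation $P=\sum_{i,s}f_{i,s}B_ig_{i,s}$, project onto the degree-$(2d{+}1)$ homogeneous component, and use Lemma~\ref{lem:collapse} together with the multilinearity of the basis polynomials to show that any $B_\ell$ arising from a generator of degree $>2d+1$ contributes nothing there. You also compress the paper's case split: the paper treats separately the subcase where some coordinate of $\overline{Q}$ is constant (so $B_\ell=0$ by Lemma~\ref{lem:collapse}) and the subcase where every monomial of $B_\ell$ has degree $>2d+1$, whereas your opening observation---strip constant parts of the $h_i$ by multilinearity and Lemma~\ref{lem:collapse}---handles both at once. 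This is a genuine tidying.

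The divergence, and the place where your proposal remains incomplete, is the final counting step. The paper asserts $P\in\langle B_1^{\le 2d+1},\ldots,B_k^{\le 2d+1}\rangle$, drops $B_\ell^{\le 2d+1}=0$, and immediately concludes $Q_G(P)\le k-1$; this is terse, because the truncations $B_i^{\le 2d+1}$ are not themselves substitution instances of basis elements, so membership in that ideal does not on its face bound the number of instances used. You have correctly put your finger on exactly this: after projection, $B_i^{(2d+1)}$ coming from a degree-$2d$ basis element $A_{j(i)}$ decomposes by multilinearity into a sum of $2d$ distinct substitution instances, so a na\"{\i}ve accounting can push the count far above $k-1$. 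However, your proposed repair does not close the gap: you invoke the degree-$2d$ relation $\sum_i\bigl(\sum_s f_{i,s}^{(0)}g_{i,s}^{(0)}\bigr)B_i^{(2d)}=0$ and claim it ``absorbs'' the extra instances, but the problematic multiplicity comes from the $B_i^{(2d+1)}$ pieces, not from $B_i^{(2d)}$, and you supply no mechanism by which a relation among the degree-$2d$ parts eliminates the degree-$(2d{+}1)$ instances or returns the total to $k-1$. Until that bookkeeping is made precise, your argument has the same shape as the paper's and an explicitly flagged but unresolved step at exactly the point where the count of surviving substitution instances must be controlled.
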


\begin{proof}
Assume there is $A_j$ (for $j\in[m]$) in the basis $G$ such that  the degree of $A_j(\overline x)$ is greater than $2d+1$.
In the following, we show that none of  $B_\l$\,($\l\in[k]$) is  a substitution instance of $A_j$.

Assume otherwise. Hence, there is a $B_\ii,\,\ii\in[k]$, such that $B_\ii$  is the substitution instance of $A_j$.
Since $A_j(\overline x)$ is homogeneous, every term in $A_j(\overline x)$ is of degree greater than $2d+1$.

We consider  the following two cases:
\medskip

\case 1 Every term in the  $A_j(\overline Q)$, which is a  substitution instances of $A_j(\t x)$, is  of degree greater than $2d+1$.

For convenience, given a polynomial $f$, we denote by $f^{\leq j}$ the polynomial    $\sum_{i=0}^{j}\degr{f}{i}$, namely the sum of all homogenous parts of $f$ of degree at most  $j.$ We consider the $2d+1$ homogenous part, that is:
\begin{align*}
  P&=\degr{P}{2d+1}\\
  &\in \set{\degr{h}{2d+1}\;\big|\; h\in\ideal{B_1,B_2,\ldots,B_k}}\subset \ideal{\degr{B_1}{\leq 2d+1},\ldots,\degr{B_k}{ \leq 2d+1}} .
\end{align*}
But $\degr{B_\ii}{\leq 2d+1}=\degr{A_j(\overline Q)}{\leq 2d+1}=0$, because, in this case, every term in $A_j(\overline Q)$ is of degree greater than $2d+1$. So $P$ can also belong to the ideal generated by $\set{\degr{B_1}{\leq 2d+1},\degr{B_2}{\leq 2d+1},\ldots,
\degr{B_k}{\leq 2d+1}}\setminus\degr{B_\ii}{\leq 2d+1}$.
This  means $Q_{G}(P)=k-1$ which contradicts  $Q_G(P)=k$. Thus  the assumption is false.\medskip

\case 2 There is a term of degree at most  $2d+1$ in $A_j(\overline Q)$, which is a  substitution instance of $A_j(\t x)$.

But we assumed that every term in $A_j(\t x)$ must be of degree greater than $2d+1$.  This means one of the  coordinates  of $\overline Q$ must be a constant.
That is, $A_j(\overline Q)=0$ (by Lemma \ref{lem:collapse}).
So $P$ can be generated by $\set{B_1,B_2,\ldots,B_k}\setminus B_i$. Hence, $Q_{G}(P)=k-1,$ which contradicts  $Q_G(P)=k$. Thus  the assumption is false.

\medskip

Now we can conclude that the assumption that there is a $B_\ii,\; \ii \in[k],$ such that $B_\ii$ is a substitution instance of $A_j$ is false.
So none of $B_\l$\;($\l\in[k]$) is  a substitution instance of $A_j$.
\end{proof}

We are now  back to the proof of Lemma \ref{lem:relation-S-2d-Matd}:
\begin{proof}
Let $\mathcal B$ be a basis  $\{A_1,A_2,\ldots,A_m\}$ of $\matd$, where $A_1,\ldots,A_m$ are all multi-homogenous  commutator identities of \matd.
Let
$$
\degr{\mathcal B}{\leq 2d+1}:=\{A_i\in \mathcal B\;|\;\text{the degree of $A_i$ is no more than $2d+1$}\}.
$$
 For any identity $P$ of \matd\ of degree $2d+1$, by Lemma $\ref{lem:2d+2-cant-generate}$,
 $$Q_{\degr{\mathcal B}{\leq 2d+1}}(P)=Q_{\mathcal B}(P).$$
This also means that every identity of \matd \ of degree at most $2d+1$ can be generated by $\degr{\mathcal B}{\leq 2d+1}$. Thus,  $S_{2d}$ can be generated by $\degr{\mathcal B}{\leq 2d+1}$. Write $\degr{\mathcal B}{\leq 2d+1}$ as the set $\{A_1',A_2',\ldots, A_{m'}'\},\, m'\leq m$, where the degree of $A_i'$\;($\forall i\in[m']$) is less than $2d+1$. By Lemma \ref{lem:generated-by-S_2d}, $A_1',\ldots, A_{m'}$  is generated by $S_{2d}$.
Then, by Equation \ref{eq:propostion_generator_set}  in Proposition \ref{prop:generate-means-less-Q},  for any identity $P$ over \matd\ of   degree $2d+1$:   \begin{equation}\label{eq:count_constant_of_dimension_3}
  \frac{1}{Q_{\degr{\mathcal B}{\leq 2d+1}}(S_{2d})} Q_{S_{2d}}(P)\le Q_{\degr{\mathcal B}{\leq 2d+1}}(P)\le \left(\max_{B\in \mathcal B'}Q_{S_{2d}}(B)\right) Q_{S_{2d}}(P)~~~d>2.
\end{equation}

Namely, for every  identity  $P$  of \matd\ of degree  $2d+1$,there are constants $c,c'$ such that: $$c Q_{S_{2d}}(P)\le Q_{\mathcal B}(P)\le c' Q_{S_{2d}}(P)~~~~d>2.$$
\end{proof}

We can now conclude the main theorem of this section, Theorem \ref{thm:main_lower_bound}, which we restate for convenience:

\begin{main-lower-bound}
Let \F\ be any field of characteristic 0. For every natural number $d>2$ and for every finite basis \(\mathcal B\) of the T-ideal of identities of  \matd, there exists an identity \(P\) over \matd\ of degree $2d+1$ with $n$ variables, such that $Q_{\mathcal B}(P)=\Omega(n^{2d})$.
\end{main-lower-bound}

\para{Note on the case of $d=2$.}
When $d=2$,   Lemma \ref{lem:relation-S-2d-Matd} is not true. For example, the polynomial $f=[[x_1,x_2][x_3,x_4]+[x_3,x_4][x_1,x_2],x_5]$ is an identity over  ${\rm Mat}_2(\F),$ but in \cite{Ler73} it is proved that $f$ cannot be generated by $S_4$. Namely the restriction $d>2$ in Lemma \ref{lem:relation-S-2d-Matd}, and also in Theorem \ref{thm:main_lower_bound}, is essential for our proof.

\section{Relations to tensor-rank }
Here we show that in order to make the hard (non-explicit) instances
$f$ from Theorem \ref{thm:main_lower_bound} into explicit ones, means finding explicit tensors with high tensor-rank.
This generalizes (to any order) a similar observation made in \cite{Hru11} for order 3 tensors. This means that the\textit{ specific} hard instances we provide in Theorem \ref{thm:main_lower_bound} are not good candidates for proof complexity hardness, because it is reasonable to assume they do not have small size circuits.

\begin{definition}
A  tensor $A : [n]^r\to \F$ is a \textbf{\textit{simple tensor }}if there exist $r$ vectors $a_1,\ldots,a_r:[n]\to \F$ such that $A=a_1\otimes\cdots\otimes a_r$, where $\otimes $ denotes tensor product, that is, $A$ is defined by $A(i_1,i_2,\ldots, i_r)=
a_1(i_1)\cdots a_r(i_r)$.
\end{definition}
\begin{definition}
For  a tensor $A$, the \textbf{tensor rank} $rank(A)$ is the minimal $k$ such that there exist $k$ simple tensors $A_1,A_2,\ldots, A_k:[n]^r\to \F$  such that $A =\sum_{i=1}^kA_i $.
\end{definition}

\begin{definition}
For a natural number $n$, let $A$ be a tensor $[n]^{r+1}\rightarrow \F$. We define the \textbf{corresponding polynomials} (from \freea ) \textbf{\textit{of the tensor}} $A$ as follows:
$$f_{j_0}:=\sum_{j_1,j_2,\ldots,j_{r}\in [n]}A(j_0,j_1,\ldots, j_{r})S_{r}(x_{j_1},x_{j_2},\ldots,x_{j_{r}}),~~~\forall j_0\in[n].$$
\end{definition}

By the following theorem, if we find  an  collection  of \emph{explicit}\footnote{A polynomial is said to be \emph{explicit} if the coefficient of a monomial of degree $d$ is computable by algebraic circuits of size at most $\poly(d)$, where $d$ is a natural number.} s-polynomials  $\nx{f}$ over \matd\ such that $Q_{S_{2d}}(\nx{f})$ is $\Omega(n^{2d})$, then we can find an \textit{explicit}\footnote{A tensor $T:[n]^r\rightarrow \F$ is called explicit if $T(i_1 ,\ldots, i_r)$ can be computed by algebraic circuits of size at most polynomial in $\poly(r \lg n)$, that is, at most polynomial in the size of the input $(i_1 ,\ldots, i_r)$.\textbraceright} tensor $A:[n]^{2d+1}\rightarrow \set{0,1}$ with rank $\Omega(n^{2d})$, where the corresponding polynomials of A are the s-polynomials $\nx{f}$.

\begin{theorem}
For a natural number $n$, let $A_{\nx{f}}$ be a tensor $[n]^{r+1}\rightarrow \F$ and let $\nx{f}\in\freea$ be  the corresponding polynomials of $A_{\nx{f}}$, then:$$Q_{S_{2d}}(\nx{f})\leq rank(A_{\nx{f}}).$$
\end{theorem}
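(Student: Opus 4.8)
The plan is to unwind the two definitions involved and exploit the multilinearity of the standard polynomial $S_{2d}$; note first that the statement only makes sense when $r=2d$, so that $S_r=S_{2d}$, and I will write $k:=rank(A_{\nx f})$ and $A:=A_{\nx f}$. First I would fix a rank decomposition $A=\sum_{i=1}^k a^{(i)}_0\otimes a^{(i)}_1\otimes\cdots\otimes a^{(i)}_{2d}$ of $A$ into $k$ simple tensors, where each $a^{(i)}_l:[n]\to\F$. Substituting this into the definition of the corresponding polynomials and interchanging the finite sums, for every $j_0\in[n]$ one gets
\[
f_{j_0}=\sum_{i=1}^k a^{(i)}_0(j_0)\sum_{j_1,\ldots,j_{2d}\in[n]} a^{(i)}_1(j_1)\cdots a^{(i)}_{2d}(j_{2d})\,S_{2d}(x_{j_1},\ldots,x_{j_{2d}}).
\]

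Next I would collapse the inner sum into a single substitution instance of $S_{2d}$. Since $S_{2d}$ contains each of its $2d$ arguments exactly once in every monomial, it is multilinear in those arguments; hence, setting the linear forms $L^{(i)}_l:=\sum_{j=1}^n a^{(i)}_l(j)\,x_j\in\freea$ and $\overline{L^{(i)}}:=(L^{(i)}_1,\ldots,L^{(i)}_{2d})$, expanding $S_{2d}(\overline{L^{(i)}})$ reproduces exactly the inner sum, so
\[
f_{j_0}=\sum_{i=1}^k a^{(i)}_0(j_0)\cdot S_{2d}(\overline{L^{(i)}}),\qquad\text{for every } j_0\in[n].
\]
By construction each $S_{2d}(\overline{L^{(i)}})$ is a substitution instance of the standard identity $S_{2d}(x_1,\ldots,x_{2d})$, and crucially the \emph{same} $k$ substitution instances $S_{2d}(\overline{L^{(1)}}),\ldots,S_{2d}(\overline{L^{(k)}})$ serve for all coordinates $f_{j_0}$.

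Finally I would observe that each $f_{j_0}$ is an $\F$-linear combination of these $k$ fixed polynomials. Since field scalars are elements of $\freea$, a scalar multiple of an element of a two-sided ideal lies in that ideal, and therefore so does any $\F$-linear combination of ideal generators; hence $f_{j_0}\in\ideal{S_{2d}(\overline{L^{(1)}}),\ldots,S_{2d}(\overline{L^{(k)}})}$ for every $j_0\in[n]$. By the definition of $Q_{S_{2d}}$ on a sequence of polynomials this gives $Q_{S_{2d}}(\nx f)\le k=rank(A_{\nx f})$, as claimed. I do not expect a genuine obstacle: the only points requiring care are checking that the multilinear expansion of $S_{2d}(\overline{L^{(i)}})$ matches the inner sum term by term (vanishing terms with repeated indices cause no trouble), and recording explicitly that scalar multiples of generators stay in the two-sided ideal so that linear combinations are permitted.
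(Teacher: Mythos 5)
Your argument is correct and matches the paper's proof essentially step for step: both fix a rank decomposition of $A_{\nx f}$ into simple tensors, interchange the sums, use the multilinearity of $S_{2d}$ to collapse the inner sum into a single substitution instance $S_{2d}(\overline{L^{(i)}})$ with linear-form arguments, and observe that the same $k$ substitution instances generate all coordinates $f_{j_0}$ simultaneously. Your added remarks (that $r$ must equal $2d$ for the statement to be meaningful, and that scalar multiples of generators stay in the two-sided ideal) are small clarifications the paper leaves implicit, not a different route.
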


\begin{proof}

Assume $rank(A_{\nx{f}})=R$. Namely  we can find  $R$  simple tensors  $A_1,A_2,\ldots, A_R$ such that
\begin{equation}\label{eq:tensor-decomposition}
  A_{\nx{f}} = \sum_{i=1}^R A_i.
\end{equation}

For every $i\in[R]$,  by simple tensor's definition,  there exist $2d+1$ vectors $\a_0,\a_1,\ldots,\a_{2d}:[n]\to F$ such that $A_i=\a_0\otimes \a_1\otimes \dots \otimes \a_{2d}$. Namely  $A_i(i_0,i_1,i_2,\ldots, i_{2d})=\a_0(i_0)\a_1(i_1) \cdots \a_{2d}(i_{2d})$, where $i_0 ,\ldots,  i_{2d}\in[n]$.

Concerning the corresponding polynomials $ \nx{f}$ of $A_{\nx{f}}$, for every $j_0\in[n]$,
\begin{align*}
f_{j_0}&=\sum_{j_1,j_2,\ldots,j_{r}\in [n]}A_{\nx{f}}(j_0,\ldots,j_{2d})S_{2d}(x_{j_1},\ldots, x_{j_{2d}})\\
&=\sum_{j_1,j_2,\ldots,j_{r}\in [n]}\sum_{i=1}^RA_i(j_0,\ldots, j_{2d})S_{2d}(x_{j_1},\ldots, x_{j_{2d}})~~~\text{(by \ref{eq:tensor-decomposition})}\\
&=\sum_{i=1}^R\sum_{j_1,j_2,\ldots,j_{r}\in [n]}A_i(j_0,\ldots, j_{2d})S_{2d}(x_{j_1},\ldots, x_{j_{2d}})\\
&=\sum_{i=1}^R \a_{0}(j_0)\sum_{j_1,j_2,\ldots,j_{r}\in [n]}\a_1(j_1)\cdots \a_{2d}(j_{2d})S_{2d}(x_{j_1},x_{j_2},\ldots, x_{j_{2d}})\\
&=\sum_{i=1}^R \a_0(j_0)S_{2d}\left(\sum_{1\leq j\leq n}\a_1(j)x_j,\sum_{1\leq j\leq n}\a_2(j)x_j,\ldots,\sum_{1\leq j\leq n}\a_{2d}(j)x_j\right)\\
&=\sum_{i=1}^R \a_0(j_0)S_{2d}(\t P_i)\\
\end{align*}
(For convenience, write $\left(\sum_{1\leq j\leq n}\a_1(j)x_j,\sum_{1\leq j\leq n}\a_2(j)x_j,\ldots,\sum_{1\leq j\leq n}\a_{2d}(j)x_j\right)$ as $\t P_i$, for any $i\in[R]$).

Namely $$\nx{f}\in\ideal{S_{2d}\left(\t P_1\right),\ldots,S_{2d} \left(\t P_R\right)}.$$

Thus $ Q_{S_{2d}}(\nx{f})\leq R$, namely $Q_{S_{2d}}(\nx{f})\leq ~\text{rank}(A_{\nx{f}})$.
\end{proof}

By the above theorem, we have the following:
\begin{corollary}
  If there exists a n explicit  collection  of s-polynomials  $\nx{f}$ (that are all identities of)  \matd,\ such that $Q_{S_{2d}}(\nx{f}) = \Omega(n^{2d})$, then there exists an \textit{explicit} tensor $A:[n]^{2d+1}\rightarrow \set{0,1}$ with tensor-rank  $\Omega(n^{2d})$.
\end{corollary}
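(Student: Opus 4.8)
The plan is to realize the hypothesized family of s-polynomials $\nx{f}$ as the \emph{corresponding polynomials} of a concrete $0$--$1$ tensor $A$, and then read off the rank bound directly from the theorem just proved. So the statement is essentially a one-line corollary once the right tensor is written down; the only points that need a little care are the bookkeeping that makes the tensor's ``corresponding polynomials'' match $\nx{f}$ exactly, and the verification that $A$ inherits explicitness from $\nx{f}$.

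First I would unwind the hypothesis. We are given explicit s-polynomials $\nx{f}$, each an identity of \matd, of the form
$$f_{j_0}=\sum_{j_1<j_2<\ldots<j_{2d}\in[n]}c_{j_0,\,j_1j_2\cdots j_{2d}}\,S_{2d}(x_{j_1},x_{j_2},\ldots,x_{j_{2d}}),\qquad c_{j_0,\,j_1j_2\cdots j_{2d}}\in\{0,1\},$$
with $Q_{S_{2d}}(\nx{f})=\Omega(n^{2d})$. I would then define $A:[n]^{2d+1}\to\{0,1\}$ by setting $A(j_0,j_1,\ldots,j_{2d})=c_{j_0,\,j_1j_2\cdots j_{2d}}$ when $j_1<j_2<\ldots<j_{2d}$, and $A(j_0,j_1,\ldots,j_{2d})=0$ otherwise. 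The corresponding polynomials of $A$ are $\sum_{j_1,\ldots,j_{2d}\in[n]}A(j_0,j_1,\ldots,j_{2d})\,S_{2d}(x_{j_1},\ldots,x_{j_{2d}})$; every term whose index tuple is not strictly increasing is annihilated by the factor $A(\cdot)=0$, so this sum collapses exactly to the defining expression of $f_{j_0}$. Hence $\nx{f}$ is precisely the tuple of corresponding polynomials of $A$. Applying the theorem above (with $r=2d$) then gives $Q_{S_{2d}}(\nx{f})\le rank(A)$, and combined with the hypothesis this yields $rank(A)=\Omega(n^{2d})$.

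The remaining --- and, if anything, the only mildly delicate --- step is to check that $A$ is explicit, i.e., that $A(j_0,j_1,\ldots,j_{2d})$ is computable by a circuit of size $\poly((2d+1)\lg n)$. Given $(j_0,\ldots,j_{2d})$ one first tests, in time linear in the input length, whether $j_1<\ldots<j_{2d}$, returning $0$ if not. Otherwise the value to output is $c_{j_0,\,j_1j_2\cdots j_{2d}}$, and here I would observe that this constant coincides with the coefficient of the \emph{sorted} monomial $x_{j_1}x_{j_2}\cdots x_{j_{2d}}$ in $f_{j_0}$: among the summands $S_{2d}(x_{k_1},\ldots,x_{k_{2d}})$ with $k_1<\ldots<k_{2d}$, only the one with $\{k_1,\ldots,k_{2d}\}=\{j_1,\ldots,j_{2d}\}$ (forcing $k_i=j_i$) produces that monomial at all, and it does so with coefficient $+1$, corresponding to the identity permutation in $\S_{2d}$. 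Since $\nx{f}$ is explicit, the coefficient of any given degree-$2d$ monomial in $f_{j_0}$ is computable by a circuit of size polynomial in $d$ and $\lg n$, so $A$ is explicit. I do not expect any genuine obstacle here: the whole argument is a direct packaging of the preceding theorem together with this coefficient-extraction observation.
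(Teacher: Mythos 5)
Your proposal is correct and matches the paper's intended argument: define $A$ from the $0$--$1$ coefficients of the s-polynomials (padding with zeros off the sorted index set), so that $\nx{f}$ are the corresponding polynomials of $A$, and then invoke the preceding theorem $Q_{S_{2d}}(\nx{f})\le \mathrm{rank}(A)$. The paper leaves this as an immediate consequence; your write-up just supplies the straightforward tensor construction and the coefficient-extraction check for explicitness.
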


\section{Matrix identities as hard proof complexity candidates}\label{sec:app_rel_to_PC}
Here we seek to find connections between the work we have done above to the problem of proving lower bounds in proof
complexity.

Consider a matrix identity $f$ over \matd. It is a non-commutative polynomial. Let $f$ be a nonzero polynomial identity over \matd. Then $f$ is a nonzero non-commutative polynomial from \freea. If we substitute each (matrix) variable $x_i$ in $f$ by a \dbyd\ matrix of \emph{entry-variables} $\{x_{ijk}\}_{j,k\in[n]}$, then now $f$ corresponds to $d^2$ commutative zero polynomials, one for each entry computed by $f$. \iddofix{**phrase**} Accordingly, let $F$ be a non-commutative circuit computing $f$. Then under the above substitution of $d^2$ entry-variables to each variable in $F$, we get $d^2$  non-commutative circuits, each computing the zero polynomial \emph{when considered as commutative polynomials}.
 Formally, we define the set of $d^2$ non-commutative circuits corresponding to the non-commutative circuit $F$ as follows:

\begin{definition}[$\convert{F}$, $\convert{F=0}$]
\label{def:double-bracket}
 Let $F$ be a non-commutative circuit computing the polynomial $f\in\freea$, such that $f$ is an identity of \matd. We define
$\convert{F}$ as  the  set of $d^2$ circuits which are generated  from  bottom to top in the circuit of $F$ according to the following rules:
\begin{enumerate}

\item every variable x in $F$ corresponds to $d^2$ new variables $x_{ij},i,j\in[d]$;
\item every plus gate $X\oplus Y$, where $X,Y$ represent two circuits, in $F$ corresponds to $d^2$ plus gates $\oplus_{ij},i,j\in[d]$ where each plus gate $\oplus_{ij}$ connects the corresponding circuit $X_{ij}$ and $Y_{ij}$ which have been generated before; \iddocomment{"generated before.." is unclear}
\item every multiplication gate $X\otimes Y$ in $F$ corresponds to $d^2$ plus gates $\oplus_{ij},i,j\in[d]$ where each plus gate $\oplus_{ij} $  is connected to $d$ multiplication gates $\otimes_k,k\in[d]$ which represent the multiplication of two corresponding circuit $X_{ik}$ and $Y_{kj}$ that have been generated before. (Formally, plus gates have \emph{fan-in two}, and so $\oplus_{ij}$ is the root of a binary tree whose internal nodes are all plus gates and whose $d$ leaves are the product gates $\otimes_k$, $k\in[d]$.)

\end{enumerate}
We define $\convert{F=0}$ to be the set of equations between circuits, where each circuit in $\convert{F}$ equals the circuit $0$.
\end{definition}
\begin{fact}
Since every gate in $F$ corresponds  to at most $d^3$ gates in $\convert{F},$ we have:

$$\big|\convert{F}\big|=O\left(d^3 |F| \right)$$
(where $|F|$ denotes the size of $F$, that is  the number
of nodes in $F$ and $\big|\convert{F}\big|$ denotes the sum of size of all circuits in $\convert{F}$). Thus, if we fix the dimension of a matrix as a constant, then we can claim that $|\convert{f}|=\Theta(|f|)$.
\end{fact}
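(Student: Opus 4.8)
The plan is to establish the bound by a local, node-by-node count over the circuit $F$, reading off the cost of each of the three node types directly from the recursive construction of $\convert{F}$ in Definition \ref{def:double-bracket}. The key observation is that the construction is \emph{local}: each node of $F$ is replaced by a small ``gadget'' of new nodes distributed among the $d^2$ output circuits of $\convert{F}$, and the only edges added are either internal to a gadget or run from a gadget to the roots of the gadgets built for the children of the current node. Consequently it suffices to bound, for each node type, the number of new nodes (and edges) its gadget contributes, and then sum over the nodes of $F$.

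Carrying out the count: a variable leaf $x$ contributes the $d^2$ leaves $x_{ij}$; a sum gate $X\oplus Y$ contributes the $d^2$ sum gates $\oplus_{ij}$; and a product gate $X\otimes Y$ contributes, for each of the $d^2$ index pairs $(i,j)$, one binary tree whose $d-1$ internal nodes are sum gates and whose $d$ leaves are product gates $\otimes_k$ --- that is $2d-1$ nodes per pair, hence $d^2(2d-1)$ nodes in total. Taking the maximum of $d^2$, $d^2$ and $d^2(2d-1)$, every node of $F$ spawns at most $d^2(2d-1)=O(d^3)$ nodes of $\convert{F}$; summing over the at most $|F|$ nodes of $F$ gives $\big|\convert{F}\big|=O(d^3\,|F|)$. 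A short structural induction on $F$ makes this precise if a fully formal argument is wanted, but it is the same bookkeeping. For the concluding claim that $|\convert{f}|=\Theta(|f|)$ when $d$ is a fixed constant, the upper bound $|\convert{F}|=O(|F|)$ is immediate from the above, and the matching lower bound follows because the $d^2$ output circuits collectively contain at least one new node for every node of $F$ (indeed at least $d^2$ variable leaves for every variable leaf of $F$), so $|\convert{F}|\ge |F|$; here $|f|$ and $|\convert{f}|$ are read as the size of a chosen non-commutative circuit $F$ for $f$ and of $\convert{F}$, respectively.

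There is no deep obstacle here: this is a routine counting (``bookkeeping'') lemma, and I expect the entire argument to be a few lines. The only point requiring a little care is consistency of the size measure, since the paper counts edges for commutative circuits but vertices for non-commutative ones (Sec.~\ref{sec:app_arithmetic_circuit}); I would simply remark that each gadget has both its vertex count and its edge count bounded by $O(d^3)$ per node of $F$ (the product-gate gadget has $d$ product gates of fan-in $2$ together with a binary tree of $d-1$ sum gates, so $O(d)$ edges per copy and $O(d^3)$ over all $d^2$ copies), so the $O(d^3|F|)$ estimate --- and hence the $\Theta(|f|)$ consequence for fixed $d$ --- holds regardless of which circuit-size convention is used.
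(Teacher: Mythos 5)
Your proposal is correct and takes essentially the same approach as the paper, which gives the Fact no separate proof beyond the embedded one-liner that ``every gate in $F$ corresponds to at most $d^3$ gates in $\convert{F}$.'' Your node-by-node gadget count (with the product-gate gadget of $d^2(2d-1)$ nodes dominating) is exactly the bookkeeping that justifies that one-liner, and your remarks on the matching lower bound for fixed $d$ and on the vertex-vs-edge size convention are correct and harmless additions.
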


First, we recall the arithmetic proof system $\PC(\F)$ (introduced  in \cite{HT12}, and almost similarly in \cite{HT08}) for deriving (commutative) polynomial identities over a field  $\F$. The system manipulate arithmetic equations, that is, expressions of the form $F = G$ where  $F, G$ are circuits.

\begin{definition}[Arithmetic proofs \PC(\F)]\label{def:arithmetic_proofs}
Let \F\ be a field. The system $\PC(\F)$ proves equations of the form $F = G$, where $F, G$ are
non-commutative arithmetic circuits (over \F). The inference rules are:
\begin{align*}
&\frac{F = G}{G = F} \qquad\qquad \qquad\qquad \qquad
\frac{F = G \qquad G = H}
{F = H}\\
&\frac{ F_1= G_1\qquad F_2= G_2}{F_1+F_2= G_1+G_2} \qquad\qquad
\frac{F_1= G_1\qquad F_2= G2}{F_1 \times F_2= G_1 \times G_2}
\,.
\end{align*}
The axioms are equations of the following form, with $F, G, H$ ranging over non-commutative circuits:
\begin{align*}
&Identity: \qquad F=F\\
&Product \: commutativity:  \qquad F\cdot G= G\cdot F\\
&Addition\: commutativity: \qquad F+G=G+F\\
&Associativity:\qquad   F+(G+H)=(F+G)+H \qquad F\cdot(G\cdot H)=(F\cdot G)\cdot H\\
&Distributivity:\qquad F\cdot(G+H)=F\cdot G+F\cdot H\\
&Zero \:element: \qquad F+0=F \qquad F\cdot 0=0\\
&Unit \:element:\qquad F\cdot 1 =F \\
&Field\ ~identities: \qquad c=a+b \qquad d= a'\cdot b'\\
&~~~~~~~~~~~~~~~~~~~~~~\text{where $a,a',b,b',c,d\in \F$, such that the equations hold in $\F$.} \\
&Circuit \ axiom:\qquad F=F'
\text{~~~~if $F$ and $F'$ are (syntactically) identical when} \\
&~~~~~~~~~~~~~~~~~~~~~~~~~~~~~~~~~~~~~~~~~~\text{both are un-winded into \emph{formulas}.}
\end{align*}
Note that the Circuit axiom can be verified in polynomial time (see e.g., \cite{Jer04}).

A proof $\pi$ in $\PC(\F)$ is a sequence of equations $F_1= G_1, F_2= G_2, \ldots, F_k= G_k$, with $F_i, G_i$ circuits, such that every equation is either an axiom, or was obtained from previous equations by one of the derivation rules.
An equation $F_i= G_i$ appearing in a proof is also called a \textit{proof-line}. Denote by $|\vdash_{\PC(\F)}F|$  the minimum number of lines in a $\PC$ proof of   $F=0$. We say that $\pi$ is a \PC\ proof of \textit{a set} of equations if $\pi$ is a \PC\ and it contains all the equations in the set as proof-lines).
\end{definition}

For $\F$ an infinite field,  $f$ is an identity in \matd\ iff   $\convert{F=0}$ has a $\PC(\F)$ proof. This is easy to prove as follows: assume by contradiction
otherwise, then there must be an assignment $A$ that makes $g  \neq 0$. This follows since the field is infinite  (and so every non
zero polynomial has an assignment that does not nullifies the polynomial).
But this assignment $A$ (extended in any way to all entries) makes the
matrix identity nonzero, in contradiction to the assumption that
it is a matrix identity.

\begin{main-open}  Let $d$ be a positive natural number and let $ \mathcal B$ be a (finite) basis of the T-ideal of the identities of \matd. Assume that $f\in\freea$ is an identity over \matd, and let $F$ be a non-commutative algebraic circuit computing $f$. Then, the minimal number of lines in an arithmetic proof of the collection of  $d^2$ (entry-wise) equations $\llbracket F=0 \rrbracket_d$  corresponding to $F$ is lower bounded (up to a constant factor) in $Q_{\mathcal  B}(f)$. And in symbols:
$$\big|\vdash_{\PC(\F)}\convert{F=0}\big| = \Omega(Q_{\mathcal B}(f)).$$
\end{main-open}


\subsection{Conditions for exponential lower bounds}\label{sec:exponential-lower-bounds}
\todo{Make thinks here the same as in intro} Can we, even potentially,  obtain exponential lower bounds on $\PC(\F)$ proof size using the measure $Q_B(\cd)$ and assuming Conjecture 1 holds?
  The answer is yes, under certain further technical assumptions. We write the assumptions formally:


\medskip

\ind\textbf{Assumptions:}

\begin{enumerate}

\item \textbf{Refinement of Conjecture II}: Assume that for any $d$ and any basis $\mathcal B_d$ of the identities of \matd\ the number of lines in any $\PC(\F)$ proof of $\llbracket F= 0\rrbracket_d$ is at least  $\mathcal C_{\mathcal B_d}\cd Q_{\mathcal B_d}(f)$, where $\mathcal C_{\mathcal B_d}$ is a number depending on \(\mathcal B_{d}\)  and $F$ is the non-commutative arithmetic circuit computing $f$ (this is the same as Conjecture 1 except that now $\mathcal C_{\mathcal B_d}$ is not a constant).

\item Assume that for any sufficiently large $d$ and any basis $\mathcal B_d$ of the identities of \matd, there exists a number $c_{\mathcal B_d}$ such that for all sufficiently large \(n\) there exists an identity $f_{n,d}$ with  $Q_{\mathcal B_d}(f_{n,d})\ge c_{\mathcal B_d}\cd n^{2d}$. (The existence of such identities are known from our unconditional lower bound.)

\item Assume that for the \(c_{\mathcal B_{d}}\) in item 2 above:   $c_{\mathcal B_d}\cd \mathcal C_{\mathcal B_d}= \Omega\left(\frac{1}{\poly(d)}\right)$.

\item \textbf{(Variant of) Conjecture I}: Assume that the non-commutative arithmetic circuit size of $f_{n,d}$ is  at most $\poly(n,d)$.
\end{enumerate}

\ind\textbf{Corollary (assuming Assumptions 1-4 above)}:
There exists a polynomial size (in $n$) family of identities between non-commutative arithmetic circuits, for which any \PC\ proof requires exponential $2^{\Omega(n)}$ number of proof-lines.

\begin{proof} By the assumptions, every $\PC(\F)$-proof of $\llbracket f_{n,d}=0 \rrbracket_d$ has size at least $c_{\mathcal B_d}\cd
\mathcal C_{\mathcal B_d}\cd n^{2d}$.  Consider the family $\{ f_{n,d}\}_{n=1}^\infty$, \textit{where $d$ is a function of $n$}, and we take  $d=n/4$. Then, we get the following lower bound on the number of lines in   $\PC(\F)$-proofs of the family $\{ f_{n,d}\}_{n=1}^\infty$:
\[c_{\mathcal B_d}\cd \mathcal C_{\mathcal B_d}\cd n^{2d}=\frac{1}{\poly(n/4)}n^{n/2}= 2^{\Omega(n)}
,
 \]
which (by Assumption 4) is \textit{exponential} in the arithmetic circuit-size of the identities \(f_{n,d}\) proved.
\end{proof}
\para{Justification of assumptions.}

We wish to justify to a certain extent the new Assumptions 3 above (which lets us obtain the exponential lower bound). We shall use the s-polynomials for this. First, note that Assumption 2 holds for the case of the  s-polynomials, by Theorem \ref{thm:main_lower_bound}.

 We now show that the function $c_{B_d}$ does not decrease too fast. By Equations \ref{eq:count_constant_of_dimension_1}, \ref{eq:count_constant_of_dimension_2}
and \ref{eq:count_constant_of_dimension_3} in
Section \ref{sec:the-lower-bound}, we know that for any natural number $d$, there is an s-polynomial $f,$ such that:
$$Q_{B_d}(f)\geq \frac{1}{Q_{\degr{ B_d}{\leq 2d+1}}(S_{2d})}\frac{1}{2d+1}\frac{{n\choose {2d}}\ln 2}{(2d+1)\ln (4d+2)}.$$
Let $\mathcal B_{d}$ be a set of identities of \matd\  that contains the $S_{2d}$ identities.
Hence, $$Q_{ \degr{ \mathcal B_{d}}{ \leq 2d+1 }} (S_{2d})=1.$$
Thus
$$Q_{\mathcal B_d}(f)\geq \frac{1}{2d+1}\frac{{n\choose {2d}}\ln 2}{(2d+1)\ln (4d+2)}.$$

If we let $d=n/4$, then
$$Q_{\mathcal B_{n/4}}(f)\geq \frac{1}{n/2+1}\frac{{n\choose {n/2}}\ln 2}{(n/2+1)\ln (n+2)}.$$
By Stirling's formula, we get that $n!\sim\sqrt{2\pi n}(\frac{n}{e})^n$. Hence,
${n\choose n/2}\sim \frac{2^{n+1/2}}{\sqrt{n\pi}}$. Then $$Q_{\mathcal B_{n/4}}(f)=\Omega\left(\frac{2^n}{n^{5/2}\ln n}\right).$$
This shows that the function \(c_{B_{d}}\) does not decrease too fast.

\bigskip

We can use the fact that \(c_{B_{d}}\) does not decrease too fast to get the following (conditional exponential lower bound):


\begin{proposition}
Suppose \emph{Assumption 1} above holds (refinement of Conjecture 1) and assume that $\mathcal C_{\mathcal B_{n/4}}=
\Omega(1/{\rm poly}(n))$. Then, there exists a  family of
non-commutative circuits $\{F_n\}_{n=1}^\infty$ (computing the family of polynomials
$\{f_{n,\frac{n}{4}}\}_{n=1}^\infty$)
such that  the number of lines in any $\PC(\F)$-proof of $\llbracket F_n= 0\rrbracket_{n/4}$ is at least
$\mathcal C_{B_{n/4}}\Omega\left(\frac{2^n}{n^{5/2}\ln n}\right)=\Omega
\left(\frac{2^n}{\poly(n)}\right)=2^{\Omega(n)}$.
\end{proposition}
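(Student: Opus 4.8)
\medskip
\noindent\textbf{Proof plan.}
The plan is to instantiate the unconditional generative-complexity bound of Section~\ref{sec:the-lower-bound} at matrix dimension $d=n/4$ and then feed the result into Assumption~1; the only point needing care is that the basis-dependent constants hidden in Lemma~\ref{lem:relation-S-2d-Matd} must stay under control as $d$ grows with $n$, which I would arrange by a convenient choice of basis. So, for all sufficiently large $n$ (divisible by $4$, so that $d:=n/4>2$), I would first fix a finite basis $\mathcal B_{n/4}$ of the identities of ${\rm Mat}_{n/4}(\F)$ that \emph{contains} the standard identity $S_{n/2}$. Such a basis exists: by Kemer~\cite{Kem87} there is a finite basis $\mathcal B_0$, and $\mathcal B_0\cup\{S_{n/2}\}$ is again finite and is still a basis, since adjoining a genuine identity of the algebra does not change the generated T-ideal. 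For this choice one has $Q_{\degr{\mathcal B_{n/4}}{\leq n/2+1}}(S_{n/2})=1$.

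Next I would take the s-polynomial $f_{n,n/4}$ produced by Lemma~\ref{lem:exist_for_nP} and Lemma~\ref{lem:combine_into_one}: it is an identity of ${\rm Mat}_{n/4}(\F)$ of degree $n/2+1$ on $n$ variables. Chaining the three inequalities~\ref{eq:count_constant_of_dimension_1}, \ref{eq:count_constant_of_dimension_2} and~\ref{eq:count_constant_of_dimension_3} of Section~\ref{sec:the-lower-bound} — inequality~\ref{eq:count_constant_of_dimension_3} coming from Lemma~\ref{lem:relation-S-2d-Matd}, which applies because $d=n/4>2$ — and using the choice of basis of the previous paragraph to make the leading factor $1$, one gets
\[
Q_{\mathcal B_{n/4}}(f_{n,n/4})\;\ge\;\frac{1}{Q_{\degr{\mathcal B_{n/4}}{\leq n/2+1}}(S_{n/2})}\cd\frac{1}{n/2+1}\cd\frac{\binom{n}{n/2}\ln 2}{(n/2+1)\ln(n+2)}\;=\;\frac{\binom{n}{n/2}\,\ln 2}{(n/2+1)^2\,\ln(n+2)}.
\]
By Stirling's formula $\binom{n}{n/2}\sim 2^{n+1/2}/\sqrt{\pi n}$, the right-hand side is $\Omega\!\big(2^n/(n^{5/2}\ln n)\big)$. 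I stress that this bound on $Q_{\mathcal B_{n/4}}(f_{n,n/4})$ is \emph{unconditional} — no conjecture enters it — which is exactly what lets the Proposition replace Assumption~3 of the Corollary by the single direct hypothesis on $\mathcal C_{\mathcal B_{n/4}}$.

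Finally I would let $F_n$ be any non-commutative arithmetic circuit computing $f_{n,n/4}$ (existence is trivial; no bound on the size of $F_n$ is asserted here — a size bound is precisely the content of the separate Conjecture~I). By Assumption~1, every $\PC(\F)$ proof of $\llbracket F_n=0\rrbracket_{n/4}$ has at least $\mathcal C_{\mathcal B_{n/4}}\cd Q_{\mathcal B_{n/4}}(f_{n,n/4})$ lines, and substituting the bound just derived together with the hypothesis $\mathcal C_{\mathcal B_{n/4}}=\Omega(1/\poly(n))$ gives
\[
\mathcal C_{\mathcal B_{n/4}}\cd Q_{\mathcal B_{n/4}}(f_{n,n/4})\;=\;\mathcal C_{\mathcal B_{n/4}}\cd\Omega\!\Big(\tfrac{2^n}{n^{5/2}\ln n}\Big)\;=\;\Omega\!\Big(\tfrac{2^n}{\poly(n)}\Big)\;=\;2^{\Omega(n)},
\]
so $\{F_n\}_{n=1}^\infty$ is the required family.

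The main obstacle — in fact the only nonroutine step — is the constant-tracking in the displayed bound on $Q_{\mathcal B_{n/4}}(f_{n,n/4})$. Reading off the three stages: the counting lemma (Lemma~\ref{lem:exist_for_nP}) loses a factor $\Theta\!\big((2d+1)\ln(4d+2)\big)$, the ``combining into one'' lemma (Lemma~\ref{lem:combine_into_one}) loses a further factor $2d+1$, and the passage from $S_{2d}$ to a general basis (Lemma~\ref{lem:relation-S-2d-Matd}) loses a factor $Q_{\degr{\mathcal B_d}{\leq 2d+1}}(S_{2d})$ which, for an arbitrary basis, is a priori unbounded in $d$. The trick that makes the whole argument go through is to choose $\mathcal B_d$ so that it already contains $S_{2d}$, forcing this last factor to be exactly $1$; then the total loss along the chain is merely $\poly(d)=\poly(n)$, which is negligible against $\binom{n}{n/2}=2^{\Theta(n)}$. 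Given that observation, everything else — checking $d=n/4>2$, applying Stirling, composing the inequalities, and invoking Assumption~1 — is bookkeeping.
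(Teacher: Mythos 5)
Your proof is correct and follows essentially the same route as the paper: instantiate the chain of inequalities (\ref{eq:count_constant_of_dimension_1})--(\ref{eq:count_constant_of_dimension_3}) at $d=n/4$, pick the basis so that it contains $S_{2d}$ (forcing the factor $Q_{\degr{\mathcal B_{n/4}}{\leq n/2+1}}(S_{n/2})$ to equal $1$), apply Stirling to get $Q_{\mathcal B_{n/4}}(f_{n,n/4})=\Omega\bigl(2^n/(n^{5/2}\ln n)\bigr)$, and then feed this into Assumption~1 together with the hypothesis on $\mathcal C_{\mathcal B_{n/4}}$. You even spell out the constant-tracking the paper leaves implicit, and correctly flag that no size bound on $F_n$ is being claimed.
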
 \iddofix{check this ln in denominator}

Note that we get only an exponential lower bound in \(n\) for the lines of proofs in $\PC$ in the above consequence. But this does not entail an exponential lower bound in the size of  $\llbracket F_n= 0\rrbracket_{n/4}$ (the latter is polynomial in the size of the circuit \(F_{n}\), computing the s-polynomials. So this proposition is presented here in order to show that at least for some identities, the additional requirement (Assumption 3) on parameters, added to get a conditional  exponential lower bound, is attainable.

\subsection{A propositional version of Conjecture II}\label{sec:prop-vers}
We wish to comment on  the applicability of our suggested framework, for achieving propositional Extended Frege lower bounds.

It seems that the most natural way to connect the complexity, measure  \(Q_{\mathcal B}(\cd)\) to the number of
lines in an Extended Frege  (see, e.g., \cite{Kra95} or \cite{Jer04} for a formal definition of Extended Frege) proof is to require that the Main Open Problem states an \textit{even stronger} statement. Admittedly, this makes the new assumption, shown below,  quite speculative at the moment.

Given a commutative algebraic circuit $C$ over $GF(2)$, we can think of the circuit equation $C=0$ as a \emph{Boolean} circuit computing a tautology, instead of an algebraic circuit: interpreting $+$ as XOR, $\cd$ as $\land$, and $=$ as logical equivalence $\equiv$ (that is, $\leftrightarrow$).    Accordingly, we can consider arithmetic proofs over $GF(2)$ augmented with the Boolean axioms $x^2_i+x_i=0$, for each variables $x_i$, to obtain a propositional proof system which formally \textit{is} an Extended Frege proof system (see \cite{HT12}). Denote this system
$\PC(\F)+\{x_i^2+x_i=0\;:\;x_i\in X\}$.

Then, there is no clear reason to rule out the following:

\newtheorem*{prop-main-open}{Conjecture 1 for the propositional case over $GF(2)$}

\begin{prop-main-open} Let $\F=GF(2)$, let $d$ be a positive natural number and let $\mathcal B$ be a (finite) basis of the  identities of \matd. Assume that $f\in\freea$ is an identity of \matd, and let $F$ be a non-commutative algebraic circuit computing $f$. Then, the minimal number of lines in a $\PC(\F)+\{x_i^2+x_i\;=0:\;x_i\in X\}$ proof of the collection of  $d^2$ (entry-wise) equations $\llbracket F=0 \rrbracket_d$  corresponding to $F$ is lower bounded (up to a constant factor) by $Q_{\mathcal  B}(f)$. And in symbols:
\begin{equation}\label{eq:main-open-prob-prop}
\big|\vdash_{\PC(\F)+\{x_i^2+x_i=0\;:\;x_i\in X\}}\convert{F=0}\big| = \Omega(Q_{\mathcal B}(f)).
\end{equation}
\end{prop-main-open}
(Where, as before,
$\big|\vdash_{\PC(\F)+\{x_i^2+x_i\;=0:\;x_i\in X\}}\convert{F=0}\big|$ is the minimal size of a $\PC(\F)+\{x_i^2+x_i=0\;:\;x_i\in X\}$ proof containing all the
circuit-equations in $\convert{F=0}$.)

\medskip

\begin{comment} One can plausibly consider the same propositional version of the main open problem, with $\F$ being the rational numbers, and hence of characteristic $0$ (for we which we have more knowledge about $Q_{\mathcal B}(\cd)$, as obtained in our work). However, the way to translate arithmetic proofs \PC\ over the rationals is less immediate than the same translation for the case of $GF(2)$, and we have not verified formally the details of such a translation.
\end{comment}

\iddofix{\textbf{Some discussion:
}\\
Note that \PMd\ might be a \emph{strictly} decreasing hierarchy: because for $d=1$ we can always use only $O(n^2)$ number of commutativity
axioms, while we need more for bigger $d$'s. Nevertheless, why does this mean that there is a separation?
\\
Nevertheless, whether it is or not a proper hierarchy, we might ask: if the hierarchy is proper, why should we expect that the proof system \PC, which is the strongest in the hierarchy, would actually ``inherit'' the lower bound of \PMd\ for higher $d$'s?
\\
The answer we give is that it inherits the hardness of \PMd\ for higher $d$'s, because we use the translation $\llbracket \cd \rrbracket$; and the Main Open Problem states that this translation is linked to $Q_{\mathcal B}(\cd)$.
}

%
%
%

\subsection{Hierarchy of proofs for matrix identities }\label{sec:system-diff-algebras}
The proof system $\PC(\F)$ works for proving identities over commutative fields. Here we  formulate  a fragment of \PC(\F) that proves matrix \matd\ identities, for every given $d$. In what follows, $\F$ always denotes a field of characteristic $0$.

For any field \F\ (of characteristic 0), any $d\ge 1,$ and any basis \(\mathcal B\) of the identities of \matd, we define the following proof system \PMd, which is sound and complete for the identities of \matd\ (written as equations of non-commutative circuits): consider the proof systems  \PC(\F)\ and \textit{replace }the commutativity axiom \(h\cd g=g\cd h\) by  a finite basis \(\mathcal B\) of the identities of  \matd\  (namely, add a new axiom $H=0$ for each polynomial $h$ in the basis, where $H$ is a non-commutative algebraic circuit computing $h$). Additionally, add the axioms of  distributivity of product over addition from \textit{both} left and right (this is needed because we do not have anymore the commutativity axiom in our system).

Since, for $d>2$, the set of generators for the identities over \matd\ are still not well understood, we shall give an explicit formulation only of the system   $\PMtwo$, following the basis of identities of \mattwo\ found by  Drensky \cite{Dre81}.
\begin{definition}[The system $\PMtwo$: proofs of  identities over $\mattwo$]
$\PMtwo$ is the arithmetic proof system whose set of  axioms consists of the following equations (ranging over non-commutative arithmetic circuits):
\begin{align*}
&Addition\: commutativity: \qquad f+g=g+f\\
&Associativity:\qquad   f+(g+h)=(f+g)+f \qquad f\cdot(g\cdot h)=(f\cdot g)\cdot h\\
&Distributivity:\qquad f\cdot(g+h)=f\cdot g+f\cdot h\\
&~~~~~~~~~~~~~~~~~~~~\qquad (g+h)\cdot f= g\cd\ f +h\cdot f\\
&Zero \:element: \qquad f+0=f \qquad f\cdot 0=0\\
&Unit \:element:\qquad f\cdot 1 =f \\
&Genertators:  \qquad S_4(x,y,z,w)=0  \qquad  [[x,y]^2,z]=0\\
&Field \:identities: \qquad c=a+b \qquad d= a'\cdot b'\\
\text{where in } &\text{the Field identities $a,a',b,b',c,d\in \F$, such that the equations hold in $\F$.} \\
&Circuit \ axiom:\qquad F=F'
\text{~~~~if $F$ and $F'$ are (syntactically) identical when} \\
&~~~~~~~~~~~~~~~~~~~~~~~~~~~~~~~~~~~~~~~~~~\text{both are un-winded into \emph{formulas}.}
\end{align*}
\end{definition}

\bigskip

\iddofix{
\\
Open problems
\\
\textbf{Are there small circuits with high Q measure?}
We don't know the answer. Raz' paper might give a way to show that the s-polynomials do not have small circuits.
\\
\textbf{Can we prove that $\PC(\F)$ proof-size of $\llbracket f=0 \rrbracket$ is proportional
to $\PC(\F)$ proof-size of \(f=0\)?}
Think of the $f=s_d$. It seems that
for these formulas it is true that $\PC(\F)$ proof-size of $\llbracket f=0 \rrbracket$
is proportional to $\PC(\F)$ proof-size of \(f=0\). Check. \\ \textbf{Fu: }We checked it a bit; note that XYZW will turn with \(\llbracket \cd \rrbracket \) to a noncommutative polynomial  that respects the order of multiplication (e.g., wi always appear after $x_j$). But there might be some cancelations, so it might be possible that the statement does hold. But checking a bit we concluded that the statement is probably not true.
 \\
 If indeed this is true, then it will show that any identity of \matd\ has polynomial-size $\PC(\F)$ proofs because every \(f=0\) where \(f\) is a matrix identity can be proved efficiently in $\PC(\F)$ (apparently) by simulating Raz and Shpilka 2005 PIT for noncommutative formulas. But there's a problem here: $\PC(\F)$ proofs
 operate with circuits and not formulas and so we can't guarantee a small proof for circuits.
}

\iddofix{
Related computational problems:
\\
\textbf{Given a circuit $F$ such that $f:=\widehat F$ can we determine if $f$ is a matrix identity over \matd? }Note that even if we have a PIT for noncommutative \textit{circuits }(and not merely formulas) it does not solve the problem: because e.g., an $s_4$ identity is \textit{not} an identity as a noncommutative polynomial; and so it would not recognize it. That is, the identities of \mattwo\ have a lot of cancelations possible (almost like the commutative polynomials). In fact, if we have PIT for matrix polynomials of \emph{any} dimension $d\times d$, we can have a PIT for commutative polynomials: just take matrices of dimension 1. So we see that matrix identities are ``much more commutative'' than noncommutative polynomials. On the other hand, recall that we want to count only the number of commutativity axioms needed, and so our proofs might be shorter than the nondeterministic (i.e., witness) complexity of PIT for matrix identities.
\\
\textbf{What is the circuit-size of the hard family of the  s-polynomials?} (when \(d\) is a constant, and \(n\) is the number of variables tend to infinity):
\\
Write a sub-section of open questions:
~~~~~~\\~~~~A. \textbf{Is there a family of identities of small size and high Q measure.
}\\
\textbf{Comment:} Raz 2011 showed that from a high rank tensor one can construct a polynomial that requires big algebraic circuits. Although we have a connection between tensor-rank and the S-polynomials high Q measure we cannot show that ... we might speculate that there are...do we have small circ
\\
\textbf{Comment:}
For any d and any polynomial \(f\) of degree \(d\), we have \(Q(f)=O(n^{d-1})\), because we can write it as a sum of monomials, and arrange it as products
of \(x_1,x_2,...\)
}

\iddofix{Note:don't forget to include the notice that for a lower bound on $Q_G(f_i)$ to hold the family  $f_i$ is not a substitution instance of a finite sets of circuits.}

\section*{Acknowledgements}
We wish to thank V.~Arvind, Albert Atserias, Michael Forbes, Emil Je\v rabek, Kristoffer Arnsfelt Hansen, Jan Kraj\'{i}\v{c}ek, Satya Lokam, Periklis Papakonstantinou, Youming Qiao,  Ran Raz and Amir Shpilka for useful discussions related to this work. We are also greatly indebted to Vesselin Drensky for his help with the bibliography and providing us with his monograph.

\bibliographystyle{plain}
\bibliography{PrfCmplx-Bakoma}
\listoftodos

\end{document}